\tikzstyle{infinito}=[circle,inner sep=0pt,minimum size=0mm]
\tikzstyle{nodo}=[circle,draw,fill,inner sep=0pt,minimum size=2.5pt]
\tikzstyle{nodo2}=[circle,draw,fill,inner sep=0pt,minimum size=4pt]
\newtheorem{Theorem}{Theorem}[section]
\newtheorem{Lemma}[Theorem]{Lemma}
\newtheorem{Proposition}[Theorem]{Proposition}
\newtheorem{Remark}[Theorem]{Remark}
\newtheorem{Definition}[Theorem]{Definition}
\numberwithin{equation}{section}
\DeclareMathOperator{\diag}{diag}
\DeclareMathOperator{\Ran}{Ran}
\DeclareMathOperator{\Ker}{Ker}
\newcommand{\NA}{\mathbb{N}}
\newcommand{\CO}{\mathbb{C}}
\newcommand{\RE}{\mathbb{R}}
\newcommand{\ID}{\mathbb{I}}
\newcommand{\NULL}{\mathbb{O}}
\newcommand{\la}{\lambda}
\newcommand{\ve}{\varepsilon}
\newcommand{\CC}{\mathcal{C}}
\newcommand{\EE}{\mathcal{E}}
\newcommand{\GG}{\mathcal{G}}
\newcommand{\HH}{\mathcal{H}}
\newcommand{\OO}{\mathcal{O}}
\newcommand{\RR}{\mathcal{R}}
\newcommand{\VV}{\mathcal{V}}
\newcommand{\II}{{\bf 1}}
\renewcommand{\Im}{\operatorname{Im}}
\renewcommand{\Re}{\operatorname{Re}\,}
\newcommand{\OD}{{\bf \Theta}}
\newcommand{\KK}{K}
\newcommand{\Kcal}{\mathcal{K}}
\newcommand{\MM}{M}
\newcommand{\uch}{\underline{\hat c}}
\newcommand{\WHC}{\widehat{C}}
\newcommand{\WHH}{\widehat{H}}
\newcommand{\WHP}{\widehat{P}}
\newcommand{\WHR}{\widehat{R}}
\newcommand{\zero}{{\bf 0}}
\newcommand{\uc}{\underline{c}}
\newcommand{\BB}{\mathcal B}
\newcommand{\uv}{\underline{q}}
\newcommand{\uu}{\underline{p}}
\title[]{Scale Invariant Effective Hamiltonians for a Graph with a Small Compact Core}
\author[]{Claudio Cacciapuoti}
\address{Dipartimento di Scienza e Alta Tecnologia, Sezione di Matematica, Universit\`a dell'Insubria, Via Valleggio 11, 22100 Como, Italy, EU}
\email{claudio.cacciapuoti@uninsubria.it	}%
\thanks{The author is grateful to Gregory Berkolaiko and Andrea Posilicano for enlightening discussions. The author also thanks the anonymous referees for many useful comments that helped to improve the quality of the paper.}
\date{}
\begin{document}
\begin{abstract}We consider a compact metric graph  of size  $\varepsilon$, and attach to it several edges (\emph{leads}) of length of order one (or of infinite length).  As $\varepsilon$ goes to zero, the graph $\mathcal{G}^\varepsilon$ obtained in this way looks like the star-graph formed  by the leads joined in a central vertex. \\ 
On $\mathcal{G}^\varepsilon$ we define an Hamiltonian $H^\varepsilon$, properly  scaled with the parameter $\varepsilon$. We prove that there exists a scale invariant effective Hamiltonian on the star-graph that approximates $H^\varepsilon$ (in a suitable norm resolvent sense) as $\varepsilon\to0$. \\
The effective Hamiltonian depends on the spectral properties of an auxiliary $\varepsilon$-independent Hamiltonian defined on  the compact graph obtained by setting $\varepsilon = 1$. If zero is not an eigenvalue of the auxiliary Hamiltonian, in the limit $\varepsilon\to0$, the leads are decoupled. 
\end{abstract}

\maketitle

\begin{footnotesize}
 \emph{Keywords:} Metric graphs;  scaling limit;  Kre\u{\i}n formula; point interactions. 
 
 \emph{MSC 2010:} 	
 81Q35; 
 47A10; 
 34B45. 
 \end{footnotesize}

\section{Introduction}
%

One nice feature of quantum graphs (metric graphs equipped with differential operators) is that they are simple objects. In many cases, for example in the framework of the analysis of self-adjoint realizations of the Laplacian,  it is possible to write down  explicit formulae for the relevant quantities, such as the resolvent or the scattering matrix (see, e.g.,  \cite{kostrykin-schrader-jpa99} and \cite{kostrykin-schrader-cm06}). 

If the graph is too intricate though, it can be difficult, if not impossible, to perform exact computations. In such a situation, one may be interested in a simpler, effective model which captures only the most essential features of a complex quantum graph.

If several edges of the graph are much shorter then others, an effective model should rely on a simpler graph obtained by shrinking  the short edges into vertices. These new vertices should keep track of at least some of the spectral or scattering properties of the shrinking  edges, and perform as a black box approximation for a small, possibly  intricate, network. 

Our goal is to understand under what circumstances this type of effective models can be implemented. In this report we give some preliminary results showing that under certain assumptions such approximation is possible. 

To fix the ideas, consider a compact metric graph $\GG^{in,\ve}$ of size (total length) $\ve$, and attach to it several edges of length of order one (or of infinite length), the \emph{leads}. Clearly, when $\ve$ goes to zero, the graph obtained in this way (let us denote it by $\GG^\ve$) looks like the star-graph formed  by the leads joined in a central vertex. Let us denote by $\GG^{out}$ such star-graph and by  $v_0$ the central vertex. 

Given a certain Hamiltonian (self-adjoint Schr\"odinger operator) $H^\ve$ on $\GG^\ve$, we want to show that there exists an Hamiltonian $H^{out}$ on $\GG^{out}$ such  that, for small $\ve$, $H^{out}$ approximates (in a sense to be specified) $H^\ve$. Of course,  one main issue is to understand what boundary conditions in the vertex $v_0$ characterize the domain of $H^{out}$. 

It turns out that, under several technical assumptions, the boundary conditions in $v_0$ are fully determined by the spectral properties of an auxiliary, $\ve$-independent  Hamiltonian defined on the graph $\GG^{in} = \GG^{in,\ve = 1}$. 

Below we briefly discuss these technical assumptions, and refer to Section \ref{s:2} for the details.
\begin{itemize}
\item[(i)] The Hamiltonian $H^\ve$ on  $\GG^\ve$ is  a self-adjoint realization of the operator $-\Delta + B^\ve$ on $\GG^\ve$, where $B^\ve$  is a potential  term. 
\item[(ii)] To set up the graph $\GG^\ve$ we select $N$ distinct vertices in $\GG^{in,\ve}$ (we call them \emph{connecting vertices}) and attach to each of them one lead, which is either a finite or an infinite length edge. The domain of $H^\ve$ is characterized by \emph{Kirchhoff} (also called \emph{standard} or \emph{free})  boundary conditions at the connecting vertices, i.e., in each connecting  vertex  functions are continuous and the sum of the outgoing derivatives equals zero. 
\item[(iii)] (Scale Invariance) The small (or \emph{inner}) part  of the graph scales uniformly in $\ve$, i.e., $\GG^{in,\ve} = \ve \GG^{in}$. The Hamiltonian $H^\ve$ has a specific scaling property with respect to the parameter $\ve$: loosely speaking, up to a multiplicative factor,  the ``restriction'' of $H^\ve$ to $\GG^{in,\ve}$ is unitarily equivalent to an $\ve$-independent  operator  on $\GG^{in}$.  The scale invariance property can be made precise by reasoning in terms of Hamiltonians on  the inner graph $\GG^{in,\ve}$. This is done in  Section 4 below. Here we just mention that this assumption forces the scaling on  the $in$ component of the potential $B^{in,\ve}(x) = \ve^{-2} B^{in} (x/\ve)$, $x\in\GG^{in,\ve}$, and, in the vertices of $\GG^{in,\ve}$,  the Robin-type vertex conditions (if any) also scale with $\ve$ accordingly.  
\item[(iv)] The  ``restriction'' of $H^\ve$ to the leads does not depend on $\ve$. In particular, $B^{out}$,  the $out$ component of the potential,  does not depend on $\ve$. 
\end{itemize}

We prove that it is always possible to  identify an Hamiltonian $H^{out}$ on $\GG^{out}$ that approximates the Hamiltonian $H^\ve$. The Hamiltonian $H^{out}$ is  a self-adjoint realization of the operator $-\Delta + B^{out}$ on $\GG^{out}$, and it is characterized by scale invariant vertex  conditions in $v_0$, i.e., vertex conditions with no Robin part (see  \cite[Sec. 1.4.2]{berkolaiko-kuchment13}); in our notation, scale invariant means  $\Theta_v=0$ in Eq. \eqref{H-dom}. The precise form of the possible effective Hamiltonians  is given in  Def.s \ref{d:mathringHout} and \ref{d:effect} below. 

The convergence of $H^\ve$ to $H^{out}$ is understood in the following sense. We look at the resolvent operator $R_z^\ve := (H^\ve - z)^{-1}$, $z\in\CO\backslash\RE$, as an operator in the Hilbert space $L^2(\GG^\ve) = L^2(\GG^{out})\oplus L^2(\GG^{in,\ve})$. In the limit $\ve \to 0$, the bounded operator $R_z^\ve$ converges to an operator which is diagonal in the decomposition $L^2(\GG^{out})\oplus L^2(\GG^{in,\ve})$. The $out/out$ component of the limiting operator is the resolvent of a self-adjoint operator in $L^2(\GG^{out})$,  which we identify as the effective Hamiltonian on the star-graph. 

Additionally,  we characterize the limiting boundary conditions in the vertex $v_0$ in terms of the spectral properties of an auxiliary Hamiltonian on the (compact) graph $\GG^{in}= \GG^{in,\ve=1}$. We distinguish two mutually exclusive cases: in one case (that we call \emph{Generic}) zero is not an eigenvalue of the auxiliary Hamiltonian; in the other case (we call it \emph{Non-Generic}) zero is an eigenvalue of the auxiliary Hamiltonian. 

In the Generic Case the effective Hamiltonian, denoted by $\mathring H^{out}$,  is characterized by \emph{Dirichlet} (also called \emph{decoupling}) boundary conditions in the vertex $v_0$, i.e., functions in its domain are zero in $v_0$, see Def. \ref{d:mathringHout}. From the point of view of applications this is the less interesting case, since the  leads are decoupled  (no transmission through $v_0$ is possible). 

In the Non-Generic Case the situation is more involved. If zero is an eigenvalue of the auxiliary Hamiltonian one can identify a corresponding set of orthonormal eigenfunctions (in general eigenvalues can have multiplicity larger than one, included the zero eigenvalue).  In the domain of the effective Hamiltonian $\WHH^{out}$, the boundary conditions in $v_0$ are associated to the values of these eigenfunctions in the connecting vertices, see Def. \ref{d:effect}. In this case, the boundary conditions in the vertex $v_0$ are scale invariant but, in general, not of decoupling type. For example, if the multiplicity of the zero eigenvalue is one, and the corresponding eigenfunction assumes the same value in all the connecting vertices, the boundary conditions are of Kirchhoff type. 

The proof of the convergence is based on a Kre{\u\i}n-type formula  for the resolvent $R_z^\ve$. This formula allows us to write $R_z^\ve$ as a block matrix operator in the decomposition $L^2(\GG^\ve) = L^2(\GG^{out})\oplus L^2(\GG^{in,\ve})$ (see Eq. \eqref{Rve_main}). In the formula,  the first term, $\mathring R_z^\ve$,  is block diagonal and contains the resolvents of  $\mathring H^{out}$ and  $\mathring H^{in,\ve}$ (a scaled down  version of the Auxiliary Hamiltonian, see Section \ref{ss:auxham}); the second term is non-trivial, and couples the $out$ and $in$ components to reconstruct the resolvent of the full Hamiltonian $H^\ve$.  As $\ve$ goes to zero, the off-diagonal components in  $R_z^\ve$ converge to zero, hence,  the $out$ and $in$ components are always decoupled in the limit. A careful analysis of the non-trivial term in formula \eqref{Rve_main} shows that it converges to zero in the Generic Case. In the Non-Generic Case, instead, the  $out/out$ component of the non-trivial term converges to a finite operator, and the whole   $out/out$ component  of  $R_z^\ve$ reconstructs the resolvent of the effective Hamiltonian $\WHH_0$. 

The limiting behavior of $H^\ve$ is essentially determined by the small $\ve$ asymptotics of the spectrum of the inner Hamiltonian $\mathring H^{in,\ve}$. The scale invariance assumption implies that the eigenvalues of  $\mathring H^{in,\ve}$ are given by $\la_n^\ve = \la_n/\ve^2$, where $\la_n$ are the eigenvalues of the (scaled up) auxiliary Hamiltonian $\mathring H^{in}$. Obviously, all the non-zero eigenvalues move to infinity as $\ve \to 0$; the zero eigenvalue instead, if it exists, persists, for this reason it plays a special r\^ole in the analysis. 

Closely related to our work is the paper by G. Berkolaiko, Y.  Latushkin, and S. Sukhtaiev  \cite{berkolaiko-latushkin-sukhtaiev:rxv18}, to which we refer also for additional references. In \cite{berkolaiko-latushkin-sukhtaiev:rxv18} the authors analyze the convergence of Schr\"odinger operators on metric graphs with shrinking edges. Our setting is similar to the one in \cite{berkolaiko-latushkin-sukhtaiev:rxv18} with several differences. In \cite{berkolaiko-latushkin-sukhtaiev:rxv18} there are no restrictions on the topology of the graph, i.e., $\GG^{out}$ is not necessarily a star-graph; outer edges can form loops,  be connected among them or to arbitrarily intricate finite length graphs. In \cite{berkolaiko-latushkin-sukhtaiev:rxv18}, moreover,  the scale invariance assumption is missing. With respect to our work, however, the potential terms in \cite{berkolaiko-latushkin-sukhtaiev:rxv18} do not play an essential r\^ole in the limiting problem (because they are  uniformly bounded in the scaling parameter).

As it was done in \cite{berkolaiko-latushkin-sukhtaiev:rxv18}, to analyze the convergence of $H^\ve$ to $H^{out}$, since they are operators on different Hilbert spaces, one could use the notion of $\delta^\ve$-quasi unitary equivalence (or generalized norm resolvent convergence) introduced by  P. Exner and O. Post in the series of works \cite{exner-post:05,exner-post:aip08,exner-post:09,post:06,post:book}. In Th.s \ref{t:mainG} and \ref{t:mainNG} we state our main results in terms of the expansion of the  resolvent in the decomposition $L^2(\GG^\ve) = L^2(\GG^{out})\oplus L^2(\GG^{in,\ve})$; and comment on the $\delta^\ve$-quasi unitary equivalence of the operators $H^\ve$ and $\mathring H^{out}$ (or $\WHH^{out}$) in Rem. \ref{r:EP}.  

Our analysis, with the scaling on the potential $B^{in,\ve}(x) = \ve^{-2} B^{in} (x/\ve)$,  is also related to the problem of approximating point-interactions on the real line through scaled potentials in the presence of a zero energy resonance, see, e.g., \cite{golovaty-hryniv:09}. The same type of scaling  arises naturally  also in the study of the convergence of Schr\"odinger operators in thin waveguides to operators on graphs, see, e.g., \cite{albeverio-cacciapuoti-finco:07, cacciapuoti:17, cacciapuoti-exner:07, cacciapuoti-finco-aa10}. 

Problems on graphs with a small compact core have been studied in several papers in the case in which $\GG^\ve$ is itself a star-graph, see, e.g.,   \cite{exner-manko13,exner-manko:14,manko:12,manko:14, manko:15}. In particular, in the latter  series of works, the authors point out  the r\^ole of the zero  energy eigenvalue.

Also related to our work is the problem of the approximation of  vertex conditions through ``physical Hamiltonians''. In \cite{cheon-exner-turek:10} (see also references therein), it is shown that all the possible  self-adjoint boundary conditions at the central  vertex of a star-graph, can be obtained as the  limit of Hamiltonians with $\delta$-interactions and magnetic field  terms  on a graph with a shrinking  inner part. 

Instead of looking at the convergence of the resolvent, a different approach consists in the analysis of the  time dependent problem. This is done, e.g., in \cite{mehmeti-ammari-nicaise:17}, for a tadpole-graph as the circle shrinks to a point. 

The paper is structured as follows. In Section \ref{s:2} we introduce some notation, our assumptions  and present the main results, see Th.s \ref{t:mainG} and \ref{t:mainNG}. In Section  \ref{s:3} we discuss the  Kre\u{\i}n  formulae for the resolvents of  $H^\ve$ and $\WHH^{out}$ (the limiting Hamiltonian in the Non-Generic Case). These formulae are the main tools in our analysis. In Section \ref{s:4} we discuss the scale invariance properties of the auxiliary Hamiltonian, and other relevant operators. In Section \ref{s:5} we prove Th.s \ref{t:mainG} and \ref{t:mainNG}. In doing so we present the results with a finer estimate of the remainder, see Th.s \ref{t:mainG2} and \ref{t:mainNG2}. We conclude the paper with two  appendices: in Appendix \ref{a:krf} we briefly discuss the proofs of the Kre\u{\i}n resolvent formulae from Section \ref{s:3}; in Appendix \ref{s:appB} we prove some useful bounds on the eigenvalues and eigenfunctions of $\mathring H^{in}$. 

\subsection*{Index of notation} For the convenience of the reader we recall here the notation for  the Hamiltonians used  in our analysis. For the definitions we refer to Section \ref{s:2} below. 
\begin{itemize}
\item $H^\ve$ full Hamiltonian.
\item $\mathring H^{in}$ auxiliary Hamiltonian
\item  $\mathring H^{in,\ve}$ scaled down auxiliary Hamiltonian (see Definition \ref{d:Hinve0} and Section \ref{s:4}).  $\mathring H^{in} = \mathring H^{in,\ve =1}$.
\item $\mathring H^{out}$ effective Hamiltonian in the Generic Case.
\item $\widehat H^{out}$ effective Hamiltonian in the Non-Generic Case. 
\item $ \mathring H^\ve$ diagonal Hamiltonian $ \mathring H^\ve =\diag(\mathring H^{out}, \mathring H^{in,\ve})$ in the decomposition $L^2(\GG^\ve) = L^2(\GG^{out})\oplus L^2(\GG^{in,\ve})$ (see Section \ref{s:3}).
\end{itemize} 

\section{\label{s:2}Preliminaries and main result}
For a general introduction to metric graphs we refer to the monograph \cite{berkolaiko-kuchment13}. Here, for the convenience of the reader, we introduce some notation and recall few basic notions that will be used throughout the paper. 

\subsection{\label{ss:general}Basic  notions and notation}
To fix the ideas we start by selecting a collection of points, the \emph{vertices} of the graph, and a connection rule among them. The bonds joining the vertices  are associated to oriented segments  and are the finite-length  \emph{edges} of the graph. Other edges can be of infinite length, these edges are connected only to one vertex and are associated to half-lines.  In this way we obtained a metric graph, see, e.g.,   Fig. \ref{f:1}. 
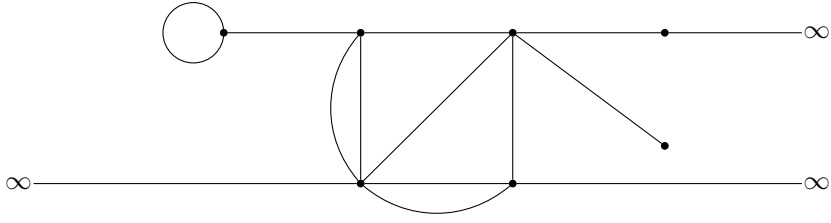
\begin{figure}[h!]
\begin{center}
\begin{tikzpicture}
\node at (-6.5,0) [infinito](0) {${\infty}$};
\node at (-2,0) [nodo] (5) {};
\node at (0,0) [nodo] (4) {};
\node at (-2,2) [nodo] (2) {};
\node at (0,2) [nodo] (3) {};
\node at (4,0) [infinito] (8) {${\infty}$};
\node at (4,2) [infinito] (7) {${\infty}$};
\node at (2,2) [nodo] (6) {};
\node at (-3.8,2) [nodo] (1) {};
\draw [-,black] (-4.2,2) circle (0.4cm) ;
\draw [-,black] (5) -- (2);
\draw [-,black] (0) -- (5);
\draw [-,black] (5) -- (4);
\draw [-,black] (5) to [out=-40,in=-140] (4);
\draw [-,black] (5) to [out=130,in=-130] (2);
\draw [-,black] (5) -- (3);
\draw [-,black] (1) -- (2);
\draw [-,black] (2) -- (3);
\draw [-,black] (3) -- (6);
\draw [-,black] (4) -- (3);
\draw [-,black] (6) -- (7);
\draw [-,black] (4) -- (8);
\node at (2,0.5) [nodo] (10) {};
\draw [-,black] (3) -- (10);
\end{tikzpicture}
\caption{A metric graph with  7 vertices (marked by dots) and  14 edges (3 of which are half-lines).}
\label{f:1}
\end{center}
\end{figure}
Given a metric graph $\GG$ we denote by $\EE$ the set of its edges and by $\VV$ the set of its vertices. We shall also use the notation $|\EE|$ and $|\VV|$ to denote the cardinality of $\EE$ and $\VV$ respectively. We shall always assume that both $|\EE|$ and $|\VV|$ are finite. 

For any $e\in\EE$, we identify the corresponding edge with  the segment $[0,\ell_e]$ if $e$ has finite length $\ell_e>0$, or with $[0,+\infty)$  if $e$ has infinite length. 

Given  a function $\psi:\GG \to \CO$, for $e\in\EE$,  $\psi_e$ denotes its  restriction to  the  edge $e$. With this notation in mind one can define the Hilbert space 
\[\HH:= \bigoplus_{e\in\EE} L^2(e),\]
with   scalar product and  norm  given by 
\[
(\phi,\psi)_\HH := \sum_{e\in\EE} (\phi_e,\psi_e)_{L^2(e)} \qquad \text{and} \qquad 
\|\psi\|_\HH := \left(\psi,\psi\right)_\HH^{1/2}. 
\]
In a similar way one can define the  Sobolev space $\HH_2:= \bigoplus_{e\in\EE} H^2(e)$,  equipped with the norm 
\[
 \|\psi\|_{\HH_2 } := \left(\sum_{e\in\EE} \|\psi_e\|_{H^2(e)}^2\right)^{1/2}. 
\]
Note that functions in $\HH_2$ are continuous in the edges of the graph but  do not need to be continuous in the vertices. 

For any vertex $v\in\VV$ we denote by $d(v)$ the \emph{degree} of the vertex, this is the  number of edges having one endpoint identified by $v$, counting  twice the edges that have both endpoints coinciding with $v$ (\emph{loops}). Let $\EE_v\subseteq\EE$ be the set of edges which are incident to the vertex $v$. For any vertex $v$ we order the edges in $\EE_v$ in an arbitrary way,  counting twice the loops.  In this way, for an arbitrary function $\psi\in \HH_2$, one can define  the vector   $\Psi(v)\in \CO^{d(v)}$ associated to the evaluation of $\psi$ in $v$, i.e., the components of $\Psi(v)$ are given by $\psi_e(0)$ or  $\psi_e(\ell_e)$, $e\in\EE_v$,  depending whether $v$ is the initial or terminal vertex of the edge $e$,  or by both values if $e$ is a loop. \\
In a similar way one can define the vector $\Psi'(v)\in \CO^{d(v)}$ with components   $\psi_e'(0)$ and   $-\psi_e'(\ell_e)$, $e\in\EE_v$. Note that in the definition of $\Psi'(v)$, $\psi_e'$ denotes the derivative of $\psi_e(x)$ with respect to $x$, and the derivative in $v$ is always taken in the outgoing direction with respect to the vertex.

We are interested in defining self-adjoint operators in $\HH$ which coincide with the Laplacian, possibly plus a potential term.

We denote by $B$ the potential term in the operator, so that $B:\GG\to \RE$ is a  real-valued  function on the  graph; and denote by $B_e$ its restriction to the edge $e$. Additionally we assume that $B$ is bounded and compactly supported on $\GG$. 

For every vertex $v\in\VV$ we define a projection $P_v: \CO^{d(v)}\to \CO^{d(v)}$ and a self-adjoint operator $\Theta_v$ in $\Ran P_v$,  both $P_v$ and $\Theta_v$ can be identified with Hermitian $d(v) \times d(v)$ matrices. 

It is well known, see, e.g., \cite{berkolaiko-kuchment13} and \cite[Example 5.2]{pos-om08}, that the operator $H_{P,\Theta}$ defined by:
\begin{equation}\label{H-dom}
D(H_{P,\Theta}):= \left\{ \psi\in \HH_2 | \, P_v^\perp \Psi(v) = 0\,;\; P_v \Psi'(v) - \Theta_v P_v \Psi(v) = 0 \quad \forall v\in \VV  \right\}
\end{equation}
\begin{equation}\label{H-act}
(H_{P,\Theta} \psi)_e := -\psi_e'' + B_e \psi_e \qquad \forall e\in \EE 
\end{equation}
is self-adjoint. Instead of Eq.  \eqref{H-act},  we shall write  
\begin{equation}\label{H-act2}
H_{P,\Theta} \psi  := -\psi'' + B \psi ,
\end{equation}
to be understood componentwise. 

We remark that for every $P_v$ and $\Theta_v$ as above, $H_{P,\Theta}$ is a self-adjoint extension of the symmetric operator $H_{min}$
\[ 
D(H_{min}):=\left\{ \psi\in\HH_2 | \,  \Psi(v) = 0\,;\; \Psi'(v) = 0 \quad \forall v\in \VV  \right\} \qquad H_{min}\psi := -\psi'' + B \psi. 
\] 

\subsection{Graphs with a small compact core}

We consider a graph $\GG^\ve$ obtained by attaching several  edges to a small compact core (a compact metric graph  of size $\ve$). 

We denote  the compact core of the graph by $\GG^{in,\ve}$. The graph $\GG^{in,\ve}$ is obtained by shrinking  a  compact graph $\GG^{in}$ by means of  a parameter $0<\ve<1$, more precisely, we set 
\begin{equation}\label{2.3a}
\GG^{in,\ve}= \ve \GG^{in}.
\end{equation}

We denote by $\EE^{in}$  the set of edges of the graph $\GG^{in}$ and by  $\EE^{in,\ve}$ the set of edges of the graph $\GG^{in,\ve}$.

In the graph $\GG^{in}$ (or, equivalently, in $\GG^{in,\ve}$) we select  $N$ distinct vertices that we label with $v_1, . . . , v_N$, and refer to them as  \emph{connecting vertices}. We shall denote by $\CC$ the set of connecting vertices. We denote by $\VV^{in}$ the set of all the remaining vertices, and call the elements of $\VV^{in}$ \emph{inner vertices} (note that the set $\VV^{in}$ may be empty).  


%

To construct the graph $\GG^\ve$, we attach to each connecting vertex one  additional edge which can be an half-line or an edge of finite length (not dependent on $\ve$). We shall call these additional edges \emph{outer edges} and denote by $\EE^{out}$  the corresponding set of edges; obviously $|\EE^{out}|= N$.  When needed, we shall  denote these edges by $e_1,...,e_N$,  so that the edge $e_j$ is connected to the vertex $v_j$, $j=1,...,N$. Moreover we shall use the notation 
\[\psi_{e_j}\equiv \psi_j\qquad e_j\in\EE^{out},\;j=1,...,N.\]
\\ 
Note that if $e\in\EE^{out}$ is of finite length the endpoint which does not coincide with the connecting vertex is of degree one (all the finite length outer edges are \emph{pendants}). \\
We shall always assume, without loss of generality, that for each edge in $\EE^{out}$ the connecting vertex is identified by $x = 0$. 

We denote by $\EE^{\ve}$ and $\VV$ the sets of edges and vertices of the graph $\GG^\ve$. We note that $\EE^{\ve} = \EE^{out}\cup\EE^{in,\ve}$ and $\VV = \VV^{out} \cup \CC \cup \VV^{in}$, where $\VV^{out}$ is the set of vertices in $\GG^\ve$ which are neither connecting nor inner vertices. 
\begin{Remark}\label{r:degree}
For any $v\in\CC$ we denote by $d^{in}(v)$ its degree as a vertex of the graph $\GG^{in,\ve}$, so that its degree as a vertex of the graph $\GG^\ve$ is $d(v) = d^{in}(v)+1$. 
\end{Remark}
As $\ve\to 0$, the inner graph shrinks to one point, in the limit all the connecting vertices merge in  one vertex which we identify with the point $x_j=0$, $x_j$ being the coordinate along the edge $e_j\in \EE^{out}$, $j=1,\dots,N$.  In the limit the graph $\GG^\ve$ looks like a star-graph with $N$ edges connected in the origin, see Fig. \ref{f:2}; we  denote the star-graph  by $\GG^{out}$.
\begin{figure}[h!]
\begin{center}
\begin{tikzpicture}
\node at (-6.5,0) [infinito](0) {${\infty}$};
\node at (-2,0) [nodo2] (5) {};
\node at (0,0) [nodo2] (4) {};
\node at (-2,2) [nodo] (2) {};
\node at (0,2) [nodo2] (3) {};
\node at (4,0) [infinito] (8) {${\infty}$};
\node at (4,2) [infinito] (7) {${\infty}$};
\node at (2,2) [nodo2] (6) {};
\node at (-3.8,2) [nodo] (1) {};
\draw [-,black,dashed] (-4.2,2) circle (0.4cm) ;
\draw [-,black,dashed] (5) -- (2);
\draw [-,black] (0) -- (5);
\draw [-,black,dashed] (5) -- (4);
\draw [-,black,dashed] (5) to [out=-40,in=-140] (4);
\draw [-,black,dashed] (5) to [out=130,in=-130] (2);
\draw [-,black,dashed] (5) -- (3);
\draw [-,black,dashed] (1) -- (2);
\draw [-,black,dashed] (2) -- (3);
\draw [-,black,dashed] (3) -- (6);
\draw [-,black,dashed] (4) -- (3);
\draw [-,black] (6) -- (7);
\draw [-,black] (4) -- (8);
\node at (2,0.5) [nodo] (10) {};
\draw [-,black] (3) -- (10);
\end{tikzpicture}
\caption{The dashed lines represent the edges of $\GG^{in,\ve}$, the large dots the connecting vertices. The graph $\GG^{out}$ is obtained by merging  the connecting vertices. In the example in the picture, $\GG^{out}$ has three infinite edges and  one edge of finite length. }
\label{f:2}
\end{center}
\end{figure}
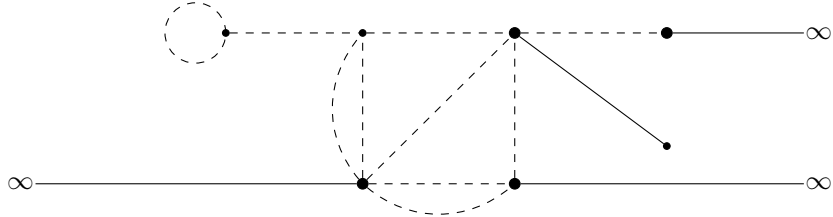

We define the Hilbert spaces:
\[\HH^\ve:= \bigoplus_{e\in\EE^\ve} L^2(e), \qquad  \HH^{out}:= \bigoplus_{e\in\EE^{out}} L^2(e) ,\qquad  \HH^{in,\ve}:= \bigoplus_{e\in\EE^{in,\ve}} L^2(e).\]
We remark that one can always think of $\HH^{\ve} $ as the direct sum 
\begin{equation}\label{outin}
\HH^{\ve} = \HH^{out}\oplus \HH^{in, \ve},
\end{equation}
and decompose each function  $\psi\in \HH^{\ve} $ as $\psi = (\psi^{out},\psi^{in})$ with $\psi^{out} \in \HH^{out}$ and $\psi^{in}\in \HH^{in,\ve}$. When no misunderstanding is possible, we omit the dependence on $\ve$, moreover we simply write $\psi$, instead of $\psi^{out}$ or $\psi^{in}$.

In a similar way we introduce the Sobolev spaces 
\[\HH_2^\ve:= \bigoplus_{e\in\EE^\ve} H^2(e), \qquad  \HH_2^{out}:= \bigoplus_{e\in\EE^{out}} H^2(e) ,\qquad  \HH_2^{in,\ve}:= \bigoplus_{e\in\EE^{in,\ve}} H^2(e).\]

\subsection{Full Hamiltonian}
Next we define an Hamiltonian $H^\ve$ in $\HH^\ve$ (of the form given in Eq.s. \eqref{H-dom} - \eqref{H-act2}); this is the object of our investigation. 

\begin{itemize}
\item Recall that if $v\in\VV^{out}$, then $d(v) =1$.  For any $v\in\VV^{out}$ we fix an orthogonal projection $P^{out}_v:\CO \to \CO$, and a self-adjoint operator $\Theta^{out}_v$ in $\Ran(P^{out}_v)$. Since vertices in $\VV^{out}$ have degree one,   $P^{out}_v$ is either $1$ or $0$; whenever  $P^{out}_v =1 $ it makes sense to define $\Theta^{out}_v$ which turns out to be the operator acting as the multiplication by a real  constant. In other words, the boundary conditions in $v\in\VV^{out}$ (of the form given in the definition of $D(H_{P,\Theta})$) can be of Dirichlet type, $\psi_e (v) = 0$; of Neumann type $\psi_e'(v)=0$; or of Robin type $\psi'_e(v) = \alpha \psi_e(v)$ with $\alpha\in\RE$.\\
It would be possible to consider a more general setting in which the outer graph has a non trivial topology, in same spirit of the work \cite{berkolaiko-latushkin-sukhtaiev:rxv18}, but we will not pursue this goal. 
\item For any $v\in \CC$ we define the orthogonal projection (see Rem. \ref{r:degree} for the definition of $d(v)$): 
\[\KK_v : \CO^{d(v)}\to \CO^{d(v)},\qquad \KK_v :=  \II_{d(v)} \left(   \II_{d(v)}, \ \cdot \ \right)_{\CO^{d(v)}} \qquad \forall v\in\CC, \] 
where $\II_{d(v)}$ denotes the vector (of unit norm) in $\CO^{d(v)}$ defined by $\II_{d(v)} = (d(v))^{-1/2}(1,...,1)$. In a similar way, we define the orthogonal projection 
\[\KK_v^{in} : \CO^{d^{in}(v)}\to \CO^{d^{in}(v)},\qquad \KK_v^{in} :=  \II_{d^{in}(v)} \left(   \II_{d^{in}(v)}, \ \cdot \ \right)_{\CO^{d^{in}(v)}} \qquad \forall v\in\CC, \]
where $\II_{d^{in}(v)}\in\CO^{d^{in}(v)}$ is defined by $\II_{d^{in}(v)} = (d^{in}(v))^{-1/2}(1,...,1)$. Both $\KK_v$ and $\KK_v^{in}$ have one-dimensional range given by the span of the vectors $\II_{d(v)}$ and $\II_{d^{in}(v)}$ respectively. \\ 
A function $\psi$ satisfies Kirchhoff conditions in the vertex $v$ (it is continuous in $v$ and the sum of the outgoing derivatives in $v$ equals zero) if and only if  $\KK_v^\perp \Psi(v) = 0$ and $  \KK_v \Psi'(v)=0$.
\item For any $v\in\VV^{in}$ we fix an orthogonal projection $P^{in}_v:\CO^{d(v)} \to \CO^{d(v)}$, and a self-adjoint operator $\Theta^{in,\ve}_v$ in $\Ran(P^{in}_v)$. 
\item We fix an $\ve$-dependent real-valued function $B^\ve : \GG^\ve \to \RE$, such that in the  $out/in$ decomposition \eqref{outin} one has  $B^\ve = (B^{out},B^{in,\ve})$. With $B^{out}:\GG^{out} \to \RE$  bounded and compactly supported. 
\item (Scale Invariance) Recall that $\GG^{in,\ve}= \ve \GG^{in}$, see Eq. \eqref{2.3a}.  We assume additionally:  that $B^{in,\ve}(x) = \ve^{-2}B^{in}(x/\ve)$, where $B^{in}:\GG^{in} \to \RE$ is bounded; and  that $\Theta^{in,\ve}_v = \ve^{-1}\Theta^{in}_v$, for all $v\in\VV^{in}$. For a discussion on the meaning and the main consequences of these assumptions we refer to Section \ref{s:4}. 
\end{itemize}

\begin{Definition}[Hamiltonian $H^\ve$]
We denote by $H^\ve$ the self-adjoint operator in $\HH^\ve$  defined by 
\begin{equation*}\label{Hve-dom}
\begin{aligned}
D(H^\ve):= \big\{ \psi\in \HH_2^\ve  | \,& {P^{in}_v}^\perp \Psi(v) = 0\,,\; P^{in}_v \Psi'(v) - \Theta^{in,\ve}_vP^{in}_v \Psi(v) = 0 \quad \forall v\in \VV^{in} ;\\
								 &{P^{out}_v}^\perp \Psi(v) = 0\,,\; P^{out}_v \Psi'(v) - \Theta^{out}_vP^{out}_v \Psi(v) = 0 \quad \forall v\in \VV^{out}; \\
								  &\KK_v^\perp \Psi(v) = 0\,,\; \KK_v \Psi'(v)  = 0 \quad \forall v\in \CC\big\}
\end{aligned}
\end{equation*}
\begin{equation*}\label{Hve-act}
H^\ve \psi := -\psi'' + B^\ve  \psi \qquad \forall \psi \in  D(H^\ve). 
\end{equation*}
\end{Definition}

\begin{Remark}In the $out/in$ decomposition  one has  
\[
(H^\ve\psi)^{out} = -{\psi^{out}}'' + B^{out} \psi^{out} 
\]
\[
(H^\ve\psi)^{in} = -{\psi^{in}}'' + B^{in,\ve} \psi^{in}.
\]
Note that the action of the outer component of $H^\ve$ does not depend on $\ve$. 
\end{Remark}

\begin{Remark}\label{r:Kv}
By the definition of $K_v$, in each connecting vertex  boundary conditions in $D(H^\ve)$ are of Kirchhoff-type:  the function $\psi$ is continuous in $v\in\CC$ and 
\[\sum_{e\sim v} \psi'_e(v) = 0 \qquad  v\in \CC, \]
where the sum is taken on all the edges incident on $v$ (counting loops twice) and the derivative is understood in the outgoing direction from the vertex. 
\end{Remark}

\subsection{Auxiliary Hamiltonian\label{ss:auxham}}
We are interested in the limit of the operator $H^\ve$ as $\ve\to0$. We shall see that the limiting properties of $H^\ve$ are strongly  related to spectral properties of the Hamiltonian $\mathring H^{in,\ve}$:
\begin{Definition}[Auxiliary Hamiltonian, scaled down version]\label{d:Hinve0}
  \begin{equation}\label{DHinve}
\begin{aligned}
D(\mathring H^{in,\ve}):= \big\{ \psi\in\HH_2^{in,\ve}| \,&  {P^{in}_v}^\perp \Psi(v) = 0\,,\; P^{in}_v \Psi'(v) - \Theta^{in,\ve}_vP^{in}_v \Psi(v) = 0 \quad \forall v\in \VV^{in} ;\\
                                                                         &{ \KK^{in}_v}^\perp \Psi(v) = 0\,,\; \KK^{in}_v \Psi'(v)  = 0 \quad \forall v\in \CC\big\}
\end{aligned}
\end{equation}
\begin{equation*}\label{Hinve}
\mathring{H}^{in,\ve} \psi := -\psi'' + B^{in,\ve}  \psi \qquad \forall \psi \in D(\mathring H^{in,\ve}). 
\end{equation*}
\end{Definition}
Let $\HH^{in} = \HH^{in,\ve = 1}$, and define the unitary scaling group 
\[
U^{in,\ve} : \HH^{in} \to \HH^{in,\ve}\;,\qquad (U^{in,\ve} \psi^{in})(x) := \ve^{-1/2} \psi^{in}(x/\ve);
\]
its inverse is 
\[
{U^{in,\ve}}^{-1} : \HH^{in,\ve} \to \HH^{in}\;,\qquad ({U^{in,\ve}}^{-1} \psi^{in})(x) = \ve^{1/2} \psi^{in}(\ve x).
\]

By the scaling properties $\Theta^{in,\ve}_v = \ve^{-1}\Theta^{in}_v$ and $B^{in,\ve}(x/\ve) = \ve^{-2}B^{in}(x)$, one infers the unitary relation
\begin{equation}\label{mathringHinve}
\mathring H^{in,\ve} = \ve^{-2}  {U^{in,\ve}}\mathring H^{in}{U^{in,\ve}}^{-1} 
\end{equation}
with $\mathring H^{in}$ defined on $\HH^\ve$ and given by $\mathring H^{in} = \mathring H^{in,\ve=1}$.     One consequence of Eq. \eqref{mathringHinve} is that the spectrum of $\mathring H^{in,\ve}$ is related to the spectrum of $\mathring H^{in}$ by the relation $\sigma(\mathring H^{in,\ve}) =\ve^{-2} \sigma(\mathring H^{in})$ (see Section \ref{s:4} for more comments on the implications of the scale invariance assumption). For this reason,  we prefer to formulate the results in terms of the spectral properties of the $\ve$-independent Hamiltonian $\mathring H^{in}$ instead of  the spectral properties of $\mathring H^{in,\ve}$. 

\begin{Definition}[Auxiliary Hamiltonian $\mathring{H}^{in}$]\label{d:Haux} We call Auxiliary Hamiltonian the Hamiltonian $\mathring H^{in} = \mathring H^{in,\ve=1}$ defined on $\HH^{in}$.

Letting  $\HH_2^{in} = \HH_2^{in,\ve = 1}$, the domain and action  of  $\mathring{H}^{in}$ are given by 
 \begin{equation}\label{2.4a}
\begin{aligned}
D(\mathring H^{in})= \big\{ \psi\in\HH_2^{in}| \,&  {P^{in}_v}^\perp \Psi(v) = 0\,,\; P^{in}_v \Psi'(v) - \Theta^{in}_vP^{in}_v \Psi(v) = 0 \quad \forall v\in \VV^{in} ;\\
                                                                         &{ \KK^{in}_v}^\perp \Psi(v) = 0\,,\; \KK^{in}_v \Psi'(v)  = 0 \quad \forall v\in \CC\big\}
\end{aligned}
\end{equation}
\begin{equation*}\label{Haux}
\mathring{H}^{in} \psi = -\psi'' + B^{in}  \psi \qquad \forall \psi \in D(\mathring H^{in}). 
\end{equation*}
\end{Definition}

The spectrum of $\mathring H^{in}$ consists of isolated eigenvalues of finite multiplicity, see, e.g., \cite[Th. 3.1.1]{berkolaiko-kuchment13}. For $n\in\NA$, we denote by $\lambda_n$ the eigenvalues of $\mathring H^{in}$ (counting multiplicity) and by $\{\varphi_n\}_{n\in\NA}$ a corresponding set of orthonormal eigenfunctions. \\

\begin{Definition}[Generic/Non-Generic Case]\label{d:GNG} In the analysis of the limit of $H^\ve$ we distinguish two cases: 
\begin{itemize}
\item[(1)] \emph{Generic (or Non-Resonant, or Decoupling) Case.}  $\lambda = 0$ is not an eigenvalue of the operator $\mathring H^{in}$.
\item[(2)] \emph{Non-Generic (or Resonant) Case.} $\lambda = 0$ is an eigenvalue of the operator $\mathring H^{in}$. \\In the Non-Generic Case we denote by $\{\hat \varphi_k\}_{k=1,\dots,m}$ a  set of (orthonormal) eigenfunctions corresponding to the zero eigenvalue. By Eq. \eqref{2.4a}, functions in  $D(\mathring H^{in})$ are continuous in the connecting vertices (see also Rem. \ref{r:Kv}). We denote by $\hat\varphi_k(v)$, $v\in\CC$, the value of $\hat\varphi_k$ in $v$, and define the vectors 
\begin{equation}\label{uchk}
\uch_k := (\hat\varphi_k(v_1), \dots, \hat\varphi_k(v_N))\in\CO^N,\qquad  k=1,\dots,m,\; v_j\in\CC, \; j=1,\dots,N.
\end{equation}
\end{itemize}
\end{Definition}

\begin{Definition}[$\WHC$ -- $\WHP$]\label{d:WHP} In the Non-Generic Case, let $\WHC$ be the operator 
\[
\WHC : = \sum_{k=1}^m \uch_k (\uch_k, \cdot)_{\CO^N}:\CO^N \to \CO^N.
\]
$\WHC$ is a bounded self-adjoint operator (it is an $N\times N$ Hermitian matrix). Denote by $\Ran \WHC\subseteq \CO^N$ and $\Ker \WHC\subseteq \CO^N$, the range and the kernel of $\WHC$ respectively. One has that the subspaces $\Ran \WHC$ and $\Ker \WHC $ are $\WHC$-invariant. Moreover, $\CO^N  = \Ran \WHC \oplus \Ker \WHC$. In what follows we denote by  $\WHP$ the orthogonal projection (Riesz projection, see, e.g.,  \cite[Section I.2]{gohberg-goldberg-kaashoek-90}) on $ \Ran(\WHC)$,  and by   $\WHP^\perp = \ID_N - \WHP$  the orthogonal projection on $\Ker(\WHC)$. 
\end{Definition}

\begin{Remark}\label{r:uch}
We note that  $\uv \in \Ker \WHC$ if and only if $(\uch_k,\uv)_{\CO^N} = 0$ for all $k=1,\dots,m$. To see that this indeed the case, observe that if $\uv \in \Ker \WHC$ then it must be $(\uv, \WHC \uv)_{\CO^N} = 0$, hence, $\sum_{k=1}^m |(\uch_k,\uv)_{\CO^N}|^2 = 0$, which in turn implies $(\uch_k,\uv)_{\CO^N} = 0$ for all $k=1,\dots,m$. The other implication is trivial.

Since $\WHP^\perp \uch_k \in \Ker \WHC$, we infer $0 = (\uch_k,\WHP^\perp\uch_k)_{\CO^N} = (\WHP^\perp\uch_k,\WHP^\perp\uch_k)_{\CO^N}=\|\WHP^\perp\uch_k\|_{\CO^N}^2$ for all $k = 1,\dots, m$; hence, $\WHP^\perp\uch_k = 0$, or, equivalently,  $\uch_k\in\Ran(\WHC)$. 
\end{Remark}

\subsection{Effective  Hamiltonians} We shall see that the definition of the  limiting operator (effective Hamiltonian in $\HH^{out}$) depends on presence of a zero eigenvalue for $\mathring H^{in}$ (the occurrence of  the Generic Case vs. the Non-Generic Case).

Recall that for $\psi \in \HH^{out}$, we used $\psi_j$ to denote the component of $\psi$ on the edge  $e_j$ attached  to the connecting vertex $v_j$. Moreover, we assumed that the vertex $v_j$ is identified by $x=0$. With this remark in mind, given a function $\psi\in \HH_2^{out}$ we define the vectors 
\begin{equation*}\label{Psizero}
\Psi(\zero) := (\psi_1(0),\dots,\psi_N(0))^T \in \CO^N\;,\qquad \Psi'(\zero) := (\psi_1'(0),\dots,\psi_N'(0))^T \in \CO^N.
\end{equation*}
These correspond to $\Psi(v_0)$ and $\Psi'(v_0)$, as defined in Section \ref{ss:general}, where $v_0$ is the central vertex of the star-graph $\GG^{out}$. 

In the limit $\ve\to0$, the connecting vertices  in $\GG^{in,\ve}$  coincide, and can be identified with the vertex $v_0\equiv\zero$.

 We distinguish two possible effective Hamiltonians in $\HH^{out}$. 
\begin{Definition}[Effective  Hamiltonian, Generic Case]\label{d:mathringHout}  We denote by $\mathring{H}^{out}$ the self-adjoint operator  in $\HH^{out}$ defined by
\begin{equation}\label{DHout}
\begin{aligned}
D(\mathring{H}^{out}):= \big\{ \psi\in\HH_2^{out} |\,&  {P^{out}_v}^\perp \Psi(v) = 0\,,\; P^{out}_v \Psi'(v) - \Theta^{out}_vP^{out}_v \Psi(v) = 0 \quad \forall v\in \VV^{out}; \\ 
&\Psi(\zero) = 0 \big\}
\end{aligned} 
\end{equation}
\begin{equation*}\label{Hout}
H^{out} \psi := -\psi'' + B^{out}  \psi \qquad \forall \psi \in D(\mathring{H}^{out}). 
\end{equation*}
\end{Definition}

\begin{Definition}[Effective  Hamiltonian, Non-Generic Case]\label{d:effect} Let $\WHP$ be the orthogonal projection given in Def.  \ref{d:WHP}.  We denote by $\WHH^{out}$ the self-adjoint operator in $\HH^{out}$ defined by
\[
\begin{aligned}
D(\WHH^{out}):= \big\{ \psi\in \HH_2^{out} |\,&  {P^{out}_v}^\perp \Psi(v) = 0\,,\; P^{out}_v \Psi'(v) - \Theta^{out}_vP^{out}_v \Psi(v) = 0 \quad \forall v\in \VV^{out}; \\ 
&\WHP^\perp \Psi(\zero) = 0\,,\; \WHP\Psi'(\zero) = 0 \big\}
\end{aligned}
\]
\[
\WHH^{out} \psi := -\psi'' + B^{out}  \psi \qquad \forall \psi \in D(\WHH^{out}). 
\]
\end{Definition}

The boundary conditions in $\zero$ in the definitions of $D(\mathring{H}^{out})$ and  $D(\WHH^{out})$ are scale invariant (see \cite[Sec. 1.4.2]{berkolaiko-kuchment13}).

\subsection{Main result}In what follows $C$ denotes a generic positive constant independent on $\ve$. \\
 Given two Hilbert spaces $X$ and $Y$, we denote by $\BB(X,Y)$ (or simply by $\BB(X)$ if $X=Y$) the space of bounded linear operators from $X$ to $Y$, and by $\|\cdot\|_{\BB(X,Y)}$ the corresponding norm. For any $a\in\RE$, we use the notation $\OO_{\BB(X,Y)}(\ve^a)$ to denote a generic operator from $X$ to $Y$ whose norm is bounded by $C\ve^{a}$ for $\ve$ small enough.  

Given a bounded operator $A$ in $\HH^\ve$ we use the notation 
\begin{equation}\label{outindec}
A = \begin{pmatrix}
A^{out,out} & A^{out,in} \\ 
A^{in,out} & A^{in,in} 
\end{pmatrix}
\end{equation}
to describe its action in the $out/in$ decomposition \eqref{outin}: here $A^{u,v}:\HH^v\to\HH^u$, $u,v=out,in$, are operators defined according to
\begin{equation}\label{outindec2}
\begin{aligned}
 (A\psi)^{out} = & A^{out,out}\psi^{out} +  A^{out,in} \psi^{in}  \\ 
 (A\psi)^{in} = & A^{in,out}\psi^{out} +  A^{in,in} \psi^{in}  .
 \end{aligned}
 \end{equation}

\begin{Theorem}\label{t:mainG}Let $z\in\CO\backslash\RE$. In the Generic Case (see Def. \ref{d:GNG})
\begin{equation} \label{2.7a}
(H^\ve - z)^{-1} = 
\begin{pmatrix}
(\mathring H^{out} - z)^{-1}  & \NULL \\
\NULL & \NULL 
\end{pmatrix} +\OO_{\BB(\HH^{\ve})}(\ve) ,
\end{equation}
where the expansion has to be understood in the  $out/in$ decomposition \eqref{outindec}. 
\end{Theorem}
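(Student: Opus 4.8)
The plan is to start from the Kre\u\i n-type resolvent formula \eqref{Rve_main}, which writes $R_z^\ve=(H^\ve-z)^{-1}$ as the sum of the block-diagonal operator $\mathring R_z^\ve=\diag\big((\mathring H^{out}-z)^{-1},(\mathring H^{in,\ve}-z)^{-1}\big)$ and a coupling term that reconstructs the full Kirchhoff condition at the connecting vertices. Since the $out/out$ block of $\mathring R_z^\ve$ is already exactly $(\mathring H^{out}-z)^{-1}$ and its off-diagonal blocks vanish, it suffices to prove two things: that the $in/in$ block $(\mathring H^{in,\ve}-z)^{-1}$ is $\OO_{\BB(\HH^{in,\ve})}(\ve^2)$, and that every block of the coupling term is $\OO_{\BB(\HH^\ve)}(\ve)$.

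The first estimate is immediate from scale invariance. By \eqref{mathringHinve} one has $\sigma(\mathring H^{in,\ve})=\ve^{-2}\sigma(\mathring H^{in})$, and in the Generic Case $0\notin\sigma(\mathring H^{in})$ (Def. \ref{d:GNG}). Hence every eigenvalue of $\mathring H^{in,\ve}$ has modulus at least $c\,\ve^{-2}$, with $c>0$ fixed by the smallest-modulus eigenvalue of $\mathring H^{in}$; so for fixed $z\in\CO\setminus\RE$ and $\ve$ small one gets $\mathrm{dist}\big(z,\sigma(\mathring H^{in,\ve})\big)\ge c\,\ve^{-2}/2$ and therefore $\|(\mathring H^{in,\ve}-z)^{-1}\|=\OO(\ve^2)$.

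The bulk of the work is the coupling term. Structurally it is a rank-$N$ object of the form $\gamma_z^\ve\,(M^\ve(z))^{-1}\,(\gamma_{\bar z}^\ve)^{*}$, where $\gamma_z^\ve:\CO^N\to\HH^\ve$ is the (outer $\oplus$ inner) boundary map of $\mathring H^\ve$ and $M^\ve(z)$ is the associated Weyl/Kre\u\i n matrix on $\CO^N$, namely the sum of the outer and the inner Dirichlet-to-Neumann contributions at the connecting vertices. I would use the scaling analysis of Section \ref{s:4} to show that the inner contribution equals $\ve^{-1}$ times the inner Dirichlet-to-Neumann map of $\mathring H^{in}$ evaluated at energy $\ve^2 z$, plus lower-order terms, while the outer contribution stays $\OO(1)$; this follows from the substitution $u(x)=w(x/\ve)$, under which values are preserved and outgoing derivatives acquire the factor $\ve^{-1}$. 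In the Generic Case the leading inner matrix is nonsingular precisely because $0$ is not an eigenvalue of $\mathring H^{in}$ (this is where the hypothesis is used), so $M^\ve(z)=\ve^{-1}\big(M_0+\OO(\ve)\big)$ with $M_0$ invertible, whence $M^\ve(z)$ is invertible for small $\ve$ and $\|(M^\ve(z))^{-1}\|=\OO(\ve)$. Since the outer boundary map is $\ve$-independent and the inner boundary map carries only positive powers of $\ve$ (coming from $U^{in,\ve}$), each of the four blocks of the coupling term is $\OO_{\BB}(\ve)$, the dominant one being the $out/out$ block. Adding this to the diagonal part yields \eqref{2.7a}.

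I expect the main obstacle to be the precise small-$\ve$ asymptotics of $(M^\ve(z))^{-1}$: one must verify the invertibility of the leading inner matrix $M_0$ and keep uniform control of all remainders. This rests on the scale-invariance identities of Section \ref{s:4} and on the quantitative eigenvalue/eigenfunction bounds for $\mathring H^{in}$ collected in Appendix \ref{s:appB}. The Generic assumption enters exactly to exclude the small denominators that, in the Non-Generic Case, would instead make the inverse converge to a nontrivial finite limit and produce $\WHH^{out}$.
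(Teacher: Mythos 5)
Your overall architecture (Kre\u{\i}n formula, plus the estimate $(\mathring H^{in,\ve}-z)^{-1}=\OO_{\BB(\HH^{in,\ve})}(\ve^2)$, plus an $\OO(\ve)$ bound on the coupling term) matches the paper's, and your first step is correct. But your treatment of the coupling term has a genuine gap, and it originates in a switch of Kre\u{\i}n matrix that you make silently. Formula \eqref{Rve_main} is built from the reference operator $\mathring H^\ve$, which is Dirichlet on the leads but \emph{Kirchhoff} on $\GG^{in,\ve}$; accordingly the inner Weyl function $\MM^{in,\ve}_z$ of Eq. \eqref{Min} is of Neumann-to-Dirichlet type, with poles exactly at the eigenvalues of $\mathring H^{in,\ve}$. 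You instead analyze a matrix $M^\ve(z)=\mathrm{DtN}^{out}(z)+\mathrm{DtN}^{in,\ve}(z)$, i.e.\ the Kre\u{\i}n matrix of the fully Dirichlet-decoupled reference operator. The inner Dirichlet-to-Neumann map has its poles at the eigenvalues of the \emph{Dirichlet} problem on $\GG^{in}$ (Dirichlet conditions at the connecting vertices), not of $\mathring H^{in}$. The Generic hypothesis says $0\notin\sigma(\mathring H^{in})$; it says nothing about the inner Dirichlet spectrum, and the two conditions are independent. For instance, take $\GG^{in}=[0,\pi]$ with both endpoints connecting and $B^{in}=\phi''/\phi$ with $\phi=\sin x+a\sin 2x$, $a$ small: $B^{in}$ is bounded, $0$ is an eigenvalue of the Dirichlet problem (eigenfunction $\phi$), while for generic $a$ it is not a Neumann (i.e.\ Kirchhoff, since the connecting vertices have degree one) eigenvalue, so this configuration is Generic in the sense of Def. \ref{d:GNG}. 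In such a case $\mathrm{DtN}^{in}(\ve^2 z)$ blows up like $\ve^{-2}$, your limiting matrix $M_0$ does not exist, the expansion $M^\ve(z)=\ve^{-1}(M_0+\OO(\ve))$ fails, and the inner boundary (Poisson) map does \emph{not} carry ``only positive powers of $\ve$''. Your invertibility criterion for $M_0$ is correct \emph{when} $M_0$ exists, but existence is exactly what is not guaranteed.

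This is precisely what the paper's choice of reference operator is designed to avoid: since $\mathring H^{in,\ve}$ itself carries the Kirchhoff conditions, one has $\MM^{in,\ve}_z=\ve\sum_{n}\uc_n(\uc_n,\cdot)_{\CO^N}/(\lambda_n-\ve^2 z)$, whose denominators are bounded away from zero \emph{exactly} by the Generic hypothesis; this yields $\MM^{in,\ve}_z=\OO_{\BB(\CO^N)}(\ve)$ (Prop. \ref{p:Mgen}), hence $(\MM^{in,\ve}_z\MM^{out}_z-\ID_N)^{-1}\to-\ID_N$, and no invertibility of any limiting matrix is required. If you wish to keep the Dirichlet-decoupled formulation, you must either add the hypothesis that $0$ is not an eigenvalue of the inner Dirichlet problem (which would prove a weaker theorem), or perform an additional residue/Schur-complement analysis of $M^\ve(z)$ relative to the decomposition $\Ran A\oplus\Ker A$, where $A$ is the residue of $\mathrm{DtN}^{in}$ at $0$, using the Generic hypothesis to invert the compression of the regular part of $\mathrm{DtN}^{in}(0)$ to $\Ker A$, and tracking the compensating smallness of $(M^\ve(z))^{-1}$ along the singular directions of the inner Poisson map. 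None of this is in your proposal, so as written the proof does not go through in the full Generic Case.
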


\begin{Theorem}\label{t:mainNG}Let $z\in\CO\backslash\RE$. In the Non-Generic Case  (see Def. \ref{d:GNG}), let $\WHC_0$ be the restriction of $\WHC$ to $\Ran \WHP$. 
\begin{itemize}
\item[(i)]
If $\Ker \WHC \subset \CO^N$, $\WHC_0$ is invertible as an operator in $\WHP\CO^N$, and 
\begin{equation}\label{nona0}
(H^\ve - z)^{-1} = 
\begin{pmatrix}
(\WHH^{out} - z)^{-1}  & \NULL \\
\NULL & - z^{-1}\sum_{k,k' = 1}^m \left(\delta_{k,k'} -(\uch_k, \WHC_0^{-1} \uch_{k'} )_{\CO^N} \right) \hat \varphi_k^\ve(\hat \varphi_{k'}^\ve, \cdot)_{\HH^{in,\ve}} 
\end{pmatrix} +  \OO_{\BB(\HH^\ve)}(\ve^{1/2}),
\end{equation}
where the expansion has to be understood in the  $out/in$ decomposition  \eqref{outindec}. 

\item[(ii)]
If $\Ker \WHC = \CO^N$, then $\WHP = 0$, and  expansion \eqref{nona0} holds true with $\WHH^{out} = \mathring H^{out}$,   $(\uch_k, \WHC_0^{-1} \uch_{k'} )_{\CO^N} = 0$ for all $k,k'=1,\dots,m$,  and the error term  changed in $\OO_{\BB(\HH^\ve)}(\ve)$.

\item[(iii)]
If the vectors $\uch_k$, $k=1,\dots,m$, are linearly independent, then $\left(\delta_{k,k'} -(\uch_k, \WHC_0^{-1} \uch_{k'} )_{\CO^N} \right) = 0$ for all $k,k' = 1,\dots, m$, and 
\begin{equation}\label{nona1}
(H^\ve - z)^{-1} = 
\begin{pmatrix}
(\WHH^{out} - z)^{-1}    &  \NULL \\
\NULL  & \NULL  \end{pmatrix} + \OO_{\BB(\HH^{\ve}) }(\ve^{1/2}).  
\end{equation}
\end{itemize}
\end{Theorem}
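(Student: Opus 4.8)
The plan is to start from the Kre\u\i n-type resolvent formula \eqref{Rve_main} of Section \ref{s:3}, which I will assume expresses $(H^\ve-z)^{-1}=\mathring R_z^\ve+\text{(correction)}$, where $\mathring R_z^\ve=\diag\big((\mathring H^{out}-z)^{-1},(\mathring H^{in,\ve}-z)^{-1}\big)$ is the resolvent of the decoupled operator $\mathring H^\ve$, and the correction is built from a gamma field $G_z^\ve:\CO^N\to\HH^\ve$ (producing harmonic-type extensions into the $out$ and $in$ components from boundary data at the connecting vertices) together with the inverse of a finite $N\times N$ Weyl-type matrix, which I will denote $\Lambda_z^\ve$. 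The first input is the small-$\ve$ behaviour of the $in/in$ reference resolvent: by the scaling relation \eqref{mathringHinve} the eigenvalues of $\mathring H^{in,\ve}$ are $\la_n/\ve^2$, so all non-zero modes contribute $\OO(\ve^2)$ and only the persistent zero eigenvalue survives, giving
\[
(\mathring H^{in,\ve}-z)^{-1}=-z^{-1}\sum_{k=1}^m \hat\varphi_k^\ve(\hat\varphi_k^\ve,\cdot)_{\HH^{in,\ve}}+\OO_{\BB(\HH^{in,\ve})}(\ve^2),\qquad \hat\varphi_k^\ve:=U^{in,\ve}\hat\varphi_k.
\]
This already identifies the unperturbed $in/in$ term appearing in \eqref{nona0}.

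Next I would analyse $\Lambda_z^\ve$, which combines the $\ve$-independent $out$-side Weyl function with the $in$-side one. Using \eqref{mathringHinve} and the fact (Section \ref{s:4}) that rescaling carries the connecting-vertex data of $\hat\varphi_k^\ve$ to $\ve^{-1/2}\uch_k$, the $in$-side Weyl function scales like $\ve^{-1}$ times the Weyl function of $\mathring H^{in}$ evaluated at energy $\ve^2z$. Since $\la=0$ is an eigenvalue of $\mathring H^{in}$ with Kirchhoff (zero-flux) data and trace values $\uch_k$, this inner Weyl function vanishes precisely along $\Ran\WHC=\mathrm{span}\{\uch_k\}$ (to first order in the energy) and stays invertible on $\Ker\WHC$. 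Consequently $\Lambda_z^\ve$, read in the orthogonal splitting $\CO^N=\Ran\WHP\oplus\Ker\WHC$ of Def.~\ref{d:WHP}, blows up like $\ve^{-1}$ on $\Ker\WHC$ while converging to a finite, invertible $out$-side block on $\Ran\WHP$. A Schur-complement inversion against these two rates then yields $(\Lambda_z^\ve)^{-1}\to\WHP(\,\cdot\,)^{-1}\WHP$, a finite operator supported on $\Ran\WHP$, with the off-block and $\Ker\WHC$ contributions of order $\ve$.

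I would then reassemble the blocks. Feeding $(\Lambda_z^\ve)^{-1}\to\WHP(\,\cdot\,)^{-1}\WHP$ into the $out/out$ component of the correction turns the Dirichlet reference $\mathring H^{out}$ into the extension that is Neumann on $\Ran\WHP$ and Dirichlet on $\Ker\WHC$, i.e.\ exactly $(\WHH^{out}-z)^{-1}$ of Def.~\ref{d:effect}; the off-diagonal blocks vanish because the $in$-side gamma map $G_z^{\ve,in}$ carries a factor $\ve^{1/2}$ from the unitary $U^{in,\ve}$; and the $in/in$ correction, in which the resonant ($\Ran\WHC$) directions of $G_z^{\ve,in}$ meet the finite limit of $(\Lambda_z^\ve)^{-1}$, produces precisely the modification $-z^{-1}\sum_{k,k'}(\uch_k,\WHC_0^{-1}\uch_{k'})_{\CO^N}\hat\varphi_k^\ve(\hat\varphi_{k'}^\ve,\cdot)$ of the zero-mode projector, giving the coefficient $\big(\delta_{k,k'}-(\uch_k,\WHC_0^{-1}\uch_{k'})\big)$. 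The degenerate cases are then read off algebraically: if $\WHC=0$ then $\WHP=0$, $\WHH^{out}=\mathring H^{out}$, the resonant directions are absent so the correction is non-singular and the remainder improves to $\OO(\ve)$, proving (ii); and if the $\uch_k$ are linearly independent one checks directly from $\WHC w_{k'}=\uch_{k'}$ that $(\uch_k,\WHC_0^{-1}\uch_{k'})=\delta_{k,k'}$, so the $in/in$ block cancels, proving (iii).

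The hard part will be the sharp bookkeeping of the competing scalings $\ve^{-1},\ve^0,\ve^{1/2}$ across the splitting $\Ran\WHP\oplus\Ker\WHC$. In particular, on the resonant directions $\Ran\WHC$ the $in$-side solutions nearly excite the surviving zero mode, so their norms grow and must be balanced against the smallness of $(\Lambda_z^\ve)^{-1}$ and the $\ve^{1/2}$ from $G_z^{\ve,in}$; controlling these products uniformly, which is where the eigenvalue/eigenfunction estimates of Appendix \ref{s:appB} enter, is what produces the $\OO_{\BB(\HH^\ve)}(\ve^{1/2})$ remainder in \eqref{nona0} (and the improved $\OO(\ve)$ once the resonant $in/in$ coupling is absent). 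This quantitative step, rather than the identification of the limiting operators, is the genuine obstacle; the remainder of the argument parallels the Generic Case (Th.~\ref{t:mainG}), where the absence of a zero eigenvalue makes $\Lambda_z^\ve$ large in every direction and the correction collapses entirely to decoupling.
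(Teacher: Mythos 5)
Your proposal is correct and follows essentially the same route as the paper's proof: start from the Kre\u{\i}n formula \eqref{Rve_main}, expand the inner resolvent and inner Weyl matrix around the persistent zero mode, invert by Schur complement in the splitting $\CO^N=\WHP\CO^N\oplus\WHP^\perp\CO^N$ (Prop.~\ref{p:Nz}) so that $(\MM^{in,\ve}_z\MM^{out}_z-\ID_N)^{-1}\MM^{in,\ve}_z\to\WHP\big(\WHP\MM^{out}_z\WHP\big)^{-1}\WHP$ (Prop.~\ref{p:inverse}), identify the $out/out$ limit with $\WHR^{out}_z$ through Lemma \ref{l:Rout}, obtain the $in/in$ projector modification, and settle (ii) and (iii) by the same algebra (Rem.~\ref{r:uch} and $\WHC\WHC_0^{-1}\uch_{k'}=\uch_{k'}$). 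The one statement to repair in the write-up is the mid-proof claim that the off-diagonal blocks vanish because $G^{in,\ve}_z$ ``carries a factor $\ve^{1/2}$'': its resonant part is in fact $\OO(\ve^{-1/2})$ (see Eq.~\eqref{buried4}), and the $\ve^{1/2}$ rate arises only because that singular part is $\Ran\WHP$-valued, where the inverted matrix is $\OO(\ve)$ --- exactly the balancing you describe correctly in your closing paragraph, and which the paper carries out in the proof of Th.~\ref{t:mainNG2}.
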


\begin{Remark}
Finer estimates on the remainders in Eq.s \eqref{2.7a} and \eqref{nona0} are given  in Th.s \ref{t:mainG2} and \ref{t:mainNG2} below. 
\end{Remark}

\begin{Remark}\label{r:EP} We recall and adapt to our setting the notion of $\delta^\ve$-quasi unitary equivalence  of operators acting on different Hilbert spaces  introduced by  P. Exner and O. Post, see in particular \cite[Sec. 3.2]{exner-post:09} and \cite[Ch. 4]{post:book}. See also \cite[Sec. 5]{berkolaiko-latushkin-sukhtaiev:rxv18}  for a discussion on the application of this approach to the analysis of operators on graphs with shrinking  edges. 

Let $J$  be the operator 
\[
J:\HH^{out} \to \HH^\ve, \qquad J\psi^{out} = (\psi^{out}, 0) \quad  \text{for all } \psi^{out} \in \HH^{out},   
\]
where $ (\psi^{out}, 0)$ is understood in the decomposition \eqref{outin}.  Its adjoint $J^*$ maps $\HH^{\ve}$ in $\HH^{out}$, and  is given by:
\[
J^*: \HH^\ve \to \HH^{out}, \qquad J^*\psi = \psi^{out} \quad \text{for all } \psi= (\psi^{out},\psi^{in})\in \HH^\ve .  
\]
Note that  $J^* J = \ID^{out}$, where $\ID^{out}$ is the identity in $\HH^{out}$. 

The operator $H^\ve$ is $\delta^\ve$-quasi unitarily equivalent to a self-adjoint  operator $H^{out}$ in $\HH^{out}$ if 
\begin{equation}\label{dqu}
\big\|(\ID - JJ^*) (H^\ve-z)^{-1}\big\|_{\BB(\HH^\ve)} \leq C \delta^\ve \quad \text{and} \quad 
\big\| J(H^{out} - z)^{-1}-  (H^\ve-z)^{-1} J\big\|_{\BB(\HH^{out},\HH^\ve)}  \leq C \delta^\ve,
\end{equation}
for some $z\in\CO\backslash \RE$. 

Note that in the decomposition \eqref{outindec2}, one has 
\[
(\ID - JJ^*) (H^\ve-z)^{-1} \psi = \big((H^\ve-z)^{-1}\big)^{in,out} \psi^{out} + \big((H^\ve-z)^{-1}\big)^{in,in} \psi^{in}  
\]
and 
\[
(J(H^{out} - z)^{-1}-  (H^\ve-z)^{-1} J) \psi^{out} =
\Big(\big((H^{out} - z)^{-1} - \big((H^\ve-z)^{-1} \big)^{out,out}\big)  \psi^{out}  , - \big((H^\ve-z)^{-1} \big)^{in,out} \psi^{out}\Big).
\]
Hence:

By Th. \ref{t:mainG}, in the Generic Case the operator $H^\ve$  is $\ve$-quasi unitarily equivalent to the operator $\mathring H^{out}$.

By Th. \ref{t:mainNG} - (iii), in the Non-Generic Case, if the vectors $\uch_k$, $k=1,\dots,m$, are linearly independent,  the operator $H^\ve$  is $\ve^{1/2}$-quasi unitarily equivalent to the operator $\WHH^{out}$. More precisely, the second condition in Eq. \eqref{dqu} always holds true, while the first one holds true only under the additional assumption that  the vectors $\uch_k$ are linearly independent. 

We refer to \cite{post:book} for a comprehensive discussion on the comparison between operators acting on different spaces. 
\end{Remark}

\section{\label{s:3}Kre{\u\i}n resolvent formulae}
In this section we introduce the main tools in our analysis: the Kre{\u\i}n-type resolvent formulae for the resolvents of $H^\ve$ and $\WHH^{out}$. The proofs are postponed to App. \ref{a:krf}. 

Given the Hilbert spaces $X^{out}$, $Y^{out}$, $X^{in}$, and  $Y^{in}$, and  a couple of operators $A^{out}: X^{out} \to Y^{out}$ and $A^{in}: X^{in} \to Y^{in}$, we denote by $A := \diag(A^{out},A^{in})$, the operator $A: X \to Y$, with $X := X^{out} \oplus X^{in}$ and  $Y: = Y^{out} \oplus Y^{in}$, acting as $A f: = (A^{out}f^{out}, A^{in}f^{in})$, for all $f=(f^{out},f^{in})\in X$, $f^{out}\in X^{out}$  and $f^{in}\in X^{in}$. 

We set 
\begin{equation}\label{DHve}
D(\mathring H^\ve) := D(\mathring H^{out}) \oplus  D(\mathring H^{in,\ve}) \qquad \text{and} \qquad  \mathring H^\ve :=\diag(\mathring H^{out}, \mathring H^{in,\ve}),
\end{equation}
with $\mathring H^{out}$ and $\mathring H^{in,\ve}$ given as in Def.s \ref{d:mathringHout} and \ref{d:Hinve0}

Given an operator $A$, we denote by $\rho(A)$ its resolvent set; the resolvent of $A$ is defined  as $(A-z)^{-1}$ for all $z\in \rho(A)$. 

For the resolvents of the relevant operators we introduce the shorthand notation 
\begin{equation}\label{Rve}
R^\ve_{z}:= (H^\ve - z)^{-1} \qquad z \in \rho(H^\ve) ;
\end{equation}
\begin{equation}\label{ringR}
 \mathring{R}^\ve_{z}:= (\mathring{H}^\ve - z)^{-1}   \qquad z \in \rho(\mathring H^\ve) = \rho(\mathring H^{out})\cap \rho(\mathring H^{in,\ve}); 
\end{equation}
\begin{equation}\label{ringRout}
\mathring{R}^{out}_{z}:= (\mathring{H}^{out} - z)^{-1}\qquad z\in\rho(\mathring H^{out}) ; \qquad \WHR^{out}_{z}:= (\WHH^{out} - z)^{-1}\qquad z\in\rho(\WHH^{out}) ; 
\end{equation}
\begin{equation}\label{ringRin}
\mathring{R}^{in,\ve}_{z}:= (\mathring{H}^{in,\ve} - z)^{-1} \qquad z\in \rho(\mathring H^{in,\ve }).
\end{equation}
Obviously, all the operators in Eq.s. \eqref{Rve} - \eqref{ringRin} are well-defined and bounded for $z\in \CO\backslash\RE$, moreover $\mathring{R}^\ve_{z} = \diag(\mathring{R}^{out}_{z},\mathring{R}^{in,\ve}_{z})$. \\

Our aim is to write the resolvent difference $R^\ve_{z} -  \mathring{R}^\ve_{z}$ in a suitable block matrix form, associated  to the off-diagonal matrix $\OD$ in Eq. \eqref{Theta}. To do  so we follow the approach of Posilicano \cite{posilicano_jfa01,pos-om08}.  All the self-adjoint extensions of the symmetric operator obtained by restricting a given self-adjoint operator to the kernel of a given map $\tau$ are parametrized by a projection ${\bf P}$ and a self-adjoint operator $\OD$ in $\Ran {\bf P}$.  We choose the reference operator $\mathring H^\ve$  and the map $\tau$ so that the Hamiltonian of interest $H^\ve$ is the self-adjoint extension parametrized by the identity and the self-adjoint operator given by the off-diagonal matrix $\OD$.  The Kre\u{\i}n formula for the resolvent difference  $R^\ve_{z} -  \mathring{R}^\ve_{z}$, see Lemma \ref{l:Rve},   is obtained within the approach from \cite{posilicano_jfa01,pos-om08}.  
\\

We define the maps:
\begin{equation}\label{tau-out}
\tau^{out} : \HH_2^{out} \to \CO^N \qquad \tau^{out}\psi :=\Psi'(\zero);
\end{equation}
\begin{equation}\label{tau-in}
\begin{aligned}
&\tau^{in} : \HH_2^{in,\ve} \to \CO^N \\
&\tau^{in}\psi :=\left( \frac{1}{\sqrt{d^{in}(v_1)}} (\II_{d^{in}(v_1)},\Psi(v_1))_{\CO^{d^{in}(v_1)}}, ..., \frac1{\sqrt{d^{in}(v_N)}}(\II_{d^{in}(v_N)},\Psi(v_N))_{\CO^{d^{in}(v_N)}}\right)^T.
\end{aligned}
\end{equation}
Moreover we set,  
\begin{equation*}\label{tau}
\tau: \HH_2^\ve = \HH_2^{out}\oplus \HH_2^{in,\ve} \to \CO^{2N} \qquad \tau  := \diag(\tau^{out},\tau^{in}).
\end{equation*}
Note that we are using the identification $\CO^{2N} = \CO^N \oplus \CO^N$. \\ 

The following maps are well-defined and bounded 
\begin{equation*}\label{vGout}
\breve{G}^{out}_{z} : \HH^{out} \to \CO^N \qquad \breve{G}^{out}_{z} := \tau^{out} \mathring{R}^{out}_{z} \qquad z\in\rho(\mathring H^{out})
\end{equation*}
and 
\begin{equation}\label{vGin}
\breve{G}^{in,\ve}_{z} : \HH^{in,\ve} \to \CO^N \qquad \breve{G}^{in,\ve}_{z} := \tau^{in} \mathring{R}^{in,\ve}_{z} \qquad z\in\rho(\mathring H^{in,\ve}). 
\end{equation}
Moreover we set 
\begin{equation*}
\breve{G}^\ve_{z}: \HH^\ve =  \HH^{out}\oplus \HH^{in,\ve}\to \CO^{2N} \qquad \breve{G}^\ve_{z}  := \diag(\breve{G}^{out}_{z},\breve{G}^{in,\ve}_{z}),
\end{equation*}
for $z\in\rho(\mathring H^{out})\cap\rho(\mathring H^{in,\ve})$.  Note that $\breve{G}^\ve_{z} = \tau \mathring{R}^{\ve}_{z}$ and that all the maps above are well-defined bounded operators for $z\in\CO\backslash \RE$. 

The adjoint maps (in $\bar z$) are denoted by 
\begin{equation*}\label{Gout}
G^{out}_{z} :  \CO^N \to \HH^{out}  \qquad G^{out}_{z} := \breve{G}^{out*}_{\bar z},
\end{equation*}
\begin{equation}\label{Gin}
G^{in,\ve}_{z} : \CO^N \to  \HH^{in,\ve}  \qquad G^{in,\ve}_{z} :=  \breve{G}^{in,\ve*}_{\bar z} ,
\end{equation}
($^*$ denoting  the adjoint) and 
\begin{equation*}\label{G}
G^\ve_{z}: \CO^{2N} \to \HH^\ve \qquad G^\ve_{z}  := \breve{G}^{\ve*}_{\bar z}  . 
\end{equation*}
Obviously $ G^\ve_{z} = \diag(G^{out}_{z},G^{in,\ve}_{z})$ to be understood as an operator from  $\CO^{2N} =  \CO^N\oplus \CO^N$ to $\HH^{\ve} = \HH^{out}\oplus \HH^{in,\ve}$. \\ 

We note that, see Rem. \ref{r:A1},  $G^{out}_z: \CO^{N} \to \HH_2^{out}$ and  $G^{in,\ve}_z: \CO^{N} \to \HH_2^{in,\ve}$, for all $z\in \rho(\mathring H^{out})$ and $z\in \rho(\mathring H^{in,\ve})$ respectively, so that the maps  ($N\times N$, $z$-dependent matrices)
\begin{equation}\label{Mout}
\MM^{out}_z : \CO^N \to \CO^N,\quad \MM^{out}_z:= \tau^{out} G^{out}_z  \qquad  z\in \rho(\mathring H^{out}) 
\end{equation}
\begin{equation}\label{Min0}
\MM^{in,\ve}_z : \CO^N \to \CO^N,\quad \MM^{in,\ve}_z:= \tau^{in} G^{in,\ve}_z \qquad z\in \rho(\mathring H^{in,\ve}),
\end{equation}
are well defined. Moreover, we set 
\begin{equation}\label{M}
\MM^\ve_z : \CO^{2N} \to \CO^{2N},\quad \MM^\ve_z:=\diag(\MM^{out}_z,\MM^{in,\ve}_z)  \qquad  z\in \rho(\mathring H^{out}) \cap   \rho(\mathring H^{in,\ve}) =  \rho(\mathring H^{\ve});
\end{equation}
obviously $\MM^\ve_z =  \tau G^\ve (z)$. \\
 
In the following  Lemmata we give two Kre{\u\i}n-type resolvent  formulae: one allows to express the resolvent of $\WHH^{out}$ in terms of the resolvent of $\mathring H^{out}$; the other gives  the resolvent of $H^\ve$ in terms of the resolvent of $\mathring H^\ve$. For the proofs we refer to App. \ref{a:krf}, Section \ref{s:A1}. 
 
\begin{Lemma}\label{l:Rout} Let $\WHP$ be an orthogonal projection in $\CO^N$, and $\WHH^{out}$ and $\mathring H^{out}$ be the Hamiltonians defined according to  Def.s \ref{d:effect} and \ref{d:mathringHout}.  Then, for any $z\in \rho(\WHH^{out}) \cap \rho(\mathring H^{out})$,  the map $\WHP \MM^{out}_z\WHP: \WHP\CO^{N} \to \WHP\CO^{N}$ is invertible and  
\[
\WHR^{out}_z = \mathring{R}^{out}_z - G^{out}_z \WHP \big(\WHP \MM^{out}_z\WHP\big)^{-1}\WHP \breve{G}^{out}_z. 
\]
\end{Lemma}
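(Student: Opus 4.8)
The plan is to realize $\WHH^{out}$ as a self-adjoint extension of a common symmetric restriction of $\mathring H^{out}$, using the abstract Kre\u{\i}n-type framework of Posilicano \cite{posilicano_jfa01,pos-om08} that the paper has already set up through the maps $\tau^{out}$, $G^{out}_z$, and $\MM^{out}_z$. First I would identify the symmetric operator $S$ obtained by restricting $\mathring H^{out}$ to $\Ker \tau^{out} \cap D(\mathring H^{out})$, i.e.\ to functions with $\WHP\Psi'(\zero)$ unconstrained but otherwise satisfying the $\mathring H^{out}$ vertex conditions. Since $\mathring H^{out}$ imposes $\Psi(\zero)=0$, while $\WHH^{out}$ imposes only $\WHP^\perp\Psi(\zero)=0$ together with $\WHP\Psi'(\zero)=0$, the operator $\WHH^{out}$ is a self-adjoint extension of $S$ differing from $\mathring H^{out}$ precisely in the ``channel'' selected by the projection $\WHP$. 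The task is then to show that this extension corresponds, in Posilicano's parametrization, to the pair $(\WHP, 0)$ — that is, projection $\WHP$ and zero self-adjoint parameter in $\Ran\WHP$ (the scale-invariance, $\Theta=0$, observed after Def.~\ref{d:effect}).

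Next I would invoke the standard resolvent formula from that framework: for the extension parametrized by $(\WHP,\OD)$ with $\OD=0$, one has
\[
\WHR^{out}_z = \mathring{R}^{out}_z - G^{out}_z \WHP \big(\WHP \MM^{out}_z\WHP\big)^{-1}\WHP \breve{G}^{out}_z,
\]
provided the middle factor is invertible on $\WHP\CO^N$. The key structural facts needed are the defining relations $G^{out}_z = (\tau^{out}\mathring R^{out}_{\bar z})^*$ and $\MM^{out}_z = \tau^{out} G^{out}_z$, which encode that $G^{out}_z$ maps into $\Ker(\mathring H^{out}{}^*-z)$ and that $\MM^{out}_z$ is the associated Weyl/$M$-function. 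I would verify directly that a function of the form $\psi = \mathring R^{out}_z f - G^{out}_z \WHP\, c$ lands in $D(\WHH^{out})$ for the right choice of $c\in\WHP\CO^N$: applying $\WHP\tau^{out}$ to $\psi$ and imposing the boundary condition $\WHP\Psi'(\zero)=0$ forces $c = (\WHP\MM^{out}_z\WHP)^{-1}\WHP\breve G^{out}_z f$, which reproduces the claimed formula. The complementary condition $\WHP^\perp\Psi(\zero)=0$ should hold automatically because $G^{out}_z$ ranges in functions vanishing in the $\WHP^\perp$ directions at $\zero$ (inherited from the $\Psi(\zero)=0$ condition defining $\mathring H^{out}$).

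The main obstacle will be establishing the invertibility of $\WHP\MM^{out}_z\WHP$ on $\WHP\CO^N$ for $z\in\rho(\WHH^{out})\cap\rho(\mathring H^{out})$, rather than merely for $z\in\CO\backslash\RE$. The clean way is to show that the singularities of $(\WHP\MM^{out}_z\WHP)^{-1}$ are exactly the eigenvalues of $\WHH^{out}$ not shared by $\mathring H^{out}$: this is a general feature of the Kre\u{\i}n parametrization, where $\det(\WHP\MM^{out}_z\WHP)=0$ characterizes the point spectrum of the extension relative to the reference operator. I would argue this by noting that $z\in\rho(\WHH^{out})$ forces the homogeneous problem $\WHP\MM^{out}_z\WHP\, c = 0$ to have only the trivial solution (otherwise $G^{out}_z\WHP c$ would be a nontrivial $L^2$ eigenfunction of $\WHH^{out}$ at $z$, contradicting $z\in\rho(\WHH^{out})$), and invertibility on the finite-dimensional space $\WHP\CO^N$ follows. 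Since all the maps involved are already known to be bounded and well-defined, the remaining verification is essentially algebraic, and I would defer the self-adjointness bookkeeping and the detailed check that $(\WHP,0)$ is the correct parameter to the appendix, as the paper indicates in Section~\ref{s:A1}.
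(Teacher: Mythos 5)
Your overall strategy is the same as the paper's: place the formula inside Posilicano's parametrization with reference operator $\mathring H^{out}$, trace map $\tau^{out}$, and parameter pair $(\WHP,\Theta=0)$, then identify the extension by checking the boundary conditions at $\zero$. Where you genuinely differ is the invertibility of $\WHP\MM^{out}_z\WHP$: the paper proves it only for $\Im z\neq 0$, via the identity $\Im\big(\uv,\WHP\MM^{out}_z\WHP\,\uv\big)_{\CO^N}=\Im z\,\|G^{out}_z\WHP\uv\|^2_{\HH^{out}}$ and injectivity of $G^{out}_z$ (Eq. \eqref{berry}, Rem.s \ref{r:injectivity} and \ref{r:invertibility}), and then covers real $z\in\rho(\WHH^{out})\cap\rho(\mathring H^{out})$ by citing \cite[Th. 2.19]{CFPrma18}. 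Your argument --- a nonzero $c$ with $\WHP\MM^{out}_z\WHP c=0$ would make $G^{out}_z\WHP c$ a nontrivial eigenfunction of $\WHH^{out}$ at $z$ --- handles the whole set $\rho(\WHH^{out})\cap\rho(\mathring H^{out})$ at once and avoids the external citation; likewise, your direct verification that $\mathring R^{out}_z f-G^{out}_z\WHP c$ (with the forced $c$) solves the resolvent equation in $D(\WHH^{out})$, plus uniqueness, is a legitimate substitute for the paper's appeal to \cite[Th. 2.1]{pos-om08}. Both are real improvements in self-containedness, \emph{provided} the boundary behavior of the defect functions is actually established.

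That is where the genuine gap lies. You justify $\WHP^\perp\Psi(\zero)=0$ by claiming that $G^{out}_z$ ``ranges in functions vanishing in the $\WHP^\perp$ directions at $\zero$, inherited from the $\Psi(\zero)=0$ condition defining $\mathring H^{out}$.'' This is false, and backwards: the defect functions inherit nothing of the Dirichlet condition --- if they did, the same reasoning would force them to vanish at $\zero$ in \emph{all} directions (the condition defining $\mathring H^{out}$ is the full $\Psi(\zero)=0$), and then the correction term $G^{out}_z\WHP(\cdots)$ could never change the boundary condition, contradicting the lemma itself. The true statement, and the actual computational content of the paper's proof (integration by parts, Eq.s \eqref{tripleout} and \eqref{sigmaG}), is the trace identity $\sigma^{out}G^{out}_z\uv=\uv$: the value of $G^{out}_z\uv$ at $\zero$ equals $\uv$ exactly, not zero. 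The condition $\WHP^\perp\Psi(\zero)=0$ then holds only because the formula feeds vectors of $\Ran\WHP$ into $G^{out}_z$, so that $\Psi(\zero)=-\sigma^{out}G^{out}_z\WHP c=-\WHP c$, which $\WHP^\perp$ annihilates. Your plan treats this identity as automatic, but it is precisely the step requiring work; moreover, it (together with the fact that defect functions satisfy the vertex conditions at $\VV^{out}$, i.e., that $\Ran G^{out}_z$ lies in the kernel of the \emph{adjoint of the symmetric restriction} minus $z$, not of $\mathring H^{out}-z$) is silently used again in your eigenfunction argument for invertibility. Without establishing it, both your identification step and your invertibility step are unsupported.
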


\begin{Lemma}\label{l:Rve} Let $\OD$  be the $2N\times 2N$ block matrix
\begin{equation}\label{Theta}
\OD = \begin{pmatrix}
\NULL_N & \ID_N \\
\ID_N & \NULL_N
\end{pmatrix}.
\end{equation}
Then, for any $z\in \rho(H^\ve) \cap \rho(\mathring H^\ve)$,  the map $(\MM^\ve_z- \OD) : \CO^{2N} \to \CO^{2N}$ is invertible and  
\[
R^\ve_z = \mathring{R}^\ve_z - G^\ve_z \big(\MM^\ve_z-\OD\big)^{-1}\breve{G}^\ve_z . 
\]
\end{Lemma}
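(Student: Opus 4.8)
The plan is to obtain the formula as a concrete instance of the abstract Kre\u{\i}n resolvent formula of Posilicano \cite{posilicano_jfa01,pos-om08}, taking as reference operator the decoupled self-adjoint operator $\mathring H^\ve = \diag(\mathring H^{out},\mathring H^{in,\ve})$ of Def.s \ref{d:mathringHout} and \ref{d:Hinve0}, as auxiliary trace space $\CO^{2N}$, and as trace map $\tau = \diag(\tau^{out},\tau^{in})$ from \eqref{tau-out}--\eqref{tau-in}. The symmetric operator fed into the abstract scheme is $S := \mathring H^\ve \restriction \Ker\tau$, i.e. $\mathring H^\ve$ restricted to those $\psi\in D(\mathring H^\ve)$ with $\Psi'(\zero)=0$ and vanishing averaged inner values at every connecting vertex. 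First I would verify the abstract hypotheses: $\tau$ is bounded from $\HH_2^\ve$ (hence from $D(\mathring H^\ve)$ with the graph norm) to $\CO^{2N}$, and it is surjective, since one can prescribe the $N$ outer derivatives at $\zero$ and the $N$ averaged inner values at the connecting vertices independently by suitable $H^2$ test functions supported near the relevant vertices. Consequently $S$ is densely defined, closed and symmetric with deficiency indices $(2N,2N)$, and $\breve G^\ve_z = \tau\mathring R^\ve_z$, $G^\ve_z = (\breve G^\ve_{\bar z})^*$ and $\MM^\ve_z = \tau G^\ve_z$ are precisely the $\gamma$-field, its adjoint and the associated Weyl ($M$-)function of the abstract scheme. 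By the diagonal structure of $\mathring H^\ve$ and $\tau$ all three are block diagonal, and by Rem. \ref{r:A1} the ranges of $G^{out}_z$ and $G^{in,\ve}_z$ lie in $\HH_2^{out}$ and $\HH_2^{in,\ve}$, so $\MM^\ve_z = \tau G^\ve_z$ is well defined.

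The heart of the proof is to identify $H^\ve$ with the self-adjoint extension of $S$ parametrized, in the scheme of \cite{posilicano_jfa01,pos-om08}, by the full projection $\Pi = \ID_{2N}$ and the self-adjoint parameter $\OD$ of \eqref{Theta}. Writing a generic element of that extension as $\psi = \phi + G^\ve_z\zeta$ with $\phi\in D(\mathring H^\ve)$ and $\zeta=(\zeta^{out},\zeta^{in})\in\CO^{2N}$, the abstract boundary relation reduces, component-wise, to the coupled equations $\tau^{out}\phi^{out} = \zeta^{in} - \MM^{out}_z\zeta^{out}$ and $\tau^{in}\phi^{in} = \zeta^{out} - \MM^{in,\ve}_z\zeta^{in}$, which is exactly how the off-diagonal $\OD$ enters. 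I would then unfold this using the boundary traces of the building blocks: the Dirichlet condition of $\mathring H^{out}$ at $\zero$ forces the outer value vector to equal $\zeta^{out}$ and the outer derivative vector to equal $\tau^{out}\phi^{out}+\MM^{out}_z\zeta^{out}$; on the inner side the averaged value at the connecting vertices equals $\tau^{in}\phi^{in}+\MM^{in,\ve}_z\zeta^{in}$, while the net vector of outgoing inner derivatives is carried entirely by $G^{in,\ve}_z\zeta^{in}$ and equals $-\zeta^{in}$. Substituting the two coupled equations then yields outer derivative vector $=\zeta^{in}$ and common inner value $=\zeta^{out}=$ outer value.

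Since the inner components are continuous at the connecting vertices by the $\KK^{in}_v$ conditions (cf. Rem. \ref{r:Kv}), the averaged value is a genuine common value, so these two identities say precisely that $\psi$ is continuous across each connecting vertex and that the total sum of outgoing derivatives there vanishes; that is, the Kirchhoff conditions defining $D(H^\ve)$ (Rem. \ref{r:Kv}), the conditions at $\VV^{out}$ and $\VV^{in}$ being inherited unchanged from $\mathring H^{out}$ and $\mathring H^{in,\ve}$. This establishes $H^\ve = A_{\ID_{2N},\OD}$, whereupon the abstract theorem of \cite{posilicano_jfa01,pos-om08} delivers both the invertibility of $(\MM^\ve_z-\OD)$ on $\rho(H^\ve)\cap\rho(\mathring H^\ve)$ and the stated resolvent identity.

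The step I expect to be the main obstacle is the bookkeeping in this last identification: one must pin down the exact boundary traces of $G^{out}_z\zeta^{out}$ and $G^{in,\ve}_z\zeta^{in}$ (values, derivatives, and the net derivative jumps), and in particular keep track of the $\sqrt{d^{in}(v_j)}$ normalizations built into $\tau^{in}$ in \eqref{tau-in}. These normalizations are exactly what make the coupling parameter the unweighted swap $\OD$ of \eqref{Theta} rather than a weighted version, and what guarantee the derivative sum rule closes with coefficient one. The verification of surjectivity of $\tau$ and of the deficiency indices, although routine, should likewise be recorded to justify applicability of the abstract scheme; all these details are deferred to App. \ref{a:krf}.
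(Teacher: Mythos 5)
Your proposal is correct and follows essentially the same route as the paper: the proof in App.~\ref{a:krf} likewise realizes $H^\ve$ as the Posilicano extension of $\mathring H^{\ve}\restriction_{\Ker \mathring\tau}$ parametrized by $(\ID_{2N},\OD)$ and carries out exactly the trace bookkeeping you describe, via the dual maps $\sigma^{out},\sigma^{in}$ and the Green identity \eqref{bt-iden} with explicit test functions --- in particular proving, rather than asserting, the continuity of $G^{in,\ve}_z\zeta^{in}$ at the connecting vertices (this does not follow from the $\KK^{in}_v$ conditions, since $G^{in,\ve}_z\zeta^{in}\notin D(\mathring H^{in,\ve})$), which is precisely one of the deferred details you flag. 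The only other point you gloss over is that the abstract theorem yields the formula for $\Im z\neq 0$, and the paper passes to all $z\in\rho(H^\ve)\cap\rho(\mathring H^\ve)$ by invoking \cite[Th. 2.19]{CFPrma18}.
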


We conclude this section with an alternative formula for the resolvent $R^\ve_z$. We refer to App. \ref{a:krf}, Section \ref{s:A2},  for the proof. 
\begin{Lemma}\label{l:Rve2}
Let $z\in\CO\backslash\RE$, then the maps ($N\times N$, $z$-dependent  matrices)
\begin{equation}\label{MMMM}
\MM^{in,\ve}_z\MM^{out}_z- \ID_N : \CO^N \to \CO^N \qquad \text{and} \qquad \MM^{out}_z\MM^{in,\ve}_z - \ID_N : \CO^N \to \CO^N
\end{equation}
are invertible. Moreover, 
\begin{equation}\label{Rve_main}
R^\ve_z = \mathring R^\ve_z - G^\ve_z 
\begin{pmatrix}
 \big( \MM^{in,\ve}_z\MM^{out}_z- \ID_N\big)^{-1} \MM^{in,\ve}_z & \big(\MM^{in,\ve}_z \MM^{out}_z- \ID_N\big)^{-1}\\  \\ 
 \big(\MM^{out}_z\MM^{in,\ve}_z - \ID_N\big)^{-1}  &  \MM^{out}_z \big(\MM^{in,\ve}_z\MM^{out}_z - \ID_N \big)^{-1}  
\end{pmatrix} \breve G^\ve_z.
\end{equation}
\end{Lemma}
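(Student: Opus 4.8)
The plan is to start from the Kre{\u\i}n formula of Lemma \ref{l:Rve} and to invert explicitly the block matrix $\MM^\ve_z-\OD$. First I would note that for $z\in\CO\backslash\RE$ both $H^\ve$ and $\mathring H^\ve$ are self-adjoint, so $z\in\rho(H^\ve)\cap\rho(\mathring H^\ve)$ and Lemma \ref{l:Rve} applies; in particular $\MM^\ve_z-\OD$ is invertible. By \eqref{M} and \eqref{Theta} this $2N\times 2N$ matrix reads
\[
\MM^\ve_z-\OD=\begin{pmatrix}\MM^{out}_z & -\ID_N \\ -\ID_N & \MM^{in,\ve}_z\end{pmatrix}.
\]
For brevity I would set $P:=\MM^{out}_z$ and $Q:=\MM^{in,\ve}_z$.

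Next I would reduce the invertibility of this block matrix to that of the two matrices in \eqref{MMMM}. A vector $(u,v)^T\in\CO^{2N}$ lies in $\Ker(\MM^\ve_z-\OD)$ if and only if $Pu=v$ and $Qv=u$, i.e. if and only if $v=Pu$ and $(QP-\ID_N)u=0$; hence $\Ker(\MM^\ve_z-\OD)$ is trivial exactly when $\Ker(QP-\ID_N)$ is trivial. Since the $N\times N$ matrices $PQ$ and $QP$ have the same characteristic polynomial (equivalently, if $(PQ-\ID_N)v=0$ with $v\neq0$ then $u:=Qv\neq0$ satisfies $(QP-\ID_N)u=0$, and symmetrically), one has $\Ker(QP-\ID_N)=\{0\}$ if and only if $\Ker(PQ-\ID_N)=\{0\}$. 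As all objects are finite matrices, invertibility of $\MM^\ve_z-\OD$ is therefore equivalent to the simultaneous invertibility of $QP-\ID_N=\MM^{in,\ve}_z\MM^{out}_z-\ID_N$ and $PQ-\ID_N=\MM^{out}_z\MM^{in,\ve}_z-\ID_N$. This establishes the first assertion of the Lemma.

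Finally I would compute the inverse by solving the four block equations coming from $(\MM^\ve_z-\OD)X=\ID_{2N}$, with $X=(X_{ij})$. These give $X_{21}=(PQ-\ID_N)^{-1}$, $X_{12}=(QP-\ID_N)^{-1}$, and then $X_{11}=QX_{21}=Q(PQ-\ID_N)^{-1}$ and $X_{22}=PX_{12}=P(QP-\ID_N)^{-1}$. To match $X_{11}$ with the entry displayed in \eqref{Rve_main} I would use the commutation identity $Q(PQ-\ID_N)=(QP-\ID_N)Q$, which yields $Q(PQ-\ID_N)^{-1}=(QP-\ID_N)^{-1}Q$, i.e. $X_{11}=(\MM^{in,\ve}_z\MM^{out}_z-\ID_N)^{-1}\MM^{in,\ve}_z$; the entry $X_{22}=\MM^{out}_z(\MM^{in,\ve}_z\MM^{out}_z-\ID_N)^{-1}$ is already in the displayed form. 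Substituting $(\MM^\ve_z-\OD)^{-1}=(X_{ij})$ into the identity $R^\ve_z=\mathring R^\ve_z-G^\ve_z(\MM^\ve_z-\OD)^{-1}\breve G^\ve_z$ of Lemma \ref{l:Rve} then produces \eqref{Rve_main}.

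The only delicate point, and the main obstacle, is the invertibility step: neither $\MM^{out}_z$ nor $\MM^{in,\ve}_z$ is assumed invertible, so one cannot form a naive Schur complement of a diagonal block. The kernel argument above bypasses this difficulty and, at the same time, proves the invertibility of both matrices in \eqref{MMMM} directly from the invertibility of $\MM^\ve_z-\OD$ supplied by Lemma \ref{l:Rve}.
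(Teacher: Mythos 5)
Your proof is correct, but it follows a genuinely different route from the paper's. For the invertibility of the two matrices in \eqref{MMMM}, the paper does not go through the invertibility of $\MM^\ve_z-\OD$ at all: it uses the fact that $\MM^{out}_z$ itself is invertible (Rem. \ref{r:invertibility}) and proves that $\MM^{in,\ve}_z-{\MM^{out}_z}^{-1}$ is injective via a Herglotz-type sign argument, $\Im\big(\uv,(\MM^{in,\ve}_z-{\MM^{out}_z}^{-1})\uv\big)_{\CO^N}=\Im z\,\big(\|G^{in,\ve}_z\uv\|^2+\|G^{out}_z{\MM^{out}_z}^{-1}\uv\|^2\big)\neq0$, which rests on the injectivity of the maps $G_z$ (Rem.s \ref{r:injectivity} and \ref{r:Minv}); it then obtains \eqref{Rve_main} from Lemma \ref{l:Rve} by the Schur-type block inversion formula written in terms of ${\MM^{out}_z}^{-1}$ and $\big(\MM^{in,\ve}_z-{\MM^{out}_z}^{-1}\big)^{-1}$, plus algebraic identities. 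You instead harvest invertibility of $\MM^\ve_z-\OD$ directly from Lemma \ref{l:Rve}, transfer it to $\MM^{in,\ve}_z\MM^{out}_z-\ID_N$ and $\MM^{out}_z\MM^{in,\ve}_z-\ID_N$ by an elementary kernel argument (using that $PQ$ and $QP$ have the same nonzero spectrum), and then solve the block equations $(\MM^\ve_z-\OD)X=\ID_{2N}$ directly, needing only the commutation identity $\MM^{in,\ve}_z(\MM^{out}_z\MM^{in,\ve}_z-\ID_N)^{-1}=(\MM^{in,\ve}_z\MM^{out}_z-\ID_N)^{-1}\MM^{in,\ve}_z$. Your approach is more elementary and, as you note, never uses invertibility of $\MM^{out}_z$ — though your closing remark slightly overstates the obstacle, since the paper \emph{does} form exactly that Schur complement, legitimately, because $\MM^{out}_z$ is invertible for $\Im z\neq0$. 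What the paper's route buys in exchange is that the object $\big(\MM^{in,\ve}_z-{\MM^{out}_z}^{-1}\big)^{-1}$ (in rescaled form, $\big(\widetilde\MM^{in,\ve}_z-\ve{\MM^{out}_z}^{-1}\big)^{-1}$, cf. Eq. \eqref{bone} and Prop. \ref{p:Nz}) is precisely the quantity whose small-$\ve$ expansion drives the proofs of the main theorems in Section \ref{s:5}, so proving the lemma in that form sets up the subsequent analysis; your derivation, while cleaner for the lemma in isolation, would still require introducing that decomposition later.
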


\section{\label{s:4}Scale invariance}
In this section we discuss the scale invariance properties of $\mathring H^{in,\ve}$ and  collect several  formulae concerning the operators $\mathring R^{in,\ve}_z$, $\breve G^{in,\ve}_z$, $G^{in,\ve}_z$, and $\MM^{in,\ve}_z$. 


Recall that we have denoted by $\lambda_n$ and $\{\varphi_n\}_{n\in\NA}$ the eigenvalues and a corresponding set of orthonormal eigenfunctions of $\mathring H^{in}$. 

The eigenvalues of $\mathring{H}^{in,\ve}$ (counting multiplicity)   and a corresponding set of orthonormal eigenfunctions are given by  
\begin{equation}\label{scaling}
\lambda^\ve_n = \ve^{-2} \lambda_n \;;\qquad \varphi_n^\ve(x) = 	\ve^{-1/2}  \varphi_n(x/\ve), 
\end{equation}
where $\la_n$ are the eigenvalues of $\mathring H^{in}$, and $\varphi_n$ the corresponding  (orthonormal)  eigenfunctions.

By the spectral theorem and by the scaling properties \eqref{scaling}, $\mathring R^{in,\ve}_z$ is given by 
\begin{equation}\label{Rinve}
\mathring R^{in,\ve}_z =  \sum_{n\in \NA} \frac{\varphi_n^\ve ( \varphi_n^\ve,\cdot)_{\HH^{in,\ve}}}{\lambda_n^\ve -z} = \ve^2 \sum_{n\in \NA} \frac{\varphi_n^\ve ( \varphi_n^\ve,\cdot)_{\HH^{in,\ve}}}{\lambda_n -\ve^2 z}. 
\end{equation}
Hence, its integral kernel can be written as  
\begin{equation}\label{get}
\mathring R^{in,\ve}_z(x,y) =\ve  \sum_{n\in \NA} \frac{\varphi_n(x/\ve) \varphi_n(y/\ve) }{\lambda_n -\ve^2 z} \qquad x,y \in \GG^{in,\ve}.  
\end{equation}

Since there exists a positive constant $C$ such that $\sup_{x\in\GG^{in}}|\varphi_n(x)|\leq C$   and  $\lambda_n \geq C n^2$ for $n$  large enough (see App. \ref{s:appB}),  the series in Eq. \eqref{get} is uniformly convergent for $x,y\in\GG^{in,\ve}$.  Hence,  we can write the operators $\breve G^{in,\ve}_z$ and $G^{in,\ve}_z$,  and the matrix $\MM^{in,\ve}_z$ in a similar way, see Eq.s \eqref{GGbrevein} and \eqref{Min} below. 

Note that, since functions in $D(\mathring H^{in,\ve})$ are continuous in the connecting vertices, the eigenfunctions  $\varphi_n^\ve$ can be evaluated in the connecting vertices, and,  by the definition of $\tau^{in}$ (see Eq. \eqref{tau-in}), one has  
\[ 
\tau^{in} \varphi_n^\ve = (\varphi^\ve_n(v_1),\dots,\varphi^\ve_n(v_N))^T.   
\]
So that,  for any eigenfunction $\varphi_n^\ve$ we can define the vector $\uc_n^\ve$ as
\[
\uc_n^\ve := \tau^{in} \varphi_n^\ve. 
\]
We note that  $ \uc_n^{\ve}= \ve^{-1/2} \uc_n $, with 
\[ 
\uc_n=\left(\varphi_n(v_1),\dots, \varphi_n(v_N)\right)^T,
\]
and  that the vectors $\uc_n$ are defined in the same way as the vectors $\uch_k$ in Eq. \eqref{uchk}. 

\begin{Remark}\label{r:4.2}
 In the Non-Generic Case, zero is an eigenvalue of $\mathring H^{in,\ve}$. We denote by $\{\hat \varphi_k^\ve\}_{k=1,\dots,m}$ the corresponding  set of (orthonormal) eigenfunctions given by $\hat \varphi_k^\ve(x) = \ve^{-1/2}\hat \varphi_k(x/\ve)$ where  $\hat \varphi_k$ are the eigenfunctions corresponding to the eigenvalue zero  of $\mathring H^{in}$. The vectors $\uch_k^\ve := \tau^{in}\hat \varphi_k^\ve$ are related to the vectors $\uch_k$ by the identity $\uch_k^\ve = \ve^{-1/2}\uch_k$.  
\end{Remark}

By the discussion above, and by the definitions \eqref{vGin}, \eqref{Gin}, and \eqref{Min0},  we obtain 
\begin{equation}\label{GGbrevein}
\breve G^{in,\ve}_z= \ve^{3/2} \sum_{n\in \NA} \frac{\uc_n (\varphi_n^\ve, \cdot)_{\HH^{in,\ve}}}{\lambda_n-\ve^2 z}\;;\qquad   
G^{in,\ve}_z=  \ve^{3/2} \sum_{n\in \NA} \frac{ \varphi_n^\ve (\uc_n, \cdot)_{\CO^N}}{\lambda_n - \ve^2 z},   
\end{equation}
and 
\begin{equation}\label{Min}
\MM^{in,\ve}_z= \ve   \sum_{n\in \NA} \frac{ \uc_n (\uc_n, \cdot)_{\CO^N}}{\lambda_n - \ve^2z}  .
\end{equation}

\section{\label{s:5}Proof of Theorems \ref{t:mainG} and \ref{t:mainNG}}

This section is devoted to the proofs of Th.s  \ref{t:mainG} and \ref{t:mainNG}. Actually, we shall prove a finer version of the results with more precise estimates of the remainders, see Th.s \ref{t:mainG2} and \ref{t:mainNG2} below. 
\begin{Remark}\label{r:RR} By Eq. \eqref{Rve_main}, it follows that,  in the $out/in$ decomposition \eqref{outindec}, the resolvent $R^\ve_z$ can be written as
\begin{equation}\label{RR0}
R^\ve_z = 
\begin{pmatrix}
\mathring R^{out}_z & \NULL \\
 \NULL &  \mathring R^{in,\ve}_z 
\end{pmatrix} 
-
\begin{pmatrix}
\RR_z^{out,out,\ve} & \RR_z^{out,in,\ve}\\ 
\RR_z^{in,out,\ve} & \RR_z^{in,in,\ve} 
\end{pmatrix} 
\end{equation}
with 
\begin{align}
\RR_z^{out,out,\ve}& = G^{out}_z \big( \MM^{in,\ve}_z\MM^{out}_z- \ID_N\big)^{-1} \MM^{in,\ve}_z \breve G^{out}_z; \label{RR1}\\ 
\RR_z^{in,out,\ve}  & = G^{in,\ve}_z \big(\MM^{out}_z\MM^{in,\ve}_z - \ID_N\big)^{-1} \breve G^{out}_z;\\ 
\RR_z^{out,in,\ve}  & = G^{out}_z  \big(\MM^{in,\ve}_z \MM^{out}_z- \ID_N\big)^{-1} \breve G^{in,\ve}_z;\\ 
\RR_z^{in,in,\ve}    & = G^{in,\ve}_z   \MM^{out}_z \big(\MM^{in,\ve}_z\MM^{out}_z - \ID_N \big)^{-1}  \breve G^{in,\ve}_z. \label{RR4}
\end{align}
Note that since $M_{z} = M_{\bar z}^* $  holds true both for the ``$out$'' and ``$in$'' $M$-matrices (see Eq. \eqref{MM}), one infers   $ \RR_{ z}^{in,out,\ve} = \RR_{\bar z}^{out,in,\ve*}$. 
\end{Remark}

\subsection{Generic Case. Proof of Th. \ref{t:mainG}.} In this section we study the limit of the relevant quantities in the Generic Case and prove Th. \ref{t:mainG}.

\begin{Proposition}\label{p:egg}Let $z\in\CO\backslash\RE$. In the Generic Case,  
\begin{equation}\label{buried0}
\mathring R_z^{in,\ve}= \OO_{\BB(\HH^{in,\ve})}(\ve^2);
\end{equation}
\begin{equation}\label{buried1}
\breve G^{in,\ve}_z =\OO_{\BB(\HH^{in,\ve},\CO^N)} (\ve^{3/2})\;;\qquad G^{in,\ve}_z =\OO_{\BB(\CO^N, \HH^{in,\ve})}(\ve^{3/2}).
\end{equation}
\end{Proposition}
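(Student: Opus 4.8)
The plan is to read everything off the spectral representation of $\mathring H^{in,\ve}$ and to exploit the single structural fact that characterizes the Generic Case: since $\lambda=0$ is not an eigenvalue and the spectrum of $\mathring H^{in}$ is discrete, the distance $\delta_0 := \mathrm{dist}\big(0,\sigma(\mathring H^{in})\big)$ is strictly positive, so $|\lambda_n|\geq \delta_0$ for every $n\in\NA$. Through the scaling relation \eqref{scaling} this becomes $|\lambda_n^\ve| = |\lambda_n|/\ve^2 \geq \delta_0/\ve^2$, i.e.\ the \emph{whole} spectrum of $\mathring H^{in,\ve}$ escapes to infinity as $\ve\to0$. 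This is the mechanism behind all three estimates, and it is exactly the point where the Generic hypothesis enters (in the Non-Generic Case the summand with $\lambda_n=0$ would contribute a term of order $\ve^0$).

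To prove \eqref{buried0} I would use that $\mathring H^{in,\ve}$ is self-adjoint with the orthonormal eigenbasis $\{\varphi_n^\ve\}_{n\in\NA}$, so that the operator norm of its resolvent is
\[
\big\|\mathring R^{in,\ve}_z\big\|_{\BB(\HH^{in,\ve})} = \sup_{n\in\NA}\frac{1}{|\lambda_n^\ve - z|}.
\]
For $\ve$ small enough that $\delta_0/\ve^2 \geq 2|z|$ one has $|\lambda_n^\ve - z| \geq |\lambda_n^\ve| - |z| \geq \delta_0/(2\ve^2)$ for every $n$, whence $\|\mathring R^{in,\ve}_z\| \leq 2\ve^2/\delta_0 = \OO_{\BB(\HH^{in,\ve})}(\ve^2)$.

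For the bounds \eqref{buried1} I would work from the series representation \eqref{GGbrevein}. Given $\psi\in\HH^{in,\ve}$, writing $a_n := (\varphi_n^\ve,\psi)_{\HH^{in,\ve}}$ (so $\sum_n|a_n|^2 = \|\psi\|^2$) and applying the triangle inequality followed by Cauchy--Schwarz in $n$ gives
\[
\big\|\breve G^{in,\ve}_z \psi\big\|_{\CO^N} \leq \ve^{3/2}\Big(\sum_{n\in\NA}|a_n|^2\Big)^{1/2}\Big(\sum_{n\in\NA}\frac{\|\uc_n\|_{\CO^N}^2}{|\lambda_n-\ve^2 z|^2}\Big)^{1/2}.
\]
Two ingredients from Appendix \ref{s:appB} then close the estimate: the uniform eigenfunction bound $\sup_x|\varphi_n(x)|\leq C$ gives $\|\uc_n\|_{\CO^N}^2 = \sum_{j=1}^N|\varphi_n(v_j)|^2 \leq N C^2$, and the growth $\lambda_n\geq Cn^2$ (valid for large $n$) makes the tail of $\sum_n|\lambda_n-\ve^2 z|^{-2}$ summable, while the gap $|\lambda_n|\geq\delta_0$ keeps the finitely many low modes under control; hence $\sum_n|\lambda_n-\ve^2 z|^{-2}\leq C$ uniformly for small $\ve$. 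This yields $\|\breve G^{in,\ve}_z\|\leq C\ve^{3/2}$, and the bound on $G^{in,\ve}_z$ follows at once from the adjoint identity $G^{in,\ve}_z = \big(\breve G^{in,\ve}_{\bar z}\big)^*$ together with $|\bar z|=|z|$ (or, equivalently, by Parseval applied directly to the second series in \eqref{GGbrevein}).

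The only genuinely non-routine point is verifying that $\sum_n|\lambda_n-\ve^2 z|^{-2}$ is bounded \emph{uniformly} in $\ve$; this is where the discreteness of the spectrum, the spectral gap at $0$, and the a priori estimates of Appendix \ref{s:appB} all come into play. Everything else is extracting the declared powers of $\ve$ from the explicit prefactors $\ve^2$ and $\ve^{3/2}$ in \eqref{Rinve} and \eqref{GGbrevein}.
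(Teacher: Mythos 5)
Your proposal is correct and follows essentially the same route as the paper: the spectral representation \eqref{Rinve}--\eqref{GGbrevein}, the Generic-Case gap $|\lambda_n|\geq\delta_0>0$ (which the paper phrases as $|\lambda_n-\ve^2 z|\geq|\lambda_n|/2\geq C$ for small $\ve$), Cauchy--Schwarz in $n$, the Appendix~\ref{s:appB} bounds $\|\uc_n\|_{\CO^N}^2\leq C$ and $\sum_n|\lambda_n|^{-2}<\infty$, and the adjoint identity for $G^{in,\ve}_z$. The only cosmetic difference is that you bound the resolvent norm by $\sup_n|\lambda_n^\ve-z|^{-1}$ via the spectral theorem, whereas the paper runs the same estimate through the Parseval expansion; the content is identical.
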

\begin{proof}
We prove first Claim \eqref{buried0}. For any $\psi^{in}\in\HH^{in,\ve}$, since $\{\varphi_n^\ve\}_{n\in\NA}$ is an orthonormal set of eigenfunctions in $\HH^{in,\ve}$, and by Eq.  \eqref{Rinve}, we infer 
\[
\|\mathring R_z^{in,\ve} \psi^{in}\|_{\HH^{in,\ve}} = \ve^2 \left(\sum_{n\in \NA} \frac{|(\varphi_n^\ve, \psi^{in})_{\HH^{in,\ve}}|^2}{|\la_n -\ve^2 z|^2}\right)^{1/2} \leq C\ve^2 \|\psi^{in}\|_{\HH^{in,\ve}}, 
\]
where in the latter inequality we used the bound  $|\la_n -\ve^2 z|^{-2} \leq 4| \la_n |^{-2} \leq C$, which holds true in the Generic Case because $|\la_n -\ve^2 z|\geq |\la_n|/2\geq C$ for all $n\in\NA$ and $\ve$ small enough.   

To prove the first claim in Eq. \eqref{buried1}, let $\psi^{in}\in \HH^{in,\ve}$, then  
\[
\breve G^{in,\ve}_z \psi^{in}= \ve^{3/2} \sum_{n\in \NA} \frac{\uc_n (\varphi_n^\ve, \psi^{in})_{\HH^{in,\ve}}}{\lambda_n-\ve^2 z}.
\]
Hence, from the Cauchy-Schwarz inequality, 
\[\begin{aligned}
\|\breve G^{in,\ve}_z\psi^{in}\|_{\CO^N} \leq &\ve^{3/2} \sum_{n\in \NA} \frac{\|\uc_n\|_{\CO^N}\, |(\varphi_n^\ve, \psi^{in})_{\HH^{in,\ve}}|}{|\lambda_n-\ve^2 z|} \\ 
 \leq & \ve^{3/2} \|\psi^{in}\|_{\HH^{in,\ve}}  \left( \sum_{n\in \NA} \frac{\|\uc_n\|_{\CO^N}^2 }{| \lambda_n -\ve^2 z|^2}\right)^{1/2}  \leq C\,  \ve^{3/2}  \|\psi^{in}\|_{\HH^{in,\ve}},
\end{aligned}\]
because $\|\uc_n\|_{\CO^N}^2\leq C $ and $  \sum_{n\in \NA} | \lambda_n -\ve^2 z|^{-2}\leq  C \sum_{n\in \NA} | \lambda_n |^{-2} \leq C$. This proves the first  Claim in Eq.  \eqref{buried1}; the second  one  is trivial, being $G^{in,\ve}_z$ the adjoint of $\breve G^{in,\ve}_{\bar z}$. 
\end{proof}

\begin{Proposition}\label{p:Mgen} Let $z\in\CO\backslash\RE$. In the Generic Case,  
\begin{equation}\label{buried1.1}
\MM^{in,\ve}_z =\OO_{\BB(\CO^N)}(\ve).
\end{equation}
\end{Proposition}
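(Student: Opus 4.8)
The plan is to estimate the operator norm of $\MM^{in,\ve}_z$ directly from its spectral representation \eqref{Min}, following the same strategy used in the proof of Proposition \ref{p:egg}. Starting from
\[
\MM^{in,\ve}_z = \ve \sum_{n\in\NA} \frac{\uc_n(\uc_n,\cdot)_{\CO^N}}{\lambda_n - \ve^2 z},
\]
I would factor out the explicit prefactor $\ve$ and bound the norm of the remaining series by the triangle inequality, observing that each summand is a rank-one operator on $\CO^N$ whose operator norm equals $\|\uc_n\|_{\CO^N}^2/|\lambda_n - \ve^2 z|$.

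The two ingredients are then the uniform bound $\|\uc_n\|_{\CO^N}^2 \leq C$, which follows from $\sup_{x\in\GG^{in}}|\varphi_n(x)| \leq C$ (see App.~\ref{s:appB}) since the components of $\uc_n$ are the values $\varphi_n(v_j)$ at the connecting vertices, and the lower bound $|\lambda_n - \ve^2 z| \geq |\lambda_n|/2$, valid in the Generic Case for $\ve$ small enough precisely because $0$ is not an eigenvalue of $\mathring H^{in}$, so that $\inf_n |\lambda_n| > 0$. Combining these gives
\[
\|\MM^{in,\ve}_z\|_{\BB(\CO^N)} \leq \ve \sum_{n\in\NA} \frac{\|\uc_n\|_{\CO^N}^2}{|\lambda_n - \ve^2 z|} \leq C\ve \sum_{n\in\NA} \frac{1}{|\lambda_n|},
\]
and the remaining series converges because $\lambda_n \geq C n^2$ for $n$ large (again App.~\ref{s:appB}), so that $\sum_n |\lambda_n|^{-1}$ is dominated by a convergent multiple of $\sum_n n^{-2}$ up to the finitely many (nonzero) terms with small $n$. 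This yields $\|\MM^{in,\ve}_z\|_{\BB(\CO^N)} \leq C\ve$, which is exactly the claim.

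I do not expect a genuine obstacle here: the estimate is a softer version of the one for $\breve{G}^{in,\ve}_z$ in Proposition \ref{p:egg}, with the Cauchy--Schwarz step replaced by a plain triangle inequality, and convergence is even easier to secure since only $\sum_n |\lambda_n|^{-1}$ (rather than $\sum_n |\lambda_n|^{-2}$) is needed. The only point that genuinely uses the Generic hypothesis is the uniform separation of the denominators from zero, i.e.\ $\inf_n |\lambda_n| > 0$; this is precisely what fails in the Non-Generic Case, where the term corresponding to $\lambda = 0$ is not damped and produces the leading contribution treated separately there.
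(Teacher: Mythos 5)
Your proof is correct and follows essentially the same route as the paper: the paper also starts from Eq.~\eqref{Min}, bounds $\|\MM^{in,\ve}_z\uv\|_{\CO^N}$ via the triangle inequality and $|(\uc_n,\uv)_{\CO^N}|\leq\|\uc_n\|_{\CO^N}\|\uv\|_{\CO^N}$ (equivalent to your rank-one operator-norm observation), and then concludes using $\|\uc_n\|_{\CO^N}^2\leq C$ and $\sum_{n\in\NA}|\lambda_n-\ve^2 z|^{-1}\leq C\sum_{n\in\NA}|\lambda_n|^{-1}\leq C$, which rests on exactly the two ingredients you cite from Appendix~\ref{s:appB} and on the Generic-Case separation $|\lambda_n-\ve^2 z|\geq|\lambda_n|/2$. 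Your closing remark correctly isolates where the Generic hypothesis enters and why the argument breaks in the Non-Generic Case.
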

\begin{proof}
Recall Eq.  \eqref{Min} and note that for any $\uv\in\CO^N$,  
\[
\| \MM^{in,\ve}_z \uv\|_{\CO^N } 
\leq \ve \sum_{n\in \NA} \frac{\|\uc_n\|_{\CO^N} \, |(\uc_n, \uv)_{\CO^N}|}{|\lambda_n -\ve^2 z|}  \leq \ve \|\uv\|_{\CO^N} \sum_{n\in \NA}\frac{\|\uc_n\|_{\CO^N}^2}{|\lambda_n -\ve^2 z|}  \leq C \ve \|\uv\|_{\CO^N},
\]
because $\|\uc_n\|_{\CO^N}^2\leq C $ and $  \sum_{n\in \NA} | \lambda_n -\ve^2 z|^{-1}\leq  C \sum_{n\in \NA} | \lambda_n |^{-1} \leq C$.
\end{proof}

\begin{Theorem}\label{t:mainG2}Let $z\in\CO\backslash\RE$. In the Generic Case
\begin{equation}\label{expgen}
R^\ve_z = 
\begin{pmatrix}
\mathring R^{out}_z  + \OO_{\BB(\HH^{out})}(\ve) & \OO_{\BB(\HH^{in,\ve},\HH^{out)}}(\ve^{3/2}) \\
\OO_{\BB(\HH^{out},\HH^{in,\ve})}(\ve^{3/2})& \OO_{\BB(\HH^{in,\ve})}(\ve^2) 
\end{pmatrix},
\end{equation}
where the expansion has to be understood in the  $out/in$ decomposition \eqref{outindec}. 
\end{Theorem}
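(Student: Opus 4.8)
The plan is to read the block structure of $R^\ve_z$ directly off the decomposition \eqref{RR0}--\eqref{RR4} established in Remark \ref{r:RR}, which already writes $R^\ve_z$ as the diagonal part $\diag(\mathring R^{out}_z,\mathring R^{in,\ve}_z)$ minus four correction blocks $\RR_z^{u,v,\ve}$, and then to bound those four blocks one by one using the small-$\ve$ estimates of Propositions \ref{p:egg} and \ref{p:Mgen}. Throughout I would exploit that the three ``$out$'' objects $G^{out}_z$, $\breve G^{out}_z$, and $\MM^{out}_z$ are bounded and do not depend on $\ve$, since they are built from the $\ve$-independent Hamiltonian $\mathring H^{out}$; hence they contribute only $\ve$-independent constants to every norm estimate.

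The first step is to check that the two inverse matrices occurring in \eqref{RR1}--\eqref{RR4} are bounded uniformly in $\ve$ for $\ve$ small. By Proposition \ref{p:Mgen} one has $\MM^{in,\ve}_z=\OO_{\BB(\CO^N)}(\ve)$, and since $\MM^{out}_z$ is a fixed bounded matrix, both products $\MM^{in,\ve}_z\MM^{out}_z$ and $\MM^{out}_z\MM^{in,\ve}_z$ are $\OO_{\BB(\CO^N)}(\ve)$. The matrices $\MM^{in,\ve}_z\MM^{out}_z-\ID_N$ and $\MM^{out}_z\MM^{in,\ve}_z-\ID_N$ therefore converge in norm to $-\ID_N$, so that their inverses, which exist for every $z\in\CO\backslash\RE$ by Lemma \ref{l:Rve2}, are in addition uniformly bounded for $\ve$ small: a Neumann series gives $\big(\MM^{in,\ve}_z\MM^{out}_z-\ID_N\big)^{-1}=\OO_{\BB(\CO^N)}(1)$, and likewise for the other combination.

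With this in hand, every correction block is a product of $\ve$-independent bounded factors, uniformly bounded inverse matrices, and the genuinely small factors $\MM^{in,\ve}_z$, $\breve G^{in,\ve}_z$, $G^{in,\ve}_z$, so the estimates reduce to counting orders. The block $\RR_z^{out,out,\ve}$ carries exactly one factor $\MM^{in,\ve}_z=\OO(\ve)$, hence is $\OO_{\BB(\HH^{out})}(\ve)$; the off-diagonal blocks $\RR_z^{out,in,\ve}$ and $\RR_z^{in,out,\ve}$ each carry one factor $\breve G^{in,\ve}_z$ or $G^{in,\ve}_z=\OO(\ve^{3/2})$, hence are $\OO(\ve^{3/2})$ (the adjoint identity $\RR_z^{in,out,\ve}=\RR_{\bar z}^{out,in,\ve*}$ from Remark \ref{r:RR} makes the two estimates consistent); and $\RR_z^{in,in,\ve}$ carries both $G^{in,\ve}_z$ and $\breve G^{in,\ve}_z$, hence is $\OO(\ve^{3})$. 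Adding back the diagonal terms, the $out,out$ entry is $\mathring R^{out}_z+\OO(\ve)$, while, using $\mathring R^{in,\ve}_z=\OO_{\BB(\HH^{in,\ve})}(\ve^2)$ from Proposition \ref{p:egg}, the $in,in$ entry is $\mathring R^{in,\ve}_z-\RR_z^{in,in,\ve}=\OO(\ve^2)$. This is precisely \eqref{expgen}.

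The analytic substance has already been absorbed into Propositions \ref{p:egg} and \ref{p:Mgen}, where the Generic Case hypothesis (that $\lambda=0$ is not an eigenvalue of $\mathring H^{in}$, so that $\inf_n|\lambda_n|>0$) is what keeps the spectral series uniformly convergent and the denominators $|\lambda_n-\ve^2 z|$ bounded below. Given those bounds, the only new point in the present proof is the uniform invertibility of the two $M$-matrix combinations, and I expect this to be the sole mild obstacle; the remaining order-counting is routine bookkeeping.
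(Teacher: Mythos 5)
Your proposal is correct and follows essentially the same route as the paper: both read the four correction blocks off Remark \ref{r:RR}, establish uniform boundedness of the inverse matrices $\big(\MM^{in,\ve}_z\MM^{out}_z-\ID_N\big)^{-1}$ (and its counterpart) from Proposition \ref{p:Mgen} plus the $\ve$-independence of $\MM^{out}_z$, and then count orders using Proposition \ref{p:egg}, the adjoint identity for the $in/out$ block, and the bound $\mathring R^{in,\ve}_z=\OO_{\BB(\HH^{in,\ve})}(\ve^2)$. The only difference is cosmetic: you spell out the Neumann-series argument for the uniform invertibility, which the paper states without elaboration.
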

\begin{proof}
Note that $\big( \MM^{in,\ve}_z\MM^{out}_z- \ID_N\big)^{-1} = \OO_{\CO^N}(1)$ by Eq. \eqref{buried1.1} and because $\MM^{out}_z$ is bounded and does not depend on $\ve$. Hence,  $\big( \MM^{in,\ve}_z\MM^{out}_z- \ID_N\big)^{-1} \MM^{in,\ve}_z = \OO_{\CO^N}(\ve)$.\\ 
 To conclude,  by Eq.s \eqref{RR1} - \eqref{RR4}, and by expansions \eqref{buried1}, we infer: $\RR_z^{out,out,\ve} = \OO_{\BB(\HH^{out})}(\ve)$; $\RR_z^{out,in,\ve} =  \OO_{\BB(\HH^{in,\ve},\HH^{out)}}(\ve^{3/2})$;  $\RR_z^{in,out,\ve} = \OO_{\BB(\HH^{out},\HH^{in,\ve})}(\ve^{3/2})$  (this is obvious since it is the adjoint of  $\RR_{\bar z}^{out,in,\ve} $); and $\RR_z^{in,in,\ve}    =  \OO_{\BB(\HH^{in,\ve})}(\ve^3) $. 
 
 Expansion \eqref{expgen} follows by taking into account the bound \eqref{buried0}, and from Rem. \ref{r:RR}. 
\end{proof}

Th. \ref{t:mainG} is a direct consequence of Th. \ref{t:mainG2}. 

\subsection{Non-Generic Case. Proof of Th. \ref{t:mainNG}.} In this section we study the limit of the relevant quantities in the Non-Generic Case  and prove Th. \ref{t:mainNG}.

Recall that, in the Non-Generic Case, $\{\hat \varphi_k^\ve\}_{k=1,\dots,m}$ denotes a set of orthonormal eigenfunctions corresponding to the zero eigenvalue, see also Rem. \ref{r:4.2}. 

\begin{Proposition}Let $z\in\CO\backslash\RE$.  In the Non-Generic Case
\begin{align}
\mathring R_z^{in,\ve} = & -\sum_{k=1}^m   \frac{\hat \varphi_k^\ve(\hat \varphi_k^\ve, \cdot)_{\HH^{in,\ve}} }{z}  + \OO_{\BB(\HH^{in,\ve})}(\ve^2); \label{buried3.0} \\ 
\breve G^{in,\ve}_z   = &-\sum_{k=1}^m \frac{\uch_k (\hat \varphi_k^\ve, \cdot)_{\HH^{in,\ve}}}{\ve^{1/2} z} +  \OO_{\BB(\HH^{in,\ve},\CO^N)} (\ve^{3/2}); \label{buried3}
\\
\qquad
G^{in,\ve}_z = & -\sum_{k=1}^m\frac{ \hat \varphi_k^\ve(\uch_k, \cdot)_{\CO^N}}{\ve^{1/2} z}  + \OO_{\BB(\CO^N, \HH^{in,\ve})}(\ve^{3/2}).
\label{buried4}
\end{align}
\end{Proposition}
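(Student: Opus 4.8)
The plan is to extract the singular part of each of the three series expansions from Section~\ref{s:4} by isolating the contribution of the zero eigenvalue and bounding the rest. In the Non-Generic Case the eigenvalues split into those equal to zero, indexed by $k=1,\dots,m$, and those bounded away from zero. First I would recall the explicit series in Eq.s~\eqref{Rinve}, \eqref{GGbrevein}, and \eqref{Min}, and split each sum as $\sum_{n\in\NA} = \sum_{n:\,\lambda_n=0} + \sum_{n:\,\lambda_n\neq 0}$. For the zero-eigenvalue terms I use $\lambda_n - \ve^2 z = -\ve^2 z$, relabel the corresponding eigenfunctions $\varphi_n^\ve \to \hat\varphi_k^\ve$ and the corresponding vectors $\uc_n \to \uch_k$ (with $\uch_k^\ve = \ve^{-1/2}\uch_k$, see Rem.~\ref{r:4.2}), and read off the leading singular terms. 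The tail sum over nonzero eigenvalues is then estimated exactly as in the Generic Case.

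Concretely, for Eq.~\eqref{buried3.0} I would write, from Eq.~\eqref{Rinve},
\begin{equation*}
\mathring R_z^{in,\ve} = \ve^2 \sum_{k=1}^m \frac{\hat\varphi_k^\ve(\hat\varphi_k^\ve,\cdot)_{\HH^{in,\ve}}}{-\ve^2 z} + \ve^2 \sum_{n:\,\lambda_n\neq 0} \frac{\varphi_n^\ve(\varphi_n^\ve,\cdot)_{\HH^{in,\ve}}}{\lambda_n - \ve^2 z},
\end{equation*}
the first term collapsing to $-z^{-1}\sum_{k=1}^m \hat\varphi_k^\ve(\hat\varphi_k^\ve,\cdot)_{\HH^{in,\ve}}$. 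For the remainder, since there are finitely many zero eigenvalues and the remaining $\lambda_n$ are bounded away from zero, the bound $|\lambda_n - \ve^2 z|^{-2}\leq C$ holds uniformly over the tail, and the orthonormality of $\{\varphi_n^\ve\}$ gives $\OO_{\BB(\HH^{in,\ve})}(\ve^2)$ by exactly the computation in the proof of Prop.~\ref{p:egg}. The same splitting applied to $\breve G^{in,\ve}_z = \ve^{3/2}\sum_n \uc_n(\varphi_n^\ve,\cdot)_{\HH^{in,\ve}}/(\lambda_n-\ve^2 z)$ isolates the singular part $\ve^{3/2}\sum_{k=1}^m \uch_k(\hat\varphi_k^\ve,\cdot)/(-\ve^2 z) = -\ve^{-1/2}z^{-1}\sum_{k=1}^m \uch_k(\hat\varphi_k^\ve,\cdot)_{\HH^{in,\ve}}$, matching Eq.~\eqref{buried3}, with the tail giving $\OO(\ve^{3/2})$ via Cauchy--Schwarz and $\|\uc_n\|^2\leq C$, $\sum_{n:\,\lambda_n\neq 0}|\lambda_n|^{-2}\leq C$.

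For Eq.~\eqref{buried4} I would simply note that $G^{in,\ve}_z$ is the $\bar z$-adjoint of $\breve G^{in,\ve}_{\bar z}$, so the expansion follows by taking adjoints of Eq.~\eqref{buried3} (using $\bar{-1/(\bar z)} = -1/z$ and that the $\uch_k$ are real vectors, so $(\uch_k,\cdot)_{\CO^N}$ dualizes correctly); alternatively one reads it off directly from the second series in Eq.~\eqref{GGbrevein}. The only point requiring care is bookkeeping the powers of $\ve$: the prefactor $\ve^{3/2}$ combined with the $-\ve^2 z$ denominator produces the $\ve^{-1/2}$ in the singular term, so one must verify the net power is $\ve^{3/2-2} = \ve^{-1/2}$ and that this is genuinely the leading behavior. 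I do not anticipate a serious obstacle here; the main thing to get right is the uniform tail estimate, which hinges on the spectral gap between the zero eigenvalue and the smallest nonzero $|\lambda_n|$ together with the bounds $\sup_x|\varphi_n(x)|\leq C$ and $\lambda_n\geq Cn^2$ recalled before Eq.~\eqref{get}, ensuring the series over nonzero eigenvalues converges with an $\ve$-independent bound.
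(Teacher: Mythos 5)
Your proposal is correct and follows essentially the same route as the paper's proof: split the spectral series \eqref{Rinve} and \eqref{GGbrevein} into the zero-eigenvalue part (which yields the singular terms after cancelling $\ve$ powers against $-\ve^2 z$) and the tail over $\lambda_n\neq 0$, which is bounded exactly as in Prop.~\ref{p:egg} using $|\lambda_n-\ve^2 z|\geq|\lambda_n|/2\geq C$, and obtain \eqref{buried4} by taking the adjoint of \eqref{buried3} in $\bar z$ (note the reality of the $\uch_k$ is not actually needed, since the adjoint of $\uch_k(\hat\varphi_k^\ve,\cdot)_{\HH^{in,\ve}}$ is $\hat\varphi_k^\ve(\uch_k,\cdot)_{\CO^N}$ regardless). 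No gaps; this matches the paper's argument step for step.
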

\begin{proof}
We prove first Claim \eqref{buried3.0}. By Eq. \eqref{Rinve} we infer 
\begin{equation}\label{rat}
\mathring R^{in,\ve}_z = - \sum_{k=1}^m \frac{\hat \varphi_k^\ve ( \hat \varphi_k^\ve,\cdot)_{\HH^{in,\ve}}}{z} + 
\ve^2 \sum_{n:\lambda_n\neq0} \frac{\varphi_n^\ve ( \varphi_n^\ve,\cdot)_{\HH^{in,\ve}}}{\lambda_n -\ve^2 z}.
\end{equation}
Note that the second sum runs over  $\lambda_n\neq0$, hence one has the bound $|\la_n -\ve^2 z|\geq |\la_n|/2\geq C$, for $\ve$ small enough. For this reason, the bound in Eq. \eqref{buried3.0} on the second term at the r.h.s. of Eq. \eqref{rat} can be obtained with an argument similar to the one used in the proof of bound \eqref{buried0}. 

To prove Claim \eqref{buried3} we proceed in a similar way. We note that, see Eq. \eqref{GGbrevein},  
\[
\breve G^{in,\ve}_z = -\sum_{k=1}^m \frac{\uch_k (\hat \varphi_k^\ve, \cdot)_{\HH^{in,\ve}}}{\ve^{1/2} z}  + \ve^{3/2} \sum_{n:\la_n\neq 0 } \frac{\uc_n (\varphi_n^\ve, \cdot)_{\HH^{in,\ve}}}{\la_n-\ve^2 z},
\]
and bound the second term at the r.h.s.  by reasoning in the same way as in the proof of Prop. \ref{p:egg}. Claim \eqref{buried4} follows by noticing that $G^{in,\ve}_z$ is the adjoint of  $ \breve G^{in,\ve}_{\bar z}$.
\end{proof}

Next we prove a proposition on the expansion of the $N\times N$, $z$-dependent matrix $\MM_z^{in,\ve}$. Recall that $\WHC$ was defined in Def. 
\ref{d:WHP}.
\begin{Proposition}Let $z\in\CO\backslash\RE$. In the Non-Generic Case, 
\begin{equation}\label{inverse0}
\MM_z^{in,\ve}= -  \frac{1}{\ve z }\WHC+  \OO_{\BB(\CO^N)}(\ve).  
\end{equation}
\end{Proposition}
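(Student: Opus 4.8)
The plan is to work directly from the explicit series representation \eqref{Min},
\[
\MM^{in,\ve}_z = \ve \sum_{n\in\NA} \frac{\uc_n(\uc_n,\cdot)_{\CO^N}}{\lambda_n - \ve^2 z},
\]
and to isolate the contribution of the zero eigenvalue --- the only term whose coefficient $\ve/(\lambda_n - \ve^2 z)$ fails to stay bounded as $\ve\to0$ --- from the tail running over the nonzero eigenvalues. Since $\{\hat\varphi_k\}_{k=1,\dots,m}$ are precisely the orthonormal eigenfunctions of $\mathring H^{in}$ associated to $\lambda = 0$, the indices $n$ with $\lambda_n = 0$ are exactly those for which $\uc_n = \uch_k$, $k=1,\dots,m$ (recall that $\uc_n$ and $\uch_k$ are built in the same way from the values of the eigenfunctions at the connecting vertices).

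First I would split the sum as
\[
\MM^{in,\ve}_z = \ve \sum_{k=1}^m \frac{\uch_k(\uch_k,\cdot)_{\CO^N}}{-\ve^2 z} + \ve \sum_{n:\lambda_n\neq0} \frac{\uc_n(\uc_n,\cdot)_{\CO^N}}{\lambda_n - \ve^2 z}.
\]
Evaluating $\lambda_n = 0$ in the first sum and factoring out $-(\ve z)^{-1}$ gives $-\tfrac{1}{\ve z}\sum_{k=1}^m \uch_k(\uch_k,\cdot)_{\CO^N} = -\tfrac{1}{\ve z}\WHC$ by the definition of $\WHC$ in Def.~\ref{d:WHP}; this produces the leading singular term in \eqref{inverse0}.

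It then remains to show that the tail is $\OO_{\BB(\CO^N)}(\ve)$, and this is the very same estimate used in the proof of Prop.~\ref{p:Mgen}. For any $\uv\in\CO^N$ the Cauchy--Schwarz inequality yields
\[
\Big\| \ve \sum_{n:\lambda_n\neq0} \frac{\uc_n(\uc_n,\uv)_{\CO^N}}{\lambda_n-\ve^2z} \Big\|_{\CO^N} \leq \ve\,\|\uv\|_{\CO^N} \sum_{n:\lambda_n\neq0} \frac{\|\uc_n\|_{\CO^N}^2}{|\lambda_n - \ve^2 z|},
\]
and one uses $\|\uc_n\|_{\CO^N}^2 \leq C$ (because $\sup_{x}|\varphi_n(x)| \leq C$) together with the bound $|\lambda_n - \ve^2 z| \geq |\lambda_n|/2$ for $\ve$ small, valid across the whole tail since the nonzero eigenvalues are bounded away from $0$ and $\lambda_n \geq Cn^2$ for $n$ large, so that $\sum_{n:\lambda_n\neq0} |\lambda_n|^{-1}$ converges. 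Combining the two pieces gives \eqref{inverse0}.

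I do not expect a genuine obstacle here: the argument mirrors the Generic Case almost verbatim, the only new ingredient being the algebraic identification of the $\lambda_n = 0$ block with $\WHC$. The one point requiring minor care is to confirm that the finitely many nonzero eigenvalues that may be small are nonetheless bounded below in modulus by a positive constant, so that the uniform estimate $|\lambda_n - \ve^2 z| \geq |\lambda_n|/2$ indeed holds over the entire tail for $\ve$ small enough.
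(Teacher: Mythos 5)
Your proposal is correct and follows essentially the same route as the paper's proof: split the series \eqref{Min} into the zero-eigenvalue block, which by definition of $\WHC$ equals $-\tfrac{1}{\ve z}\WHC$, and the tail over $\lambda_n\neq0$, which is then bounded by $\OO_{\BB(\CO^N)}(\ve)$ exactly as in Prop.~\ref{p:Mgen} (Cauchy--Schwarz, $\|\uc_n\|_{\CO^N}^2\leq C$, and summability of $|\lambda_n|^{-1}$ from the Weyl-type bound of App.~\ref{s:appB}). Your final remark about the nonzero eigenvalues being bounded away from zero is also the same observation the paper makes implicitly when invoking $|\lambda_n-\ve^2 z|\geq|\lambda_n|/2\geq C$ for $\ve$ small.
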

\begin{proof}
The claim  immediately follows  from Eq. \eqref{Min}, after noticing that 
\[
\MM_z^{in,\ve} =  -  \frac{1}{ \ve z}\WHC  + 
  \ve \sum_{n:\lambda_n\neq 0 } \frac{ \uc_n (\uc_n, \cdot)_{\CO^N}}{\la_n- \ve^2 z}  
\]
and by treating the second term at the r.h.s. with argument similar to the one used in the proof of Prop. \ref{p:Mgen}. 
\end{proof}

We set 
\[\widetilde \MM_z^{in,\ve} := \ve  \MM_z^{in,\ve}
\]
and recall that  $\MM_z^{out}$ is invertible (see Rem. \ref{r:invertibility}), then 
\begin{equation}\label{bone}
 (\MM_z^{in,\ve}\MM^{out}_z -\ID_N)^{-1} =  \ve  {\MM^{out}_z}^{-1}(\widetilde \MM_z^{in,\ve}-\ve {\MM^{out}_z}^{-1})^{-1} . 
\end{equation}
In the following proposition we give an expansion formula for the term $(\widetilde \MM_z^{in,\ve}-\ve {\MM^{out}_z}^{-1})^{-1}$ in the Non-Generic Case.

\begin{Proposition}\label{p:Nz} Let $z\in\CO\backslash\RE$. In the Non-Generic Case, decompose the space $\CO^N$ as $\CO^N = \WHP\CO^N\oplus \WHP^\perp\CO^N$, and denote by $\WHC_0$ the restriction of $\WHC$ to $ \WHP\CO^N$. Then, the map   $\WHP^\perp{\MM^{out}_z}^{-1}\WHP^\perp$ is invertible in $\WHP^\perp\CO^N$.

Set 
 \begin{equation}\label{Nz}
N_z :=  (\WHP^\perp{\MM^{out}_z}^{-1}\WHP^\perp )^{-1}: \WHP^\perp\CO^N \to \WHP^\perp \CO^N,
\end{equation}
then 
\begin{equation}\label{pig}
\begin{aligned}
& (\widetilde \MM_z^{in,\ve}-\ve {\MM^{out}_z}^{-1})^{-1} \\ 
= &-  \begin{pmatrix}
z\WHC_0^{-1}+ \OO_{\BB(\WHP\CO^N)}(\ve )& - z\WHC_0^{-1} \WHP{\MM^{out}_z}^{-1}\WHP^\perp N_z+  \OO_{\BB(\WHP^\perp\CO^N,\WHP\CO^N)}(\ve)\\
 - z N_z\WHP^\perp {\MM^{out}_z}^{-1}\WHP \WHC_0^{-1} + \OO_{\BB(\WHP\CO^N,\WHP^\perp\CO^N)}(\ve) &  \ve^{-1}N_z +\OO_{\BB(\WHP^\perp\CO^N)}(1)
\end{pmatrix} , 
\end{aligned}
\end{equation} 
to be understood in the decomposition  $\CO^N = \WHP\CO^N\oplus \WHP^\perp\CO^N$. 
\end{Proposition}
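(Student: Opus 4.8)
The plan is to invert the matrix $T:=\widetilde\MM_z^{in,\ve}-\ve\,{\MM^{out}_z}^{-1}$ by exploiting its singular block structure in the orthogonal decomposition $\CO^N=\WHP\CO^N\oplus\WHP^\perp\CO^N$. By Def. \ref{d:WHP} the matrix $\WHC$ is block diagonal here, acting as the invertible $\WHC_0$ on $\WHP\CO^N=\Ran\WHC$ and as $0$ on $\WHP^\perp\CO^N=\Ker\WHC$. Multiplying Eq. \eqref{inverse0} by $\ve$ gives $\widetilde\MM_z^{in,\ve}=-z^{-1}\WHC+\OO_{\BB(\CO^N)}(\ve^2)$, so that
\[
T=\begin{pmatrix}
-z^{-1}\WHC_0-\ve\,\WHP{\MM^{out}_z}^{-1}\WHP+\OO(\ve^2) & -\ve\,\WHP{\MM^{out}_z}^{-1}\WHP^\perp+\OO(\ve^2)\\[2pt]
-\ve\,\WHP^\perp{\MM^{out}_z}^{-1}\WHP+\OO(\ve^2) & -\ve\,\WHP^\perp{\MM^{out}_z}^{-1}\WHP^\perp+\OO(\ve^2)
\end{pmatrix}.
\]
The $(1,1)$ block is $\OO(1)$ and invertible for small $\ve$ (leading term $-z^{-1}\WHC_0$), while the other three blocks are $\OO(\ve)$; this is a singular perturbation, and the natural device is the Schur complement relative to the dominant $(1,1)$ block.

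First, though, I must establish the invertibility assertion. Lemma \ref{l:Rout} gives that $\WHP\MM^{out}_z\WHP$ is invertible on $\WHP\CO^N$ (note $z\in\CO\backslash\RE$ lies in $\rho(\WHH^{out})\cap\rho(\mathring H^{out})$ since both operators are self-adjoint). I would then invoke the elementary fact that, for an invertible matrix $\MM$ and an orthogonal projection $P$, the compression $P\MM P$ is invertible on $\Ran P$ if and only if $P^\perp \MM^{-1}P^\perp$ is invertible on $\Ran P^\perp$: both conditions are equivalent to $\MM(\Ran P)\cap\Ran P^\perp=\{0\}$, as one checks by the injectivity characterization $P\MM v=0\Leftrightarrow \MM v\in\Ran P^\perp$ (and the analogous one for $\MM^{-1}$, applied to $P^\perp$). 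Taking $\MM=\MM^{out}_z$ (invertible by Rem. \ref{r:invertibility}) and $P=\WHP$ yields invertibility of $\WHP^\perp{\MM^{out}_z}^{-1}\WHP^\perp$ on $\WHP^\perp\CO^N$, so $N_z$ in Eq. \eqref{Nz} is well defined.

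With this in hand I would compute $T^{-1}$ via block inversion. Writing $T=\left(\begin{smallmatrix}T_{11}&T_{12}\\ T_{21}&T_{22}\end{smallmatrix}\right)$, the pivot $T_{11}=-z^{-1}\WHC_0+\OO(\ve)$ has inverse $-z\WHC_0^{-1}+\OO(\ve)$, and the Schur complement $S=T_{22}-T_{21}T_{11}^{-1}T_{12}$ satisfies $S=-\ve\,\WHP^\perp{\MM^{out}_z}^{-1}\WHP^\perp+\OO(\ve^2)$, because $T_{21}T_{11}^{-1}T_{12}=\OO(\ve^2)$. Hence $S$ is invertible for small $\ve$, with $S^{-1}=-\ve^{-1}N_z+\OO(1)$. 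Substituting into the standard formula, the $(2,2)$ block of $T^{-1}$ is $S^{-1}$, the $(1,1)$ block is $T_{11}^{-1}+T_{11}^{-1}T_{12}S^{-1}T_{21}T_{11}^{-1}$ whose correction is $\OO(1)\cdot\OO(\ve)\cdot\OO(\ve^{-1})\cdot\OO(\ve)\cdot\OO(1)=\OO(\ve)$, and the off-diagonal blocks are $-T_{11}^{-1}T_{12}S^{-1}$ and $-S^{-1}T_{21}T_{11}^{-1}$. Tracking signs and powers of $\ve$ in each product (using $\WHP{\MM^{out}_z}^{-1}\WHP^\perp$ and $\WHP^\perp{\MM^{out}_z}^{-1}\WHP$ for $T_{12},T_{21}$) reproduces exactly the four entries of Eq. \eqref{pig}.

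The main obstacle is precisely the invertibility of $\WHP^\perp{\MM^{out}_z}^{-1}\WHP^\perp$: it cannot be read off directly from Lemma \ref{l:Rout}, which concerns the compression of $\MM^{out}_z$ itself (not of its inverse) to the \emph{complementary} subspace $\WHP\CO^N$, and bridging the two requires the projection/complement duality above. The only remaining delicate point is the uniform-in-$\ve$ control of the composed $\OO(\cdot)$ remainders in this singular regime, which is routine once one observes that every $\ve^{-1}$ produced by $S^{-1}$ is always paired with at least one off-diagonal $\OO(\ve)$ factor.
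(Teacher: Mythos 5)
Your proposal is correct, and it reaches Eq.~\eqref{pig} by a genuinely different route from the paper's in both of its two components. For the invertibility of $\WHP^\perp{\MM^{out}_z}^{-1}\WHP^\perp$, the paper (Rem.~\ref{r:Minv} in App.~\ref{a:krf}) argues intrinsically: using the identity $M_w^{-1}-M_z^{-1}=(z-w)M_w^{-1}\breve G_w G_z M_z^{-1}$ it shows that $\Im\big(\uv,P{\MM_z}^{-1}P\uv\big)_{\CO^N}=-\Im z\,\|G_z\MM_z^{-1}P\uv\|^2_{\HH}\neq0$ for $\uv\neq 0$, by injectivity of $G_z\MM_z^{-1}$; this works for \emph{any} orthogonal projection $P$ and needs no prior information about the complementary compression. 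Your argument instead is pure finite-dimensional linear algebra: the compression duality ($P\MM P$ invertible on $\Ran P$ iff $P^\perp\MM^{-1}P^\perp$ invertible on $\Ran P^\perp$, both being equivalent to $\MM(\Ran P)\cap\Ran P^\perp=\{0\}$), fed with the invertibility of $\WHP\MM^{out}_z\WHP$ already available from Lemma~\ref{l:Rout}/Rem.~\ref{r:invertibility}; this is more elementary, though it leans on ``injective iff bijective'' and so is tied to the finite-dimensional setting. For the expansion itself, the paper proceeds in several steps: it proves invertibility of $z^{-1}\WHC+\ve{\MM^{out}_z}^{-1}$ by a separate imaginary-part argument, establishes the a-priori bound $(\widetilde\MM_z^{in,\ve}-\ve{\MM^{out}_z}^{-1})^{-1}=\OO_{\BB(\CO^N)}(\ve^{-1})$, transfers the problem to $(z^{-1}\WHC+\ve{\MM^{out}_z}^{-1})^{-1}$ via the two complementary identities $(A+B)^{-1}=A^{-1}-A^{-1}B(A+B)^{-1}$ and $(A+B)^{-1}=A^{-1}-(A+B)^{-1}BA^{-1}$ (both are needed, and the two resulting expansions are combined to control all four blocks), and finally block-inverts pivoting on the small $(2,2)$ block, i.e., on $\ve\WHP^\perp{\MM^{out}_z}^{-1}\WHP^\perp$. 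You instead block-invert $\widetilde\MM_z^{in,\ve}-\ve{\MM^{out}_z}^{-1}$ directly, carrying the $\OO(\ve^2)$ remainders inside the blocks and pivoting on the dominant $(1,1)$ block $-z^{-1}\WHC_0+\OO(\ve)$; the Schur complement $S=-\ve\big(\WHP^\perp{\MM^{out}_z}^{-1}\WHP^\perp+\OO(\ve)\big)$ is then invertible for small $\ve$, and invertibility of the full matrix, the a-priori $\OO(\ve^{-1})$ bound, and all four entries of \eqref{pig} (I checked the signs and orders in each product; they come out right) drop out of a single application of the block-inversion formula. Your argument is shorter and avoids the two-sided expansion bookkeeping; what it gives up is the paper's self-contained invertibility statements (valid beyond the small-$\ve$ regime and by techniques that are standard in the Kre\u{\i}n-formula context), which is all the statement of the proposition actually requires.
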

\begin{proof}
We postpone the proof of the fact that the map $\WHP^\perp{\MM^{out}_z}^{-1}\WHP^\perp$  is invertible in $\WHP^\perp\CO^N$ to the appendix, see Rem. \ref{r:Minv}. 

Next we prove that the expansion formula \eqref{pig} holds true. We start by noticing that the map $z^{-1}\WHC+\ve {\MM^{out}_z}^{-1}$ is   invertible. In fact, by Rem. \ref{r:Minv}  and  
 since $(\uv,\WHC \uv)_{\CO^N} = \sum_{k=1}^m |(\uch_k,\uv)_{\CO^N}|^2 \geq0$,  we infer 
\[ 
\Im  \big(\uv,(z^{-1}\WHC+\ve {\MM^{out}_z}^{-1})\uv\big)_{\CO^N} =- \frac{\Im z}{|z|^2} (\uv,\WHC \uv)_{\CO^N}  -\ve\Im z \|G_z^{out} {\MM^{out}_z}^{-1} \uv \|^2_{\HH^{out}} \neq 0 ,
\]
because it is the sum of two  non-positive (or non-negative) terms and $\|G_z^{out} {\MM^{out}_z}^{-1} \uv \|^2_{\HH^{out}}\neq 0$ by the injectivity of $G_z^{out} {\MM^{out}_z}^{-1}$, see Rem. \ref{r:injectivity}. 

Moreover we have the a-priori estimate 
\begin{equation}\label{5.17a}
(\widetilde \MM_z^{in,\ve}-\ve {\MM^{out}_z}^{-1})^{-1} = \OO_{\BB(\CO^N)}(\ve^{-1}).
\end{equation}
The latter follows from (see also Eq. \eqref{berry})
\[\begin{aligned}
\|\uv\|_{\CO^N}\| (\widetilde \MM_z^{in,\ve}-\ve {\MM^{out}_z}^{-1})\uv\|_{\CO^N}  \geq  &
| (\uv, \widetilde \MM^{in,\ve}_z-\ve {\MM^{out}_z}^{-1}\uv)_{\CO^N} | \\ 
\geq  &
|\Im  (\uv, \widetilde \MM^{in,\ve}_z-\ve {\MM^{out}_z}^{-1}\uv)_{\CO^N}| \\ 
 =& \ve |\Im z| (\|G_z^{in,\ve} \uv \|^2_{\HH^{in,\ve}}+ \|G_z^{out} {\MM^{out}_z}^{-1} \uv \|^2_{\HH^{out}})\geq\ve C_z \|\uv\|^2_{\CO^N}, 
\end{aligned}
\]
for some positive constant $C_z$, from the injectivity of $G_z^{out} {\MM^{out}_z}^{-1}$. Hence, setting $\uv = (\widetilde \MM_z^{in,\ve}-\ve {\MM^{out}_z}^{-1})^{-1}\uu$, it follows that  $\|(\widetilde \MM_z^{in,\ve}-\ve {\MM^{out}_z}^{-1})^{-1}\uu\|_{\CO^N}\leq (\ve C_z)^{-1}\|\uu\|_{\CO^N}$. 

Next we use the expansion  (see Eq. \eqref{inverse0})
\begin{equation} \label{expwtMM}
\widetilde \MM_z^{in,\ve}= -  \frac{1}{z }\WHC+  \OO_{\BB(\CO^N)}(\ve^2), 
\end{equation}
which, together with the a-priori estimate \eqref{5.17a},  gives 
\begin{align}
(\widetilde \MM_z^{in,\ve}-\ve {\MM^{out}_z}^{-1})^{-1} = & - (z^{-1}\WHC+\ve {\MM^{out}_z}^{-1})^{-1} + (z^{-1}\WHC+\ve {\MM^{out}_z}^{-1})^{-1}  \OO_{\BB(\CO^N)}(\ve^2) (\widetilde \MM_z^{in,\ve}-\ve {\MM^{out}_z}^{-1})^{-1}\nonumber \\ 
= & - (z^{-1}\WHC+\ve {\MM^{out}_z}^{-1})^{-1} + (z^{-1}\WHC+\ve {\MM^{out}_z}^{-1})^{-1}  \OO_{\BB(\CO^N)}(\ve). \label{beg1} 
\end{align}
Here we used the formula $(A+B)^{-1} = A^{-1}- A^{-1}B (A+B)^{-1}$. Note that by using instead the complementary formula $(A+B)^{-1} = A^{-1}- (A+B)^{-1}B A^{-1}$, we obtain 
\begin{equation}\label{beg2}
(\widetilde \MM_z^{in,\ve}-\ve {\MM^{out}_z}^{-1})^{-1} = - (z^{-1}\WHC+\ve {\MM^{out}_z}^{-1})^{-1} +\OO_{\BB(\CO^N)}(\ve) (z^{-1}\WHC+\ve    {\MM^{out}_z}^{-1})^{-1}. 
\end{equation}

Next we analyze the term $(z^{-1}\WHC+\ve {\MM^{out}_z}^{-1})^{-1}$.

We start by noticing that the map $z^{-1}\WHC_0 +\ve \WHP{\MM^{out}_z}^{-1} \WHP: \WHP \CO^N\to \WHP\CO^N $ is invertible, because  $\WHC_0$ is invertible in $\WHP\CO^N$ and $\ve \WHP{\MM^{out}_z}^{-1} \WHP = \OO_{\CO^N}(\ve)$. 

By the  identification  (to be understood in the decomposition $\CO^N = \WHP\CO^N\oplus\WHP^\perp\CO^N$)
\begin{equation}\label{fugitive}
 {\MM^{out}_z}^{-1} =\begin{pmatrix}
\WHP {\MM^{out}_z}^{-1} \WHP & \WHP {\MM^{out}_z}^{-1} \WHP^\perp \\
\WHP^\perp {\MM^{out}_z}^{-1} \WHP &\WHP^\perp {\MM^{out}_z}^{-1} \WHP^\perp
\end{pmatrix},
\end{equation}
we have the identity 
\[
 z^{-1}\WHC+\ve {\MM^{out}_z}^{-1}
 = 
\begin{pmatrix}
z^{-1}\WHC_0 +\ve \WHP{\MM^{out}_z}^{-1}\WHP &  \ve \WHP{\MM^{out}_z}^{-1}\WHP^\perp \\
\ve \WHP^\perp {\MM^{out}_z}^{-1}\WHP &  \ve \WHP^\perp{\MM^{out}_z}^{-1}\WHP^\perp 
\end{pmatrix}.
\]
Hence, from the block-matrix inversion formula, we obtain 
\[
(z^{-1}\WHC+\ve {\MM^{out}_z}^{-1})^{-1} 
=  \begin{pmatrix}
D^\ve_z & - D_z^\ve \WHP{\MM^{out}_z}^{-1}\WHP^\perp N_z \\
 - N_z\WHP^\perp {\MM^{out}_z}^{-1}\WHP D_z^\ve
 &  \ve^{-1}N_z +N_z\WHP^\perp {\MM^{out}_z}^{-1}\WHP D_z^{\ve} \WHP{\MM^{out}_z}^{-1}\WHP^\perp N_z
\end{pmatrix},
\]
with  $D^\ve_z: \WHP\CO^N \to \WHP \CO^N $ given by 
\[
D^\ve_z := \Big(z^{-1}\WHC_0 +\ve \WHP{\MM^{out}_z}^{-1}\WHP - 
\ve \WHP{\MM^{out}_z}^{-1}\WHP^\perp ( \WHP^\perp{\MM^{out}_z}^{-1}\WHP^\perp )^{-1}\WHP^\perp {\MM^{out}_z}^{-1}\WHP  \Big)^{-1};
\]
note that $D^\ve_z$ is well-defined because it is the inverse of a map of the form $z^{-1}\WHC_0 + \OO_{\BB(\WHP\CO^N)}(\ve)$, and $ z^{-1}\WHC_0$ is invertible in $\WHP\CO^N$. 
Moreover, it holds true, 
\[
D^\ve_z = z\WHC_0^{-1} + \OO_{B(\WHP\CO^N)}(\ve).
\]
Hence, 
\[\begin{aligned}
&(z^{-1}\WHC+\ve {\MM^{out}_z}^{-1})^{-1}  \\ 
= &  \begin{pmatrix}
z\WHC_0^{-1}& - z\WHC_0^{-1} \WHP{\MM^{out}_z}^{-1}\WHP^\perp N_z \\
 - z N_z\WHP^\perp {\MM^{out}_z}^{-1}\WHP \WHC_0^{-1}
 &  \ve^{-1}N_z +z N_z\WHP^\perp {\MM^{out}_z}^{-1}\WHP \WHC_0^{-1} \WHP{\MM^{out}_z}^{-1}\WHP^\perp N_z
\end{pmatrix} + \OO_{\BB(\CO^N)}(\ve). 
\end{aligned}
\]
The latter can also be written as 
\[
(z^{-1}\WHC+\ve {\MM^{out}_z}^{-1})^{-1} = \begin{pmatrix}
z\WHC_0^{-1}& - z\WHC_0^{-1} \WHP{\MM^{out}_z}^{-1}\WHP^\perp N_z \\
 - z N_z\WHP^\perp {\MM^{out}_z}^{-1}\WHP \WHC_0^{-1}
 &  \ve^{-1}N_z +\OO_{\BB(\WHP^\perp\CO^N)}(1)
\end{pmatrix} + \OO_{\BB(\CO^N)}(\ve). 
\]
Using this expansion formula in Eq. \eqref{beg1}  we obtain 
\[\begin{aligned}
&(\widetilde \MM_z^{in,\ve}-\ve {\MM^{out}_z}^{-1})^{-1} \\ 
= &
-  \begin{pmatrix}
z\WHC_0^{-1}& - z\WHC_0^{-1} \WHP{\MM^{out}_z}^{-1}\WHP^\perp N_z \\
 - z N_z\WHP^\perp {\MM^{out}_z}^{-1}\WHP \WHC_0^{-1}
 &  \ve^{-1}N_z +\OO_{\BB(\WHP^\perp\CO^N)}(1)
\end{pmatrix} \\ 
& +  \begin{pmatrix}
z\WHC_0^{-1}& - z\WHC_0^{-1} \WHP{\MM^{out}_z}^{-1}\WHP^\perp N_z \\
 - z N_z\WHP^\perp {\MM^{out}_z}^{-1}\WHP \WHC_0^{-1}
 &  \ve^{-1}N_z +\OO_{\BB(\WHP^\perp\CO^N)}(1)
\end{pmatrix}\OO_{\BB(\CO^N)}(\ve) + \OO_{\BB(\CO^N)}(\ve)   \\ 
=& -  \begin{pmatrix}
z\WHC_0^{-1}+ \OO_{\BB(\WHP\CO^N)}(\ve )& - z\WHC_0^{-1} \WHP{\MM^{out}_z}^{-1}\WHP^\perp N_z+  \OO_{\BB(\WHP^\perp\CO^N,\WHP\CO^N)}(\ve)\\
 \OO_{\BB(\WHP\CO^N,\WHP^\perp\CO^N)}(1)
 &  \ve^{-1}N_z +\OO_{\BB(\WHP^\perp\CO^N)}(1)
\end{pmatrix} .
\end{aligned}
\] 
On the other hand, using  Eq. \eqref{beg2}, we obtain 
\[\begin{aligned}
(\widetilde \MM_z^{in,\ve}-\ve {\MM^{out}_z}^{-1})^{-1} = &
-  \begin{pmatrix}
z\WHC_0^{-1}& - z\WHC_0^{-1} \WHP{\MM^{out}_z}^{-1}\WHP^\perp N_z \\
 - z N_z\WHP^\perp {\MM^{out}_z}^{-1}\WHP \WHC_0^{-1}
 &  \ve^{-1}N_z +\OO_{\BB(\WHP^\perp\CO^N)}(1)
\end{pmatrix} \\ 
& + \OO_{\BB(\CO^N)}(\ve) \begin{pmatrix}
z\WHC_0^{-1}& - z\WHC_0^{-1} \WHP{\MM^{out}_z}^{-1}\WHP^\perp N_z \\
 - z N_z\WHP^\perp {\MM^{out}_z}^{-1}\WHP \WHC_0^{-1}
 &  \ve^{-1}N_z +\OO_{\BB(\WHP^\perp\CO^N)}(1)
\end{pmatrix} + \OO_{\BB(\CO^N)}(\ve)   \\ 
=& -  \begin{pmatrix}
z\WHC_0^{-1}+ \OO_{\BB(\WHP\CO^N)}(\ve )& \OO_{\BB(\WHP^\perp\CO^N,\WHP\CO^N)}(1)\\
 - z N_z\WHP^\perp {\MM^{out}_z}^{-1}\WHP \WHC_0^{-1} + \OO_{\BB(\WHP\CO^N,\WHP^\perp\CO^N)}(\ve)
 &  \ve^{-1}N_z +\OO_{\BB(\WHP^\perp\CO^N)}(1)
\end{pmatrix} .
\end{aligned}
\] 
Hence Expansion \eqref{pig} must hold true 
\end{proof}

Recall that, for  $\Im z\neq0$,  $\WHP M_z^{out}\WHP$ is invertible in $\WHP\CO^N$, see Rem. \ref{r:invertibility}. 
\begin{Proposition}\label{p:inverse} Let $z\in\CO\backslash\RE$. In the Non-Generic Case, 
\begin{equation*}\label{inverse1}
 (\MM_z^{in,\ve}{\MM^{out}_z}- \ID_N)^{-1} \MM_z^{in,\ve} 
 =\WHP (\WHP  \MM^{out}_z \WHP)^{-1}\WHP  + \OO_{\BB(\CO^N) }(\ve) .
\end{equation*}
\end{Proposition}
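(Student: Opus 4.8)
The plan is to reduce the statement to the two expansions already in hand, namely \eqref{pig} for $(\widetilde\MM_z^{in,\ve}-\ve{\MM^{out}_z}^{-1})^{-1}$ and \eqref{expwtMM} for $\widetilde\MM_z^{in,\ve}$. First I would rewrite the quantity of interest in a form that isolates these two factors. Since $\widetilde\MM_z^{in,\ve}=\ve\MM_z^{in,\ve}$, multiplying \eqref{bone} on the right by $\MM_z^{in,\ve}=\ve^{-1}\widetilde\MM_z^{in,\ve}$ gives
\[
(\MM_z^{in,\ve}\MM^{out}_z-\ID_N)^{-1}\MM_z^{in,\ve}
={\MM^{out}_z}^{-1}\,(\widetilde\MM_z^{in,\ve}-\ve{\MM^{out}_z}^{-1})^{-1}\,\widetilde\MM_z^{in,\ve},
\]
so that the two $\ve$-singular factors are replaced by quantities controlled by Prop.~\ref{p:Nz} and by \eqref{expwtMM}.

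Next I would compute the central product $(\widetilde\MM_z^{in,\ve}-\ve{\MM^{out}_z}^{-1})^{-1}\widetilde\MM_z^{in,\ve}$ block by block in the decomposition $\CO^N=\WHP\CO^N\oplus\WHP^\perp\CO^N$. The structural facts that drive everything are $\WHC\WHP^\perp=\WHP^\perp\WHC=0$ and $\WHC_0^{-1}\WHC_0=\WHP$ (the identity on $\WHP\CO^N$). Because of the former, by \eqref{expwtMM} the block form of $\widetilde\MM_z^{in,\ve}$ has only its $(1,1)$ entry equal to $-z^{-1}\WHC_0+\OO(\ve^2)$, all three other blocks being $\OO(\ve^2)$. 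Multiplying this against the matrix in \eqref{pig} and using $\WHC_0^{-1}\WHC_0=\WHP$, the $(1,1)$ entry collapses to $\WHP+\OO(\ve)$, the $(2,1)$ entry to $-N_z\WHP^\perp{\MM^{out}_z}^{-1}\WHP+\OO(\ve)$, and the $(1,2)$ and $(2,2)$ entries to $\OO(\ve^2)$ and $\OO(\ve)$ respectively.

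Finally I would left-multiply by ${\MM^{out}_z}^{-1}$, written in the block form \eqref{fugitive}. The $(1,2)$ and $(2,2)$ blocks of the outcome are then immediately $\OO(\ve)$. In the $(2,1)$ block the identity $\WHP^\perp{\MM^{out}_z}^{-1}\WHP^\perp N_z=\WHP^\perp$ (which is precisely the definition \eqref{Nz} of $N_z$) produces the cancellation, again leaving $\OO(\ve)$. The $(1,1)$ block assembles into the Schur complement
\[
\WHP{\MM^{out}_z}^{-1}\WHP-\WHP{\MM^{out}_z}^{-1}\WHP^\perp\,N_z\,\WHP^\perp{\MM^{out}_z}^{-1}\WHP,
\]
which equals $(\WHP\MM^{out}_z\WHP)^{-1}$ by the standard block-inversion identity: since ${\MM^{out}_z}^{-1}$ is a genuine inverse and $N_z=(\WHP^\perp{\MM^{out}_z}^{-1}\WHP^\perp)^{-1}$, the Schur complement of the $(2,2)$ block of ${\MM^{out}_z}^{-1}$ is the inverse of the $(1,1)$ block of $\MM^{out}_z$, i.e.\ of $\WHP\MM^{out}_z\WHP$ (invertible for $\Im z\neq0$ by Rem.~\ref{r:invertibility}). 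Embedding via $\WHP(\cdot)\WHP$ then yields exactly the claimed expansion.

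The delicate point I expect to watch is the singular entry $\ve^{-1}N_z$ in the $(2,2)$ position of \eqref{pig}: one must be sure it is only ever multiplied by blocks of $\widetilde\MM_z^{in,\ve}$ that are $\OO(\ve^2)$, so that no negative power of $\ve$ survives in the product; this is guaranteed precisely by $\WHC\WHP^\perp=0$. The second point requiring care is the algebraic identification of the Schur complement with $(\WHP\MM^{out}_z\WHP)^{-1}$, for which it is essential to use that ${\MM^{out}_z}^{-1}$ is the actual inverse of $\MM^{out}_z$ rather than merely a bounded operator.
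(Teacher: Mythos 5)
Your proposal is correct and is essentially the paper's own argument: the same reduction via \eqref{bone} to ${\MM^{out}_z}^{-1}(\widetilde\MM_z^{in,\ve}-\ve{\MM^{out}_z}^{-1})^{-1}\widetilde\MM_z^{in,\ve}$, the same block computation in $\CO^N=\WHP\CO^N\oplus\WHP^\perp\CO^N$ combining \eqref{pig} with \eqref{expwtMM} (including the cancellation of the $\ve^{-1}N_z$ entry against the $\OO(\ve^2)$ blocks of $\widetilde\MM_z^{in,\ve}$), and the same identification of the Schur complement $\WHP{\MM^{out}_z}^{-1}\WHP-\WHP{\MM^{out}_z}^{-1}\WHP^\perp N_z\WHP^\perp{\MM^{out}_z}^{-1}\WHP$ with $(\WHP\MM^{out}_z\WHP)^{-1}$ via the block-inversion formula applied to \eqref{fugitive}. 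The two delicate points you flag are precisely the ones the paper's proof relies on, so nothing is missing.
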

\begin{proof}
Taking into account   Expansion \eqref{expwtMM}, rewritten in the decomposition  $\CO^N = \WHP\CO^N\oplus \WHP^\perp\CO^N$, one has 
\[
\widetilde \MM_z^{in,\ve}= -  \frac{1}{ z }\WHC+  \OO_{\BB(\CO^N)}(\ve^2) = - \begin{pmatrix}
   z^{-1} \WHC_0 & 0 \\
   0 & 0  
\end{pmatrix}+\OO_{\BB(\CO^N) }(\ve^2).  
\]
So that, by Eq. \eqref{pig},  
\[(\widetilde \MM_z^{in,\ve}-\ve {\MM^{out}_z}^{-1})^{-1}\widetilde \MM_z^{in,\ve} =  \begin{pmatrix} \ID_{\WHP\CO^N} 
& 0 \\
 - N_z\WHP^\perp {\MM^{out}_z}^{-1}\WHP   & 0 
\end{pmatrix}+ \OO_{\BB(\CO^N) }(\ve). 
\]
By the latter expansion and by the  identification \eqref{fugitive} it follows that (recall Eq. \eqref{bone} and the definition of $N_z$ in Eq. \eqref{Nz})
\begin{align}
& (\MM_z^{in,\ve}{\MM^{out}_z}- \ID_N)^{-1} \MM_z^{in,\ve}  \nonumber \\ 
= & {\MM^{out}_z}^{-1} (\widetilde \MM_z^{in,\ve}-\ve {\MM^{out}_z}^{-1})^{-1}\widetilde \MM_z^{in,\ve} \nonumber \\ 
=&\begin{pmatrix}
\WHP {\MM^{out}_z}^{-1} \WHP & \WHP {\MM^{out}_z}^{-1} \WHP^\perp \nonumber \\
\WHP^\perp {\MM^{out}_z}^{-1} \WHP &\WHP^\perp {\MM^{out}_z}^{-1} \WHP^\perp
\end{pmatrix}
  \begin{pmatrix}\ID_{\WHP\CO^N} 
& 0 \\
 - N_z\WHP^\perp {\MM^{out}_z}^{-1}\WHP   & 0 
\end{pmatrix}+ \OO_{\BB(\CO^N) }(\ve) \nonumber \\ 
= & \begin{pmatrix}
\WHP {\MM^{out}_z}^{-1} \WHP -  \WHP {\MM^{out}_z}^{-1} \WHP^\perp  N_z\WHP^\perp {\MM^{out}_z}^{-1}\WHP  &0\\
0 &0 
\end{pmatrix}
  + \OO_{\BB(\CO^N) }(\ve). 
  \label{x1}
\end{align}
To conclude, we apply  the block-matrix inversion  formula to Eq. \eqref{fugitive} to obtain
\[
 \MM^{out}_z =\begin{pmatrix} \widetilde D_z 
& -\widetilde D_z  \WHP {\MM^{out}_z}^{-1} \WHP^\perp N_z\\ 
- N_z \WHP^\perp {\MM^{out}_z}^{-1} \WHP \widetilde D_z   & N_z + N_z \WHP^\perp  {\MM^{out}_z}^{-1} \WHP \widetilde D_z \WHP  {\MM^{out}_z}^{-1} \WHP^\perp N_z
\end{pmatrix},\]
with 
\[\widetilde D_z = (\WHP  {\MM^{out}_z}^{-1} \WHP - \WHP  {\MM^{out}_z}^{-1} \WHP^\perp N_z \WHP^\perp  {\MM^{out}_z}^{-1}\WHP)^{-1} . \]
Hence it must be  
\[
\WHP  \MM^{out}_z \WHP  = \widetilde D_z =  (\WHP  {\MM^{out}_z}^{-1} \WHP - \WHP  {\MM^{out}_z}^{-1} \WHP^\perp N_z \WHP^\perp  {\MM^{out}_z}^{-1}\WHP)^{-1}, \]
so that  
\[
(\WHP  \MM^{out}_z \WHP)^{-1} =\WHP  {\MM^{out}_z}^{-1} \WHP - \WHP  {\MM^{out}_z}^{-1} \WHP^\perp N_z \WHP^\perp  {\MM^{out}_z}^{-1}\WHP. \]
Which, together with Eq. \eqref{x1}, allows us to infer  the expansion 
\[
 (\MM_z^{in,\ve}{\MM^{out}_z}- \ID_N)^{-1} \MM_z^{in,\ve} 
 = \begin{pmatrix}
(\WHP  \MM^{out}_z \WHP)^{-1}  &0\\
0 &0 
\end{pmatrix}
  + \OO_{\BB(\CO^N) }(\ve)   = \WHP(\WHP  \MM^{out}_z \WHP)^{-1}  \WHP+ \OO_{\BB(\CO^N) }(\ve)\]
and conclude the proof of the proposition. 
\end{proof}

We are now ready to state and prove the main theorem for the Non-Generic Case. In the statement of the theorem, we assume that  $\Ker\WHC\subset \CO^N$, i.e., $\WHP\neq 0$. In this way the quantity $(\uch_k, \WHC_0^{-1} \uch_{k'} )_{\CO^N}$ is certainly well defined. We discuss the case  $\Ker\WHC=  \CO^N$ (i.e., $\WHP=0$) separately  in the proof of point (ii) of Th. \ref{t:mainNG} (after the proof of Th. \ref{t:mainNG2}). 

\begin{Theorem}\label{t:mainNG2}Let $z\in\CO\backslash\RE$. In the Non-Generic Case assume that  $\Ker\WHC\subset \CO^N$, then 
\begin{equation*}\label{expgen1}
R^\ve_z = 
\begin{pmatrix}
\WHR^{out}_z  + \OO_{\BB(\HH^{out})}(\ve) & \OO_{\BB(\HH^{in,\ve},\HH^{out)}}(\ve^{1/2}) \\
\OO_{\BB(\HH^{out},\HH^{in,\ve})}(\ve^{1/2})& - z^{-1}\sum_{k,k' = 1}^m \left(\delta_{k,k'} -(\uch_k, \WHC_0^{-1} \uch_{k'} )_{\CO^N} \right) \hat \varphi_k^\ve(\hat \varphi_{k'}^\ve, \cdot)_{\HH^{in,\ve}} + \OO_{\BB(\HH^{in,\ve}) }(\ve).  
\end{pmatrix},
\end{equation*}
where the expansion has to be understood in the  $out/in$ decomposition \eqref{outindec}. 
\end{Theorem}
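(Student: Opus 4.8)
The plan is to substitute the block decomposition of Remark \ref{r:RR}, Eq.s \eqref{RR0}--\eqref{RR4}, and estimate each of the four blocks $\RR_z^{u,v,\ve}$ separately, reading off the claimed limits from the expansions already established. Throughout I would abbreviate $T := \ve(\widetilde\MM_z^{in,\ve} - \ve{\MM^{out}_z}^{-1})^{-1}$, so that by Eq. \eqref{bone} one has $(\MM_z^{in,\ve}\MM^{out}_z - \ID_N)^{-1} = {\MM^{out}_z}^{-1}T$. The expansion \eqref{pig} then describes $T$ blockwise in $\CO^N = \WHP\CO^N\oplus\WHP^\perp\CO^N$: its only $\OO(\ve^{-1})$-large entry is the $\WHP^\perp\to\WHP^\perp$ block, so that $T$ itself is $\OO_{\BB(\CO^N)}(1)$ while $\WHP T$, $T\WHP$ and in particular the $\WHP\to\WHP$ block $\WHP T\WHP = -\ve z\WHC_0^{-1} + \OO(\ve^2)$ are all $\OO(\ve)$. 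The single structural fact that drives everything is that the leading coefficient vectors $\uch_k$ of $\breve G^{in,\ve}_z$ and $G^{in,\ve}_z$ lie in $\Ran\WHC = \WHP\CO^N$, by Remark \ref{r:uch}.

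For the $out/out$ block I would combine Proposition \ref{p:inverse}, which gives $(\MM_z^{in,\ve}\MM^{out}_z - \ID_N)^{-1}\MM_z^{in,\ve} = \WHP(\WHP\MM^{out}_z\WHP)^{-1}\WHP + \OO_{\BB(\CO^N)}(\ve)$, with the boundedness and $\ve$-independence of $G^{out}_z$ and $\breve G^{out}_z$, to obtain $\RR_z^{out,out,\ve} = G^{out}_z\WHP(\WHP\MM^{out}_z\WHP)^{-1}\WHP\breve G^{out}_z + \OO_{\BB(\HH^{out})}(\ve)$. Subtracting this from $\mathring R^{out}_z$ and recognising the Kre{\u\i}n formula of Lemma \ref{l:Rout} yields exactly $\WHR^{out}_z + \OO_{\BB(\HH^{out})}(\ve)$.

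For the off-diagonal blocks I would write $\RR_z^{out,in,\ve} = G^{out}_z{\MM^{out}_z}^{-1}T\breve G^{in,\ve}_z$ and insert the expansion \eqref{buried3}, $\breve G^{in,\ve}_z = \ve^{-1/2}B_0 + \OO(\ve^{3/2})$ with $B_0 = -z^{-1}\sum_k\uch_k(\hat\varphi_k^\ve,\cdot)_{\HH^{in,\ve}}$ taking values in $\WHP\CO^N$. Since $TB_0 = T\WHP B_0 = \OO(\ve)$ while $T = \OO(1)$ acts on the $\OO(\ve^{3/2})$ remainder, one gets $T\breve G^{in,\ve}_z = \ve^{-1/2}\OO(\ve) + \OO(\ve^{3/2}) = \OO(\ve^{1/2})$, and boundedness of $G^{out}_z{\MM^{out}_z}^{-1}$ gives $\RR_z^{out,in,\ve} = \OO(\ve^{1/2})$; the $in/out$ block is then $\OO(\ve^{1/2})$ by the adjoint relation of Remark \ref{r:RR}.

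The heart of the argument, and the main obstacle, is the $in/in$ block $\RR_z^{in,in,\ve} = G^{in,\ve}_z T\breve G^{in,\ve}_z$, where two factors of size $\OO(\ve^{-1/2})$ (from \eqref{buried4} and \eqref{buried3}) must be balanced against the degenerate $T$. Writing $G^{in,\ve}_z = \ve^{-1/2}A_0 + \OO(\ve^{3/2})$ with $A_0 = -z^{-1}\sum_k\hat\varphi_k^\ve(\uch_k,\cdot)_{\CO^N}$, the key point is that $A_0 = A_0\WHP$ and $B_0 = \WHP B_0$ because the $\uch_k$ lie in $\WHP\CO^N$; hence only the block $\WHP T\WHP = -\ve z\WHC_0^{-1} + \OO(\ve^2)$ enters the leading term, and the prefactor $\ve^{-1}$ exactly cancels its $\ve$, yielding $\RR_z^{in,in,\ve} = -zA_0\WHC_0^{-1}B_0 + \OO(\ve)$. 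A direct evaluation gives $-zA_0\WHC_0^{-1}B_0 = -z^{-1}\sum_{k,k'}(\uch_k,\WHC_0^{-1}\uch_{k'})_{\CO^N}\,\hat\varphi_k^\ve(\hat\varphi_{k'}^\ve,\cdot)_{\HH^{in,\ve}}$, which together with the expansion \eqref{buried3.0} of $\mathring R^{in,\ve}_z$ produces precisely the stated $in/in$ block with the factor $\delta_{k,k'} - (\uch_k,\WHC_0^{-1}\uch_{k'})_{\CO^N}$. The delicate bookkeeping here is to confirm that all three cross terms are genuinely $\OO(\ve^2)$: each of them meets $T$ only through one of the small blocks $\WHP T$ or $T\WHP$ (of size $\OO(\ve)$), so that after the $\ve^{-1/2}$ prefactor they remain $\OO(\ve^2)$, which is what keeps the remainder in this block at order $\ve$ rather than $\ve^{1/2}$.
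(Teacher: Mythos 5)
Your proposal is correct and follows essentially the same route as the paper's proof: the block decomposition of Remark \ref{r:RR}, Proposition \ref{p:inverse} together with Lemma \ref{l:Rout} for the $out/out$ block, the expansion \eqref{pig} of the inverse matrix, and the crucial fact from Remark \ref{r:uch} that the vectors $\uch_k$ lie in $\WHP\CO^N$, so that only the $\WHP$-blocks of the inverse meet the leading terms of $G^{in,\ve}_z$ and $\breve G^{in,\ve}_z$. Your operator-level bookkeeping with $T$, $A_0$, $B_0$ and the identities $A_0=A_0\WHP$, $B_0=\WHP B_0$ is just a clean repackaging of the paper's vector-level argument (its $\uv^\ve$, $\uu^\ve$ splitting), and all your block estimates, including the $\OO(\ve^2)$ cross terms in the $in/in$ block, agree with the paper's.
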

\begin{proof}
We analyze term by term the r.h.s. in Eq. \eqref{RR0}. 

Term $out/out$. By Prop. \ref{p:inverse} and Lemma \ref{l:Rout}, it immediately follows that 
\[
\mathring R^{out}_z - \RR_z^{out,out,\ve} = \WHR^{out}_z  + \OO_{\BB(\HH^{out})}(\ve).  \]

Term $out/in$. By Eq. \eqref{bone} and by the definition of $ \RR_z^{out,in,\ve}$, recalling that   $G^{out}_z$ and ${\MM^{out}_z}^{-1}$ are bounded, it is enough to prove that 
\begin{equation} 
  \ve(\widetilde \MM_z^{in,\ve}-\ve {\MM^{out}_z}^{-1})^{-1} \breve G^{in,\ve}_z= \OO_{\BB({\HH^{in,\ve}},\CO^N)} (\ve^{1/2}).\label{OD_NG1} \end{equation}
Taking into account the fact that for all $\psi\in\HH^{in,\ve}$,  $\|\sum_{k=1}^m \uch_k (\hat \varphi_k^\ve, \psi )_{\HH^{in,\ve}}\|_{\CO^N} \leq C \|\psi\|_{\HH^{in,\ve}}$, and the fact that $\sum_{k=1}^m \uch_k (\hat \varphi_k^\ve, \psi )_{\HH^{in,\ve}} \in \WHP\CO^N$ (it is a linear combination of vectors in $\WHP\CO^N$, see Rem \ref{r:uch}) we infer that (see Eq. \eqref{buried3}), 
\[
\breve G_z^{in,\ve}\psi = \uv^\ve + \uu^\ve \qquad \uv^\ve : = -\sum_{k=1}^m \frac{\uch_k (\hat \varphi_k^\ve, \psi)_{\HH^{in,\ve}}}{\ve^{1/2} z}
\]
with $\uv^\ve\in \WHP \CO^N$,  $\|\uv^\ve\|_{\CO^N}\leq C \ve^{-1/2} \|\psi\|_{\HH^{in,\ve}}$, and $\|\uu^\ve\|_{\CO^N}\leq C \ve^{3/2} \|\psi\|_{\HH^{in,\ve}}$. 

Hence,  by the expansion \eqref{pig}, we infer 
\begin{equation}\label{notte}
\begin{aligned}
& \ve (\widetilde \MM_z^{in,\ve}-\ve {\MM^{out}_z}^{-1})^{-1} \breve G_z^{in,\ve}\psi \\ 
= & 
-\ve\big(z\WHC_0^{-1}- z N_z\WHP^\perp {\MM^{out}_z}^{-1}\WHP \WHC_0^{-1} + \OO_{\BB(\CO^N)}(\ve ) \big)\uv^\ve
+\ve (\widetilde \MM_z^{in,\ve}-\ve {\MM^{out}_z}^{-1})^{-1} \uu^\ve.
\end{aligned}
\end{equation}
Here the leading term is 
\[
\ve\big(z\WHC_0^{-1}- z N_z\WHP^\perp {\MM^{out}_z}^{-1}\WHP \WHC_0^{-1}\big)\uv^\ve, 
\]
and for it we have the bound 
\[
\|\ve\big(z\WHC_0^{-1}- z N_z\WHP^\perp {\MM^{out}_z}^{-1}\WHP \WHC_0^{-1}\big)\uv^\ve\|_{\CO^N}\leq
 C \ve^{1/2} \|\psi\|_{\HH^{in,\ve}}. \]
The remainder is bounded by 
\[
\|\OO_{\BB(\CO^N)}(\ve^2 )\uv^\ve
+\ve (\widetilde \MM_z^{in,\ve}-\ve {\MM^{out}_z}^{-1})^{-1} \uu^\ve
\|_{\CO^N}\leq C\ve^2 \|\uv^\ve\|_{\CO^N} + C\|\uu^\ve\|_{\CO^N} \leq 
 C \ve^{3/2} \|\psi\|_{\HH^{in,\ve}};\]
in the latter bound we  used $(\widetilde \MM_z^{in,\ve}-\ve {\MM^{out}_z}^{-1})^{-1} = \OO_{\BB(\CO^N)}(\ve^{-1})$, see Eq. \eqref{pig} (see also Eq. \eqref{5.17a}). Hence,
\[
\|\ve (\widetilde \MM_z^{in,\ve}-\ve {\MM^{out}_z}^{-1})^{-1} \breve G_z^{in,\ve}\psi \|_{\CO^N} \leq C \ve^{1/2} \|\psi\|_{\HH^{in,\ve}},
\]
and the bound \eqref{OD_NG1} holds true. 

The bound on the term $in/out$ follows immediately by noticing that $ \RR_z^{in,out,\ve} = \RR_{\bar z}^{out,in,\ve*}$.

Term $in/in$. By Eq. \eqref{bone}, we have that 
\[
\RR_z^{in,in,\ve}     = \ve G^{in,\ve}_z    (\widetilde \MM_z^{in,\ve}-\ve {\MM^{out}_z}^{-1})^{-1}\breve G^{in,\ve}_z. 
\]
Taking into account Eq. \eqref{notte} and the expansion \eqref{buried4}, we infer that, for all $\psi\in \HH^{in,\ve}$ the leading term in $\RR_z^{in,in,\ve} \psi$ is given by 
\[\begin{aligned}
\sum_{k=1}^m\frac{ \hat \varphi_k^\ve(\uch_k, \cdot)_{\CO^N}}{\ve^{1/2} z} \big(\ve\big(z\WHC_0^{-1}- z N_z\WHP^\perp {\MM^{out}_z}^{-1}\WHP \WHC_0^{-1}\big)\uv^\ve\big) = & \ve^{1/2}\sum_{k=1}^m \hat \varphi_k^\ve(\uch_k, \WHC_0^{-1}\uv^\ve)_{\CO^N} \\ 
= & - \frac1z \sum_{k,k'=1}^m \hat \varphi_k^\ve(\uch_k, \WHC_0^{-1} \uch_{k'} )_{\CO^N}(\hat \varphi_{k'}^\ve, \psi)_{\HH^{in,\ve}}. 
\end{aligned}\]
the remainder being of order $\ve$. From the latter formula and from the expansion \eqref{buried3.0} we infer 
\[
\mathring R^{in,\ve}_z-\RR_z^{in,in,\ve}
=  - z^{-1}\sum_{k,k' = 1}^m \left(\delta_{k,k'} -(\uch_k, \WHC_0^{-1} \uch_{k'} )_{\CO^N} \right) \hat \varphi_k^\ve(\hat \varphi_{k'}^\ve, \cdot)_{\HH^{in,\ve}} + \OO_{\BB(\HH^{in,\ve}) }(\ve). 
\]
\end{proof}
Th. \ref{t:mainNG} - (i) follows immediately from Th. \ref{t:mainNG2}.
\begin{proof}[{\bf Proof of Th. \ref{t:mainNG} - (ii)}] If $\Ker \WHC = \CO^N$ then $\uch_k=0$, for all $k = 1,\dots,m$, see Rem. \ref{r:uch}. Hence, expansions \eqref{buried3}, \eqref{buried4}, and \eqref{inverse0} read respectively
\[
\breve G^{in,\ve}_z   =  \OO_{\BB(\HH^{in,\ve},\CO^N)} (\ve^{3/2}); 
\qquad
G^{in,\ve}_z =  \OO_{\BB(\CO^N, \HH^{in,\ve})}(\ve^{3/2});\qquad 
\MM_z^{in,\ve} = \OO_{\BB(\CO^N)}(\ve).
\]
Reasoning along the lines of the analysis of the Generic Case, see the proof of Th. \ref{t:mainG2}, and taking into account the expansion \eqref{buried3.0}, one readily infers 
\begin{equation*}
R^\ve_z = 
\begin{pmatrix}
\mathring R^{out}_z  + \OO_{\BB(\HH^{out})}(\ve) & \OO_{\BB(\HH^{in,\ve},\HH^{out)}}(\ve^{3/2}) \\
\OO_{\BB(\HH^{out},\HH^{in,\ve})}(\ve^{3/2})& -\sum_{k=1}^m   \frac{\hat \varphi_k^\ve(\hat \varphi_k^\ve, \cdot)_{\HH^{in,\ve}} }{z}  + \OO_{\BB(\HH^{in,\ve})}(\ve^2),
\end{pmatrix}
\end{equation*}
which implies the statement in Th. \ref{t:mainNG} - (ii). 
\end{proof}
\begin{proof}[{\bf Proof of Th. \ref{t:mainNG} - (iii)}]
 To prove the second part of Th. \ref{t:mainNG},  recall that $\uch_{k'} \in \WHP\CO^N$ and $\WHC_0^{-1}\uch_{k'} \in \WHP\CO^N$, hence $\WHC \WHC_0^{-1} \uch_{k'} =\WHC_0 \WHC_0^{-1}\uch_{k'} = \uch_{k'}$. By the definition of $\WHC$ this is equivalent to 
 \[
 \sum_{k=1}^m (\delta_{k,k'} - (\uch_k, \WHC_0^{-1}\uch_{k'})) \uch_k= 0.  
 \]
 If the vectors $\{\uch_k\}_{k=1}^m$ are linearly independent this linear combination is zero if and only if $\delta_{k,k'} - (\uch_k, \WHC_0^{-1}\uch_{k'}) = 0$ for all $k$. Hence,  expansion \eqref{nona1} follows  from Eq. \eqref{nona0}. 
\end{proof}

\begin{Remark} Denote by $\Lambda$ the operator in $\HH^{in,\ve}$ defined by 
\[
D(\Lambda) := \HH^{in,\ve},\;\qquad   \Lambda := \sum_{k,k' = 1}^m \left(\delta_{k,k'} -(\uch_k, \WHC_0^{-1} \uch_{k'} )_{\CO^N} \right) \hat \varphi_k^\ve(\hat \varphi_{k'}^\ve, \cdot)_{\HH^{in,\ve}}. 
\]
$\Lambda$ is selfadjoint and $\Lambda^2= \Lambda$. The first claim is obvious (recall that $\WHC_0$ is selfadjoint). To prove the  second claim, note that, since $(\hat \varphi_{l'}^\ve, \varphi_{k}^\ve)_{\HH^{in,\ve}} = \delta_{l',k}$,   
\[
\Lambda^2 = \sum_{l,k,k' = 1}^m \left(\delta_{l,k} -(\uch_l, \WHC_0^{-1} \uch_{k} )_{\CO^N} \right)\left(\delta_{k,k'} -(\uch_k, \WHC_0^{-1} \uch_{k'} )_{\CO^N} \right) \hat \varphi_{l}^\ve(\hat \varphi_{k'}^\ve, \cdot)_{\HH^{in,\ve}},
\]
but 
\[\begin{aligned}
& \sum_{k = 1}^m \left(\delta_{l,k} -(\uch_l, \WHC_0^{-1} \uch_{k} )_{\CO^N} \right)\left(\delta_{k,k'} -(\uch_k, \WHC_0^{-1} \uch_{k'} )_{\CO^N} \right)   \\ 
=  & 
\delta_{l,k'} - 2 (\uch_l, \WHC_0^{-1} \uch_{k'} )_{\CO^N}  + \sum_{k = 1}^m (\uch_l, \WHC_0^{-1} \uch_{k} )_{\CO^N} (\uch_k, \WHC_0^{-1} \uch_{k'} )_{\CO^N}    \\ 
= & \delta_{l,k'} - 2 (\uch_l, \WHC_0^{-1} \uch_{k'} )_{\CO^N}  + (\uch_l, \WHC_0^{-1}  \WHC\WHC_0^{-1} \uch_{k'} )_{\CO^N}  
=  \delta_{l,k'} -  (\uch_l, \WHC_0^{-1} \uch_{k'} )_{\CO^N}  , 
\end{aligned}\]
where we used the fact that $\WHC_0^{-1}  \WHC\WHC_0^{-1} = \WHC_0^{-1}  \WHC_0 \WHC_0^{-1} = \WHC_0^{-1}$. Hence, 
\[
\Lambda^2 = \sum_{l,k' = 1}^m\left( \delta_{l,k'} -  (\uch_l, \WHC_0^{-1} \uch_{k'} )_{\CO^N} \right) \hat \varphi_{l}^\ve(\hat \varphi_{k'}^\ve, \cdot)_{\HH^{in,\ve}} = \Lambda. 
\]
Hence, $\Lambda$ is an orthogonal projection in $\HH^{in,\ve}$. 
\end{Remark}

\appendix

\section{\label{a:krf}Proof of the Kre{\u\i}n resolvent formulae}

We use several known results from  the theory of self-adjoint extensions of symmetric operators. 

We follow, for the most,  the approach and the notation from the papers by A.  Posilicano \cite{posilicano_jfa01} and \cite{pos-om08}. Other approaches would be possible, such as the one based on the use of boundary triples, see, e.g.,  \cite{alb-pan-jpa05, bruining-geyler-pankrashkin:08, gorbachuk-gorbachuk, schmudgen}.

When no misunderstanding is possible, in this appendix we omit the suffixes $out$, $in$, and $\ve$. 

\subsection{Proofs of Lemmata \ref{l:Rout} and \ref{l:Rve}.\label{s:A1}} 
We denote by $\mathring \tau$ the restriction of the maps $\tau$ to the domain $D(\mathring H)$, by Eq.s. \eqref{tau-out} and \eqref{tau-in} we infer 
\[
\mathring \tau : D(\mathring H^{\ve}) \to \CO^{2N},\qquad \mathring \tau = \diag(\mathring \tau^{out}, \mathring \tau^{in});
\]
\[
\mathring \tau^{out} : D(\mathring H^{out}) \to \CO^{N},\qquad  \mathring \tau^{out}\psi :=\Psi'(\zero); 
\]
\[
\mathring \tau^{in} : D(\mathring H^{in,\ve}) \to \CO^{N},\qquad 
\mathring \tau^{in}\psi :=( \psi(v_1), ..., \psi(v_N))^T; 
\]
where in $\mathring \tau^{in}$  we used the definition of $\tau^{in}$ and the fact that functions in $D(\mathring H^{in,\ve})$ are continuous in the connecting vertices. 

\begin{Remark}\label{r:injectivity}
The map $\mathring \tau$ is surjective. Hence, the map $\breve G_z^\ve = \tau \mathring R_z^\ve = \mathring \tau \mathring R_z^\ve$ is also surjective as a map from $\HH^\ve \to \CO^{2N}$ (the operator $ \mathring R_z^\ve:\HH^\ve\to D(\mathring H^\ve)$ is obviously surjective). We conclude that $G_z^\ve = \breve G_{\bar z}^{\ve*}$ is an injective map from $\CO^{2N}\to \HH^\ve$ (it is the adjoint of a surjective map). A similar statement holds true also for the corresponding ``$out$'' and ``$in$'' operators. 
\end{Remark}

\begin{Remark}\label{r:A1}
We claim that for all $z \in \rho(\mathring H^\ve)$ and $\uv\in \CO^{2N}$ one has $G_z^\ve \uv \in \HH_2^\ve$ and 
\begin{equation}\label{defect}
(-\Delta +B^\ve -z )G_z^\ve\uv = 0,
\end{equation}  
and similar properties hold true for the ``$out$'' and ``$in$'' operators (here $\Delta$ denotes the maximal Laplacian in $\HH^\ve$, i.e., $D(\Delta) := \HH_2^\ve$, $\Delta \psi = \psi''$). \\
To prove that $G_z^\ve \uv \in \HH_2^\ve$ and that Eq. \eqref{defect} holds true we start by discussing the case $B^\ve = 0$. In such a case it is possible to obtain an explicit formula for the integral kernel of $\mathring R_{z,0}^\ve = \mathring R_{z,B^\ve =0}^\ve$, see, e.g., \cite[Lemma 4.2]{kostrykin-schrader-cm06}. By this  explicit formula  it is easily seen that the operator $G^\ve_{z,0} = G^\ve_{z,B^\ve =0}$ maps any vector $\uv \in \CO^{2N}$ in a function in $\HH_2^\ve$ and that $(-\Delta -z )G_{z,0}^\ve\uv = 0$. It is not needed to investigate the detailed properties of the boundary conditions in the vertices of $\GG^\ve$, it is enough to take into account the dependence on $x,y\in\GG^\ve$ of the integral kernel $\mathring R_{z,0}^\ve (x,y)$ (see also \cite[Examples 5.1 and 5.2]{pos-om08}). That the same is true for $B^\ve\neq 0$ follows immediately from the resolvent identity 
\[
\mathring R_z^\ve = \mathring R_{z,0}^\ve  - \mathring R_{z,0}^\ve B^\ve \mathring R_z^\ve,    
\]
which gives $\breve G_z^\ve = \breve G_{z,0}^\ve  - \breve G_{z,0}^\ve B^\ve \mathring R_z^\ve$ and $ G_z^\ve =G_{z,0}^\ve  - \mathring R_z^\ve   B^\ve G_{z,0}^\ve$. 
\end{Remark}

In consideration of  the remark  above, 	we infer that the maps ($N\times N$, $z$-dependent  matrices)  $M_z$ in Eq.s \eqref{Mout}, \eqref{Min0} and \eqref{M} are all well defined.  Moreover, by the resolvent identities 
\[
R_z - R_w = (z-w)  R_zR_w \qquad \textrm{and} \qquad R_z = R_{\bar z}^* 
\]
it follows that
\[
\breve G_z - \breve G_w = (z-w)  \breve G_z R_w,
\]
\[
G_z - G_w = (z-w)   R_w G_z,
\]
\begin{equation}\label{MM}
M_z - M_w = (z-w)   \breve G_w G_z \qquad \textrm{and} \qquad M_z = M_{\bar z}^*.
\end{equation}
Let us denote by $\Kcal$ the space $\CO^{2N}$ or $\CO^N$ depending on if we are reasoning with operators in $\HH^{\ve}$, $\HH^{out}$ or $\HH^{in,\ve}$. 
By Eq. \eqref{MM}, it follows that for any projection $P$ in $\Kcal$ and any self-adjoint operator $\Theta$ in $\Ran P$,  the map $M_z^{P,\Theta} := PM_zP - \Theta$ is invertible in $\Ran P$. To see that this is indeed the case, note that by Eq. \eqref{MM} one has 
\[
M_z^{P,\Theta} - M_w^{P,\Theta} = (z-w) P  \breve G_w G_zP \qquad \textrm{and} \qquad M_z^{P,\Theta} = M_{\bar z}^{P,\Theta*}.
\]
So that, for $\Im z\neq 0$ and for  all $\uv\in\Kcal$, such that $P\uv\neq 0$, it holds 
\begin{equation}\label{berry}
\Im (\uv,M_z^{P,\Theta}\uv)_{\Kcal } =\frac{1}{2i} \big(\uv, (M_z^{P,\Theta} - M_{\bar z}^{P,\Theta})\uv\big)_{\Kcal}  = \Im z \|G_z P \uv\|_{\HH}^2 \neq 0; 
\end{equation}
because $G_z$ is injective. Hence, $M_z^{P,\Theta}$ is invertible in $\Ran P$ for $\Im z\neq 0$.

\begin{Remark}\label{r:invertibility}
By the discussion above, it follows that the maps $M^{out}_z :\CO^N\to \CO^N$, $\WHP \MM^{out}_z\WHP: \WHP\CO^{N} \to \WHP\CO^{N}$, and $(\MM^\ve_z- \OD) : \CO^{2N} \to \CO^{2N}$ are  invertible for all $\Im z\neq0$. 
\end{Remark}

By \cite[Th. 2.1]{pos-om08} (see also \cite[Th. 2.1]{posilicano_jfa01})  it follows that:  for any $z\in \CO\backslash\RE$ the operators $\WHR^{out}_z$ and $R_z^\ve$ are the resolvents of a self-adjoint extension of the symmetric operators $\mathring H^{out}\restriction_{\Ker \mathring\tau^{out}}$ and  $\mathring H^{\ve}\restriction_{\Ker \mathring\tau}$ respectively. 

We are left to prove that such self-adjoint extensions coincide with $\WHH^{out}$ and $H^\ve$ respectively. 

Let us focus attention on $R_z^\ve$ (similar considerations hold true for $\WHR_z^{out}$).  Since the self-adjoint operator associated to $R_z^\ve$ is an extension of $\mathring H^{\ve}\restriction_{\Ker \mathring\tau}$, to prove that $R_z^\ve$ is  the resolvent of $H^\ve$, we  just need to check that  in the connecting vertices  functions in $\Ran R_z^\ve$ satisfy the boundary conditions required by $D(H^\ve)$. The remaining boundary conditions are clearly satisfied because the map $\mathring \tau$ evaluates functions only in the connecting vertices. 

Define the maps:
\begin{equation*}\label{sigma-out}
\sigma^{out} : \HH_2^{out} \to \CO^N \qquad \sigma^{out}\psi := \Psi(\zero);
\end{equation*}
\begin{equation*}\label{sigma-in}
\begin{aligned}
&\sigma^{in} : \HH_2^{in,\ve} \to \CO^N \\ 
&\sigma^{in}\psi :=- \left( {\sqrt{d^{in}(v_1)}} (\II_{d^{in}(v_1)},\Psi'(v_1))_{\CO^{d^{in}(v_1)}}, ..., {\sqrt{d^{in}(v_N)}}(\II_{d^{in}(v_N)},\Psi'(v_N))_{\CO^{d^{in}(v_N)}}\right)^T;
\end{aligned}
\end{equation*}
and 
\begin{equation*}\label{sigma}
\sigma:  \HH_2^\ve = \HH_2^{out}\oplus \HH_2^{in,\ve}  \to \CO^{2N} \qquad \sigma  := \diag(\sigma^{out},\sigma^{in}). 
\end{equation*}

We recall the following formula which is obtained by integrating by parts 
\begin{equation}\label{bt-iden}
\begin{aligned}
\big(  (-\Delta +B^\ve  - \bar z) \phi, \psi \big)_{\HH^\ve} -  \big(   \phi, (-\Delta +B^\ve  - z) \psi\big)_{\HH^\ve}  = 
\sum_{v\in \VV} \Big[ (\Phi'(v) , \Psi(v) )_{\CO^{d(v)}} - (\Phi(v), \Psi'(v) )_{\CO^{d(v)}}  \Big] \\ 
\qquad \qquad \qquad \qquad \qquad \qquad \qquad \qquad  \forall \phi,\psi \in\HH_2^\ve. 
\end{aligned}
\end{equation}

Fix  $\chi \in\HH^\ve$ and let $\uv = \big(\MM^\ve_z-\OD\big)^{-1}\breve{G}^\ve_z \chi \in\CO^{2N} $ and $\psi = G^\ve_z \uv$. 

For all $\phi\in D(\mathring H^\ve)$  and $\psi$ as above,  the identity \eqref{bt-iden} gives
\begin{equation}\label{triple}
\big( \tau \phi,\uv\big)_{\CO^{2N}}
= 
\sum_{v\in \CC} \Big[ ( {\KK^{in}_v}^\perp {\Phi^{in}}'(v) ,  {\KK^{in}_v}^\perp \Psi^{in}(v))_{\CO^{d(v)}} - ( \KK^{in}_v {\Phi^{in}}(v) ,   \KK^{in}_v {\Psi^{in}}'(v)  )_{\CO^{d(v)}}  \Big]
+ \sum_{j=1}^N   {\overline{\phi^{out}_j}}'(0)   \psi_j^{out}(0).
\end{equation}

In what follows we use the decomposition $\CO^{2N} = \CO^N \oplus \CO^N$, so that $\uv = (\uv^{out},\uv^{in})$ and  $ \tau \phi = ( \tau^{out} \phi^{out}, \tau^{in} \phi^{in})$. 

Let $\phi = (\phi^{out},0)\in D(\mathring H^\ve)$. Then Identity \eqref{triple} gives  
\begin{equation}\label{tripleout}
\big( \tau^{out} \phi^{out},\uv^{out}\big)_{\CO^{N}}
= \sum_{j=1}^N   {\overline{\phi^{out}_j}}'(0)   \psi_j^{out}(0).
\end{equation}
Take $\phi^{out}\in D(\mathring H^{out})$, such that ${\phi_1^{out}}'(0)=1$ and $\phi^{out}_j = 0$ for all $j = 2,\dots,N$. Then $(\tau^{out} \phi^{out})_j= \delta_{1,j}$, $j=1,\dots,N$ and Eq. \eqref{tripleout} gives $\psi_1^{out}(0) = q_1$. In a similar way it is possible to show that $\psi_j^{out}(0) = q_j$ for all $j=2,\dots,N$. Hence, $\sigma^{out}\psi^{out} = \uv^{out}$. 

Next let $\phi = (0, \phi^{in})$. Then Identity \eqref{triple} gives 
\begin{equation}\label{triplein}
\big( \tau^{in} \phi^{in},\uv^{in}\big)_{\CO^{N}}
= 
\sum_{v\in \CC} \Big[ ( {\KK^{in}_v}^\perp {\Phi^{in}}'(v) ,  {\KK^{in}_v}^\perp \Psi^{in}(v))_{\CO^{d(v)}} - ( \KK^{in}_v {\Phi^{in}}(v) ,   \KK^{in}_v {\Psi^{in}}'(v)  )_{\CO^{d(v)}}  \Big].
\end{equation}
Take $\phi^{in}$ such that   $\phi^{in}(v_1) =1$, ${\Phi^{in}}'(v_1) =0 $ and ${\Phi^{in}}'(v_j) = \Phi^{in}(v_j) = 0 $ for all $j = 2,\dots, N$. Hence, $(\tau^{in}\phi^{in})_j = \delta_{1,j}$, $j=1,\dots,N$, and  $ \KK^{in}_{v_1} {\Phi^{in}}(v_1) = (d^{in}(v_1))^{1/2} \II_{d^{in}(v_1)}$. Hence, Eq. \eqref{triplein} gives 
\begin{equation*}
\begin{aligned}
q^{in}_1= 
 - \big( (d^{in}(v_1))^{1/2} \II_{d^{in}(v_1)}  ,   \KK^{in}_{v_1} {\Psi^{in}}'(v_1)  \big)_{\CO^{d(v_1)}}  =  
 - \big( (d^{in}(v_1))^{1/2} \II_{d^{in}(v_1)}  ,  {\Psi^{in}}'(v_1)  \big)_{\CO^{d(v_1)}} = (\sigma^{in} \psi^{in})_1.
\end{aligned}
\end{equation*}
In a similar way one can prove $q^{in}_j= (\sigma^{in} \psi^{in})_j$, $j=2,\dots,N$, hence,  $\sigma^{in} \psi^{in} = \uv^{in}$. 

We also note that the function $\psi$ is continuous in the connecting vertices (whenever the vertex degree is larger or equal than two).  To see that this is indeed the case, consider in Eq. \eqref{triplein} a function $\phi^{in}$ such that $\phi^{in}(v_j) = 0$, $j=1, \dots, N$, ${\Phi^{in}}'(v_1) = (1,-1,0,\dots,0)^T := \underline e$, ${\Phi^{in}}'(v_j)=0$, $j=2,\dots,N$. Since $\KK_{v_1}^{in\perp} \underline e = \underline e$, condition \eqref{triplein} gives $(\underline e, \Psi^{in}(v_1)) = 0$. Repeating the process, moving $-1$ in the vector $\underline e$ on all the positions (from the second one on) one obtains the continuity of $\psi$ in the vertex $v_1$. The same holds true for every connecting vertex. 

We have proved that for any   $\chi \in\HH^\ve$, setting  $\uv = \big(\MM^\ve_z-\OD\big)^{-1}\breve{G}^\ve_z \chi \in\CO^{2N}$,  one has: 
\begin{equation}\label{sigmaG}
\sigma^{out} G_z^{out} \uv^{out} = \uv^{out}\,; \qquad \sigma^{in} G_z^{in,\ve} \uv^{in} = \uv^{in}\,;\qquad  \sigma G_z^{\ve} \uv = \uv. 
\end{equation}

Let $\chi\in\HH^\ve$ and set $\psi =  R_z^\ve \chi $. One has that 
\[
\tau  \psi = \tau \big( \mathring{R}^\ve_z - G^\ve_z \big(\MM^\ve_z-\OD\big)^{-1}\breve{G}^\ve_z \big) \chi =\big(\ID - M_z^\ve (\MM^\ve_z-\OD)^{-1}\big) \breve{G}^\ve_z  \chi = - \OD  (\MM^\ve_z-\OD)^{-1}\breve{G}^\ve_z \chi. 
\]
On the other hand, noticing that $\sigma  \mathring{R}^\ve_z  \chi = 0$, by the definition of $D(\mathring H^\ve) $ (see Eq.s \eqref{DHout}, \eqref{DHinve}, and \eqref{DHve}), and by Eq. \eqref{sigmaG}  it follows that 
\[
\sigma \psi = - (\MM^\ve_z-\OD)^{-1}\breve{G}^\ve_z\chi. 
\]
We conclude that $\psi$ satisfies the condition $ \tau \psi = \OD \sigma \psi$. Taking into account the fact that $\psi^{in}$ is continuous in the connecting vertices, it is easy convince oneself that the condition $\tau \psi = \OD \sigma \psi$ is equivalent to 
\[
\Psi^{out'}(\zero) = - \left( {\sqrt{d^{in}(v_1)}} (\II_{d^{in}(v_1)},\Psi^{in'}(v_1))_{\CO^{d^{in}(v_1)}}, ..., {\sqrt{d^{in}(v_N)}}(\II_{d^{in}(v_N)},\Psi^{in'}(v_N))_{\CO^{d^{in}(v_N)}}\right)^T,
\]
and 
\[
\psi^{in}(v_j) = \psi_j(0);
\]
which, in turns,  is equivalent to the Kirchhoff boundary conditions in $D(H^\ve)$.

The fact that the resolvent formula holds true for all $z \in \rho(H^\ve) \cap \rho(\mathring H^\ve)$, follows from \cite[Th. 2.19]{CFPrma18}.  

To prove the resolvent formula for $\WHR_z^{out}$, let $\chi\in\HH^{out}$ and set $\psi =  \WHR_z^{out} \chi$. By the first formula in \eqref{sigmaG}, one has 
\[
  \Psi(\zero) = - \WHP  (\WHP\MM^{out}\WHP)^{-1}\WHP\breve{G}^{out}_z \chi, 
\]
hence, $\WHP^\perp \Psi(\zero)=0$. Moreover, 
\[
\WHP \Psi'(\zero) =  \WHP \tau^{out} \psi  = \big(\ID - \WHP M_z^{out}\WHP(\WHP\MM^{out}\WHP)^{-1} \big) \WHP \breve{G}^{out}_z \chi = 0 . 
\]
Hence, the boundary conditions in $D(\WHH^{out})$ are satisfied, see Def. \ref{d:effect}. 

\subsection{Proof of Lemma \ref{l:Rve2}.\label{s:A2}}
Recall that we are denoting  by $\Kcal$ the space $\CO^{2N}$ or $\CO^N$ depending on if we are reasoning with operators in $\HH^{\ve}$, $\HH^{out}$ or $\HH^{in,\ve}$. 

\begin{Remark}\label{r:Minv}
By Identities  \eqref{MM} we infer 
\[
M_w^{-1} - M_{z}^{-1} = (z-w)    M_w^{-1} \breve G_{w} G_z  M_z^{-1} .
\]
Hence,  for  $\Im z\neq 0$, and for    any projection $P$ in $\Kcal$, and  $\uv\in P\Kcal$ 
\begin{equation}\label{astronomy}
\Im  (\uv,P\MM_z^{-1}P\uv)_{\Kcal} = \frac{1}{2i}\big(\uv,P(\MM_z^{-1} - \MM_{\bar z}^{-1})P\uv\big)_{\Kcal} 
= -\Im z \|G_z \MM_z^{-1}P\uv \|^2_{\HH}\neq  0 
\end{equation}
because $G_z \MM_z^{-1}$ is an injective map, being the composition of injective maps.  

Hence, the map $PM_z^{-1}P$ is invertible in $P \Kcal$. 
\end{Remark}

To prove that the map $ \MM^{in,\ve}_z\MM^{out}_z- \ID_N $ is invertible (the proof of the second statement in Eq. \eqref{MMMM} is analogous) note that it is enough to show that $ \MM^{in,\ve}_z- {\MM^{out}_z}^{-1}$ is invertible (because $M_z^{out}$ is). Let  $\uv\in\CO^N$, by Eq.s  \eqref{berry} and  \eqref{astronomy}
\[
\Im(\uv, \MM^{in,\ve}_z- {\MM^{out}_z}^{-1}\uv)_{\CO^N} = \Im z\big( \|G_z^{in,\ve}\uv\|_{\CO^N}^2 + \|G_z^{out} {M_z^{out}}^{-1}\uv\|^2_{\CO^N}\big) \neq 0.
\]

Formula \eqref{Rve_main}, comes from the block matrix inversion formula 
\[
\begin{pmatrix}
 \MM^{out}_z&-  \ID_N \\  \\ 
 -  \ID_N  & \MM^{in,\ve}_z
 \end{pmatrix}^{-1}
=
\begin{pmatrix}{\MM^{out}_z}^{-1} 	+{\MM^{out}_z}^{-1}
 \big( \MM^{in,\ve}_z-{\MM^{out}_z}^{-1}\big)^{-1} {\MM^{out}_z}^{-1} &{\MM^{out}_z}^{-1}
 \big( \MM^{in,\ve}_z-{\MM^{out}_z}^{-1}\big)^{-1}  \\  \\ 
\big( \MM^{in,\ve}_z-{\MM^{out}_z}^{-1}\big)^{-1}  {\MM^{out}_z}^{-1} &\big( \MM^{in,\ve}_z-{\MM^{out}_z}^{-1}\big)^{-1}   \end{pmatrix}
\]
together with the identities
\[
\begin{aligned}
{\MM^{out}_z}^{-1}
 \big( \MM^{in,\ve}_z-{\MM^{out}_z}^{-1}\big)^{-1} & = \big(\MM^{in,\ve}_z \MM^{out}_z- \ID_N\big)^{-1} \\ 
 \big( \MM^{in,\ve}_z-{\MM^{out}_z}^{-1}\big)^{-1}  {\MM^{out}_z}^{-1} &= \big(\MM^{out}_z\MM^{in,\ve}_z - \ID_N\big)^{-1},
 \end{aligned}
 \]
and 
 \[
 {\MM^{out}_z}^{-1} 	+{\MM^{out}_z}^{-1}
 \big( \MM^{in,\ve}_z-{\MM^{out}_z}^{-1}\big)^{-1} {\MM^{out}_z}^{-1}  = 
  \big( \MM^{in,\ve}_z\MM^{out}_z- \ID_N\big)^{-1} \MM^{in,\ve}_z. 
 \]

\section{Estimates on eigenvalues and eigenfunctions of $\mathring H^{in}$\label{s:appB}}
In this appendix we prove the following proposition on the asymptotic behavior of eigenvalues and eigenfunctions of $\mathring H^{in}$. 
\begin{Proposition}
\label{p:Hdinoe} Recall that we denoted by $\{\la_n\}_{n\in\NA}$ the eigenvalues of the Hamiltonian $\mathring H^{in}$, and by $\{\varphi_n\}_{n\in\NA}$  a corresponding set of orthonormal eigenfunctions. There exists $n_0$ such that for any $n\geq n_0$:
\begin{equation}
\label{e:ae-o}
\lambda_n>n^2 C
\end{equation}
and 
\begin{equation}
\label{e:phinVj}
\sup_{x\in\GG^{in}}|\varphi_n(x)|\leq C
\end{equation}
for some positive  constant $C$ which does not depend on $n$.
\end{Proposition}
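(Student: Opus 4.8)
The plan is to work with the quadratic form of $\mathring H^{in}$ and establish the two estimates separately. I would first record the form
\[
Q(\psi) = \sum_{e\in\EE^{in}}\|\psi_e'\|_{L^2(e)}^2 + (\psi, B^{in}\psi)_{\HH^{in}} - \sum_{v\in\VV^{in}}(\Theta^{in}_v P^{in}_v\Psi(v), P^{in}_v\Psi(v))_{\CO^{d(v)}},
\]
defined on the $H^1$-functions on $\GG^{in}$ that are continuous at the connecting vertices and satisfy ${P^{in}_v}^\perp\Psi(v)=0$. The preliminary step is to dominate the lower-order terms by the kinetic term. Since $B^{in}$ is bounded and, by the one-dimensional trace inequality applied on each of the finitely many fixed edges, for every $\delta>0$ there is $C_\delta$ with $|\Psi(v)|^2 \leq \delta\sum_{e\sim v}\|\psi_e'\|^2 + C_\delta\sum_{e\sim v}\|\psi_e\|^2$, I would obtain, after fixing $\delta$ small relative to $\max_v\|\Theta^{in}_v\|$, the two-sided form bound
\[
(1-\delta)\sum_{e}\|\psi_e'\|^2 - C\|\psi\|^2 \leq Q(\psi) \leq C\Big(\sum_{e}\|\psi_e'\|^2 + \|\psi\|^2\Big).
\]
In particular $\mathring H^{in}$ is bounded below, so (its spectrum being discrete) the min-max principle applies and only finitely many $\lambda_n$ are nonpositive.

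For the eigenvalue bound \eqref{e:ae-o}, I would compare $\mathring H^{in}$ with the decoupled Neumann Laplacian. Denoting by $\mu_n$ the $n$-th eigenvalue of the Neumann Laplacian on the disjoint union $\bigsqcup_{e\in\EE^{in}}[0,\ell_e]$, whose form domain $\bigoplus_e H^1(e)$ contains the form domain of $\mathring H^{in}$, the lower form bound above together with min-max (minimizing over a larger family of subspaces decreases the value) gives
\[
\lambda_n(\mathring H^{in}) \geq (1-\delta)\,\mu_n - C.
\]
The Neumann counting function of the disjoint intervals is $\sum_{e}(\lfloor \ell_e\sqrt\lambda/\pi\rfloor+1) \leq \tfrac{L}{\pi}\sqrt\lambda + |\EE^{in}|$ with $L=\sum_e\ell_e$, hence $\mu_n \geq \big(\pi(n-|\EE^{in}|)/L\big)^2$, which yields $\lambda_n \geq C n^2$ for $n$ large.

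For \eqref{e:phinVj} I would proceed in two steps. First, testing the eigenvalue equation gives $Q(\varphi_n)=\lambda_n$, so the lower form bound yields the gradient estimate $\|\varphi_n'\|_{L^2(\GG^{in})}^2 \leq C\lambda_n$. The naive bound $H^1\hookrightarrow L^\infty$ is useless here, since $\|\varphi_n'\|\sim\sqrt{\lambda_n}\to\infty$, and this is the real obstacle; the crux is to exploit that the eigenfunctions oscillate. For $n$ large enough that $\lambda_n\geq 1$, set $k=\sqrt{\lambda_n}$ and, on each edge, $E(x):=|\varphi_n(x)|^2 + |\varphi_n'(x)|^2/k^2$. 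Using $\varphi_n''=(B^{in}-k^2)\varphi_n$ one computes $E'(x)=2k^{-2}B^{in}\Re(\overline{\varphi_n'}\varphi_n)$, whence $|E'|\leq (\|B^{in}\|_\infty/k)\,E$; Grönwall then gives $\sup_e E \leq C\inf_e E$ with a constant uniform over the finitely many edges, and
\[
\sup_{x\in e}|\varphi_n(x)|^2 \leq \sup_e E \leq \frac{C}{\ell_e}\int_e E = \frac{C}{\ell_e}\Big(\|\varphi_n\|_{L^2(e)}^2 + \frac{\|\varphi_n'\|_{L^2(e)}^2}{\lambda_n}\Big) \leq C,
\]
since $\|\varphi_n\|_{L^2(e)}\leq 1$ and $\|\varphi_n'\|_{L^2(e)}^2\leq C\lambda_n$. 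Taking the maximum over $e\in\EE^{in}$ gives \eqref{e:phinVj}.
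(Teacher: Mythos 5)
Your proof is correct, and it takes a genuinely different route from the paper's on both halves of the statement. For the eigenvalue bound \eqref{e:ae-o}, the paper simply cites the Weyl law of Bolte--Endres for $B^{in}=0$ and disposes of the bounded potential ``by a perturbative argument'', whereas you give a self-contained lower bound by Neumann bracketing: a two-sided form estimate (absorbing the potential and the Robin vertex terms via the one-dimensional trace inequality) followed by min--max comparison with the decoupled Neumann Laplacian on the disjoint intervals $[0,\ell_e]$, whose counting function you compute explicitly; this has the advantage of handling $B^{in}$ and $\Theta^{in}_v$ directly. For the sup bound \eqref{e:phinVj}, the paper follows Currie--Watson: each component of $\varphi_n$ solves a Volterra integral equation, Gronwall yields the pointwise asymptotics $f(x)=f_0\cos(\sqrt\la x)+\frac{f_0'}{\sqrt\la}\sin(\sqrt\la x)+\OO\big(|f_0|/\sqrt\la+|f_0'|/\la\big)$, and the $L^2$ normalization, through an explicit computation of the trigonometric integrals, bounds $|f_0|$ and $|f_0'|/\sqrt\la$. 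You instead run Gronwall on the energy $E=|\varphi_n|^2+|\varphi_n'|^2/\la_n$, obtaining $\sup_e E\leq C\inf_e E\leq \frac{C}{\ell_e}\int_e E$, and close with the gradient estimate $\|\varphi_n'\|_{L^2(\GG^{in})}^2\leq C\la_n$ supplied by the quadratic form. Both arguments hinge on Gronwall, but yours avoids the explicit representation formula and the trigonometric $L^2$ computation, at the price of needing the form-based gradient bound as an extra input; the paper's argument, in exchange, produces the leading cosine/sine profile of the eigenfunctions, which is strictly more information than the $L^\infty$ bound. One cosmetic remark: with the paper's conventions ($\Psi'(v)$ outgoing and $P^{in}_v\Psi'(v)=\Theta^{in}_vP^{in}_v\Psi(v)$), integration by parts gives the vertex contribution to the form with a plus sign, namely $Q(\psi)=\sum_{e}\|\psi_e'\|^2+(\psi,B^{in}\psi)_{\HH^{in}}+\sum_{v}(\Theta^{in}_vP^{in}_v\Psi(v),P^{in}_v\Psi(v))_{\CO^{d(v)}}$, not a minus sign as you wrote; this is immaterial for your argument, since you only ever use the absolute value of that term.
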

\begin{proof}
Claim \eqref{e:ae-o} is just the Weyl law. For $B^{in}=0$ a proof can be found in  \cite[Prop. 4.2]{bolte-endres-ahp09} (see also \cite{odzak-sceta-bmmss17}). For $B^{in}\neq 0$ bounded, claim \eqref{e:ae-o} can be deduced by a perturbative argument. 

To prove the bound  \eqref{e:phinVj} we follow the lines in the proof of
 Theorem A.1 in \cite{cur-wat-ip07}. For $b\in L^{\infty}(0,\ell)$ and
 real valued,  and $\la>0$ let $f
$ be the solution of  the equation
\begin{equation}
\label{e:eq}
 -f''+bf=\la f,
\end{equation}
with initial conditions $f(0)=f_0$ and $f'(0)=f_0'$. Then $f(x)$ can be
 written as  
\begin{equation}
\label{e:boys}
 f(x)=
\int_0^x\frac{\sin(\sqrt\la(x-y))}{\sqrt\la}b(y)f(y)dy
+f_0\cos(\sqrt\la x)+\frac{f_0'}{\sqrt\la}\sin(\sqrt\la x),
\end{equation}
from which it immediately follows that 
\[
|f(x)|\leq M + 
 \int_0^x\frac{1}{\sqrt\la}|b(y)||f(y)|dy,
\]
with 
\[
 M=|f_0|+\frac{|f_0'|}{\sqrt\la}.
\]
 Then from Gronwall's lemma, see, e.g. \cite[pg. 103]{Hormander97},  one has 
 \begin{equation}
 \label{e:boys2}
|f(x)|\leq M \exp\bigg(
 \int_0^x\frac{|b(y)|}{\sqrt\la}dy \bigg)\leq M \exp\bigg(
 \int_0^\ell |b(y)|dy \bigg),
\end{equation}
where we assumed  $\la>1$. By equation \eqref{e:boys} and by the estimate \eqref{e:boys2} it follows that 
\[
\bigg|f(x)-f_0\cos(\sqrt\la x)-\frac{f_0'}{\sqrt\la}\sin(\sqrt\la x)\bigg|
\leq
 M \exp\bigg(
 \int_0^\ell |b(y)|dy \bigg)
\int_0^x\frac{|b(y)|}{\sqrt\la}dy
\leq
C 
\bigg(\frac{|f_0|}{\sqrt\la}+\frac{|f_0'|}{\la}\bigg)
\]
where $C$ is a positive constant which does not depend on $\la$, $f_0$ and $f_0'$. We have then proved that 
\begin{equation}
\label{e:nir}
f(x)=f_0\cos(\sqrt\la x)+\frac{f_0'}{\sqrt\la}\sin(\sqrt\la x)+\OO_{L^\infty((0,\ell))}\bigg(\frac{|f_0|}{\sqrt\la}+\frac{|f_0'|}{\la}\bigg).
\end{equation}
Any component of the eigenfunction  $\varphi_n$ satisfies in the
 corresponding edge an equation of the form \eqref{e:eq}  with some
 initial data in $x=0$. Then the discussion on the function $f(x)$ above
 applies to all the components of the vector $\varphi_n $.  By
 the normalization condition $\|\varphi_n\|_{\HH^{in}}=1$ it
 follows that it must be $\|f\|_{L^2((0,l))}=C$, with $C\leq
 1$ (here $f$ denotes a generic component of $\varphi_n$, i.e., the restriction of $\varphi_n$ to a generic edge of $\GG^{in}$). Hence, from the identity 
\[\begin{aligned}
& \int_0^\ell \left|f_0\cos(\sqrt\la x)+\frac{f_0'}{\sqrt\la}\sin(\sqrt\la x) \right|^2 dx  \\ 
= &  
\frac{\ell}2\left(|f_0|^2+\frac{|f_0'|^2}{\la}\right) + 
\frac{\cos(2\sqrt\la \ell) - 1}{4\sqrt\la}\left(|f_0|^2-\frac{|f_0'|^2}{\la}\right) 
+ \frac{\Re(\bar f_0 f_0')}{\la} \sin^2(\sqrt\la \ell)
\end{aligned}
\]
one infers  
\[
C^2=\|f\|_{L^2((0,l))}^2=
\frac{\ell}2\left(|f_0|^2+\frac{|f_0'|^2}{\la}\right) +\OO\bigg(\frac{|f_0|^2}{\sqrt\la},
\frac{|f_0'|^2}{\la^{3/2}},
\frac{|f_0||f_0'|}{\la}\bigg).
\]
The latter estimate  implies that there exists $\tilde\la$  such that,  for all  $\la >\tilde \la$, the inequalities  $|f_0|\leq C_1$ and $|f_0'|/\sqrt\la\leq C_1$ hold true  for
 some positive constant $C_1$ which does  depend on $\la$. The 
 bounds  $|f_0|\leq C_1$ and $|f_0'|/\sqrt\la\leq C_1$, together with estimate \eqref{e:nir} and the fact that $\la_n\to+\infty$ for $n\to\infty$, imply \eqref{e:phinVj}.
\end{proof}


\end{document}